\providecommand{\U}[1]{\protect\rule{.1in}{.1in}}
\newtheorem{theorem}{Theorem}
\newtheorem{condition}[theorem]{Condition}
\newtheorem{corollary}[theorem]{Corollary}
\newtheorem{definition}[theorem]{Definition}
\newtheorem{example}[theorem]{Example}
\newtheorem{lemma}[theorem]{Lemma}
\newtheorem{proposition}[theorem]{Proposition}
\newenvironment{proof}[1][Proof]{\textbf{#1.} }{\ \rule{0.5em}{0.5em}}
\newdimen\dummy
\def\tighttoc{\def\l@section{\@dottedtocline{1}{0em}{1.4em}}}
\def\ps@headings{\let\@mkboth\markboth
\def\@oddfoot{}\def\@evenfoot{}\def\@evenhead{\rm \thepage\hfil \sl
\leftmark}\def\@oddhead{\hbox{}\sl \rightmark \hfil
\rm\thepage}\def\chaptermark##1{\markboth {\uppercase{\ifnum \c@secnumdepth
>\m@ne
\@chapapp\ \thechapter. \ \fi ##1}}{}}\def\sectionmark##1{\markright
{\uppercase{\ifnum \c@secnumdepth >\z@
\thesection. \ \fi ##1}}}}
\def\ps@headings{\let\@mkboth\markboth
\def\@oddfoot{}\def\@evenfoot{}\def\@oddhead{\hbox {}\sl \rightmark \hfil
\rm\thepage}\def\chaptermark##1{\markright {\uppercase{\ifnum \c@secnumdepth
>\m@ne
\@chapapp\ \thechapter. \ \fi ##1}}}}
\def\ps@myheadings{\let\@mkboth\@gobbletwo
\def\@oddhead{\hbox{}\sl\rightmark \hfil
\rm\thepage}\def\@oddfoot{}\def\@evenhead{\rm \thepage\hfil\sl\leftmark\hbox
{}}\def\@evenfoot{}\def\chaptermark##1{}\def\sectionmark##1{}\def\subsectionmark##1{}}
\begin{document}

\title{Lagrangian Framework for Systems Composed of High-Loss and Lossless Components}
\author{Alexander Figotin\\University of California at Irvine
\and Aaron Welters\\Massachusetts Institute of Technology}
\maketitle

\begin{abstract}
Using a Lagrangian mechanics approach, we construct a framework to study the
dissipative properties of systems composed of two components one of which is
highly lossy and the other is lossless. We have shown in our previous work
that for such a composite system the modes split into two distinct classes,
high-loss and low-loss, according to their dissipative behavior. A principal
result of this paper is that for any such dissipative Lagrangian system, with
losses accounted by a Rayleigh dissipative function, a rather universal
phenomenon occurs, namely, \textit{selective overdamping}: The high-loss modes
are all overdamped, i.e., non-oscillatory, as are an equal number of low-loss
modes, but the rest of the low-loss modes remain oscillatory each with an
extremely high quality factor that actually increases as the loss of the lossy
component increases. We prove this result using a new time dynamical
characterization of overdamping in terms of a virial theorem for dissipative
systems and the breaking of an equipartition of energy.

\end{abstract}

\section{Introduction}

In this paper we introduce a general Lagrangian framework to study the
dissipative properties of two component systems composed of a high-loss and
lossless components which can have gyroscopic properties. This framework
covers any linear Lagrangian system provided it has a finite number of degrees
of freedom, a nonnegative Hamiltonian, and losses accounted by the Rayleigh
dissipative function, \cite[Sec. 10.11, 10.12]{Pars}, \cite[Sec. 8, 9,
46]{Gantmacher}. Such physical systems include, in particular, many different
types of damped mechanical systems or electric networks.

\paragraph{Motivation.}

We are\ looking to design and study two-component composite dielectric media
consisting of a high-loss and low-loss component in which the lossy component
has a useful property (or functionality) such as magnetism. We want to
understand what is the trade-off between the losses and useful properties
inherited by the composite from its components. Our motivation comes from a
major problem in the design of such structures where a component which carries
a useful property, e.g., magnetism, has prohibitively strong losses in the
frequency range of interest. Often this precludes the use of such a lossy
component with otherwise excellent physically desirable properties. Then the
question stands: Is it possible to design a composite system having a useful
property at a level comparable to that of its lossy component but with
significantly reduced losses over a broad frequency range?

An important and guiding example of a two-component dielectric medium composed
of a high-loss and lossless components was constructed in \cite{FigVit8}. The
example was a simple layered structure which had magnetic properties
comparable with a natural bulk material but with 100 times lesser losses in a
wide frequency range. This example demonstrated that it is possible to design
a composite material/system which can have a desired property comparable with
a naturally occurring bulk substance but with significantly reduced losses.

In order to understand the general mechanism for this phenomenon,\ the authors
in \cite{FigWel1} considered a general dynamical system as the model for such
a medium where the high-loss component of the medium was represented as a
significant fraction of the entire system. We have found that for such a
system the losses of the entire structure become small provided that the lossy
component is sufficiently lossy. The general mechanism of this phenomenon,
which we proved in \cite{FigWel1}, is the \emph{modal dichotomy} when the
entire set of modes of the system splits into two distinct classes,
\emph{high-loss} and \emph{low-loss modes}, based on their dissipative
properties. The higher loss modes for a wide range of frequencies contribute
very little to losses and this is how the entire structure can be low loss,
whereas the useful property is still present. In that work the way we
accounted for the presence of a useful property was by simply demanding that
the lossy component was a significant fraction of the entire composite
structure without explicitly correlating the useful property and the losses.

In this paper, we continue our studies on the modal dichotomy and overdamping,
which began in \cite{FigWel1}, but now we focus are attention on
gyroscopic-dissipative systems. Although our results on modal dichotomy apply
to the full generality of the dynamical systems considered here, our results
on overdamping in this paper will be restricted to systems without gyroscopy.
Consideration of overdamping in systems with gyroscopy will be considered in a
future work.

\paragraph{Overview of results.}

Now in order to account in a general form for the physical properties of the
two-component composite system with a high-loss and lossless components we
introduce a Lagrangian framework with dissipation. This framework can be a
basis for studying the interplay of physical properties of interest and
dissipation. These general physical properties include in particular gyroscopy
(or gyrotropy), which is intimately related to magnetic properties, different
symmetries and other phenomena such as overdamping that are not present in
typical dynamical systems.

In the Lagrangian setting, the dissipation is taken into account by means of
the Rayleigh dissipation function \cite[Sec. 10.11, 10.12]{Pars}, \cite[Sec.
8, 9, 46]{Gantmacher}. In the case of two-component composite with lossy and
lossless components, the Rayleigh function is assumed to affect only a
fraction $0<\delta_{R}<1$ of the total degrees of freedom of the system. In a
rough sense, the loss fraction $\delta_{R}$ signifies the fraction of the
degrees of freedom that are effected by losses.

We now give a concise qualitative description of the main results of this
paper which are discussed next and organized under the following three topics:
Overdamping and selective overdamping in Section \ref{sodan}; Virial theorem
for dissipative systems and equipartition of energy in Section \ref{svir};
Standard spectral theory vs. Krein theory in Sections \ref{smodel} and
\ref{sspan}.

\textbf{Overdamping and selective overdamping}. Overdamping is a regime in
dissipative Lagrangian systems in which some of the systems eigenmodes are
overdamped, i.e., have exactly zero frequency or, in other words, are
non-oscillatory when losses are sufficiently large. The phenomenon of
overdamping has been studied thoroughly but only in the case when the entire
system is overdamped \cite{Duff55}, \cite[\S 7.6 \& Chap. 9]{Lan66},
\cite{BarLan92} meaning all its modes are non-oscillatory. The Lagrangian
framework and the methods we develop in this paper are applicable to the more
general case allowing for only a fraction of the modes to be susceptible to
overdamping, a phenomenon we call \emph{selective overdamping}. Our analysis
of overdamping and this selective overdamping phenomenon is carried out in
Section \ref{sodan}. Our main results are Theorems \ref{tcovd}, \ref{tpovd},
\ref{cpodr}, \ref{thmod}, and \ref{tseod}.

Our interest in selective overdamping is threefold. First, as we prove in this
paper the selective overdamping of the system is a rather \emph{universal
phenomenon} that can occur for a two-component system when the losses of the
lossy component are sufficiently large. Moreover, we show that \emph{the large
losses in the lossy component} cause not only the modal dichotomy, but also
\emph{results in overdamping of all of the high-loss modes while a positive
fraction of the low-loss modes remain oscillatory}. Second, since mode
excitation is efficient only when its frequency matches the frequency of the
excitation force, if these high-loss modes go into the overdamping regime
having exactly zero frequency these modes cannot be excited efficiently and
therefore their associated losses are essentially eliminated. This explains
the absorption suppression for systems composed of lossy and lossless
components. Third, as we will show, as the losses in the lossy component
increase the overdamped high-loss modes are more suppressed while all the
low-loss oscillatory modes are more enhanced with increasingly high quality
factor. \emph{This provides a mechanism for selective enhancement of these
high quality factor, low-loss oscillatory modes and selective suppression of
the high-loss non-oscillatory modes}.

In fact, we show that \emph{the fraction of all overdamped modes is exactly
the loss fraction }$\delta_{R}$\emph{\ which satisfies }$0<\delta_{R}<1$, and
half of these overdamped modes are the high-loss modes. Thus it is
\emph{exactly the positive fraction }$1-\delta_{R}>0$\emph{\ of modes which
are low-loss oscillatory modes}. In addition, \emph{these latter modes have
extremely high quality factor which increases as the losses in the lossy
component increase}. An example of this behavior using the electric circuit in
Section \ref{scircuits} was shown numerically in \cite{FigWel1}.

\textbf{Virial theorem for dissipative systems and equipartition of energy}.
Recall that the virial theorem from classical mechanics \cite[pp. 83-86;
\S 3.4]{Goldstein} (which we review in Appendix \ref{ApdxVirThm}) is about the
oscillatory transfer of energy from one form to another and its equipartition.
For conservative linear systems whose Lagrangian is the difference between the
kinetic and potential energy, e.g., a spring-mass system or a parallel LC
circuit, the virial theorem says that for any time-periodic state of the
system, the time-averaged kinetic energy equals the time-averaged potential
energy. This result is known as the equipartition of energy because of the
fact that the total energy of the system, i.e., the Hamiltonian, is equal to
the sum of the kinetic and potential energy. It is this result that we
generalize in this paper for dissipative Lagrangian systems and their damped
oscillatory modes.

In particular, we show that \emph{for any damped oscillatory mode of the
system the kinetic energy equals the potential energy} and so \emph{there is
an equipartition of the system energy for these damped oscillatory modes into
kinetic and potential energy}. Our precise statements of these results can be
found in Section \ref{svir} in Theorem \ref{tvir} and Corollary \ref{tvir_eqp}.

We have also found that the transition to overdamping is characterized by a
breakdown of the virial theorem. More specifically, as the losses of the lossy
component increase some of the modes become overdamped with complete ceasing
of any oscillations and with breaking of the equality between kinetic and
potential energy. Such a \emph{breaking of the equipartition of the energy can
viewed as a dynamical characterization of the overdamped modes! }So we now
have two different characterizations of overdamping: spectral and dynamical.
The effectiveness of this dynamical characterization is demonstrated in our
study of the selective overdamping phenomenon and in proving that a positive
fraction of modes always remain oscillatory no matter how large the losses
become in the lossy component of the composite system.

\textbf{Standard spectral theory vs. Krein spectral theory}. The study of the
eigenmodes of a Lagrangian system often relies on the fact the system
evolution can be transformed into the Hamiltonian form as the first-order
linear differential equations. However, it is not emphasized enough that the
Hamiltonian setting does not lead directly to the standard spectral theory of
self-adjoint or dissipative operators but that only it might be reduced to one
in some important and well known cases by a proper transformation as, for
instance, in the case of a simple oscillator.

The general spectral theory of Hamiltonian systems is known as the \emph{Krein
spectral theory} \cite[\S 42]{Arnold}, \cite{Yak}, \cite{YakSta}. This theory
is far more complex than the standard spectral theory and it is much harder to
apply. We have found, as discussed in Section \ref{smodel}, a general
transformation that reduces the evolution equation to the standard spectral
theory under the condition of positivity of the Hamiltonian. The standard
spectral theory is complete, well understood, and allows for an elaborate and
effective perturbation theory. We used that all in our analysis of the modal
dichotomy and symmetries of the spectrum in Section \ref{sspan} as well as the
selective overdamping phenomenon in Section \ref{sodan}.

Our main results on spectral symmetries are Proposition \ref{pssym} and
Corollary \ref{cssym}. Our principal results on the modal dichotomy can be
found in Theorems \ref{tmdic}, \ref{cpodr} and Corollaries \ref{cpfsp},
\ref{cmdic}. In fact, it is important to point out that the key result of this
paper that leads to the modal dichotomy which is essential in the analysis of
overdamping is Proposition \ref{pevbd} which relates to certain eigenvalue
bounds. Moreover, these bounds and the dichotomy come from some general
results that we derive in Appendix \ref{apxebd} on the perturbation theory for
matrices with a large imaginary part.

\paragraph{Organization of the paper.}

The rest of the paper is organized as follows. Section \ref{smodel} sets up
and discusses the Lagrangian framework approach and model used in this paper
to study the dissipative properties of two-component composite systems with a
high-loss and a lossless component. In Section \ref{scircuits} we apply the
developed approach to an electric circuit showing all key features of the
method. Sections \ref{svir}--\ref{sodan} are devoted to a precise formulation
of all significant results in the form of Theorems, Propositions, and so on
(with Section \ref{sproofs} providing the proofs of these results). More
specifically, Section \ref{svir} discusses our extension of the virial theorem
and equipartition of energy from classical mechanics for conservative
Lagrangian systems to dissipative systems. Section \ref{sspan} is on the
spectral analysis of the system eigenmodes including spectral symmetries,
modal dichotomy, and perturbation theory in the high-loss regime. Section
\ref{sodan} contains our analysis of the overdamping phenomena including
selective overdamping. Appendices \ref{apxsc} and \ref{apxebd} contain results
we need that we believe will be of use more generally in studying dissipative
dynamical systems, but especially for composite systems with high-loss and
lossless components. In particular, Appendix \ref{apxebd} is on the
perturbation theory for matrices with a large imaginary part. Finally,
Appendices \ref{apenerg} and \ref{ApdxVirThm} review the virial theorem from
classical mechanics from the energetic point of view.

\section{Model setup and discussion\label{smodel}}

\subsection{Integrating the dissipation into the Lagrangian
frameworks\label{slagrangian}}

The Lagrangian and Hamiltonian frameworks have numerous well known advantages
in describing the evolution of physical systems. These advantages include, in
particular, the universality of the mathematical structure, the flexibility in
the choice of variables and the incorporation of the symmetries with the
corresponding conservations laws. Consequently, it seems quite natural and
attractive to integrate dissipation into the Lagrangian and Hamiltonian
frameworks. We are particular interested in integrating dissipation into
systems with gyroscopic features such as dielectric media with magnetic
components and electrical networks.

We start with setting up the Lagrangian and Hamiltonian structures similarly
to those in \cite{FigVit1} and \cite{FigSch2}. In particular, we use matrices
to efficiently handle possibly large number of degrees of freedom. Since we
are interested in systems evolving linearly the Lagrangian $\mathcal{L}$ is
assumed to be a quadratic function (bilinear form) of the coordinates
$Q=\left[  q_{r}\right]  _{r=1}^{N}$ (column vector) and their time
derivatives $\dot{Q}$, that is
\begin{equation}
\mathcal{L}=\mathcal{L}\left(  Q,\dot{Q}\right)  =\frac{1}{2}\left[
\begin{array}
[c]{l}%
\dot{Q}\\
Q
\end{array}
\right]  ^{\mathrm{T}}M_{\mathrm{L}}\left[
\begin{array}
[c]{l}%
\dot{Q}\\
Q
\end{array}
\right]  ,\qquad M_{\mathrm{L}}=\left[
\begin{array}
[c]{ll}%
\alpha & \theta\\
\theta^{\mathrm{T}} & -\eta
\end{array}
\right]  , \label{dlag1}%
\end{equation}
where $\mathrm{T}$ denotes the matrix transposition operation, and
$\alpha,\eta$ and $\theta$ are $N\times N$-matrices with real-valued entries.
In addition to that, we assume $\alpha,\eta$ to be symmetric matrices which
are positive definite and positive semidefinite, respectively, and we assume
$\theta$ is skew-symmetric, that is
\begin{equation}
\alpha=\alpha^{\mathrm{T}}>0,\qquad\eta=\eta^{\mathrm{T}}\geq0,\qquad
\theta^{\mathrm{T}}=-\theta. \label{dlag2}%
\end{equation}
Notice that only the skew-symmetric part of $\theta$ in (\ref{dlag1}) matters
for the system dynamics because of the form of the Euler-Lagrange equations
(\ref{dlag4}) and so without loss of generality we can assume as we have that
the matrix $\theta$ is skew-symmetric. In fact, the symmetric part of the
matrix $\theta$ if any alters the Lagrangian by a complete time derivative and
consequently can be left out as we have done.

The Lagrangian (\ref{dlag1}) can be written in the energetic form%
\begin{equation}
\mathcal{L}=\mathcal{T}-\mathcal{V}, \label{dlag3}%
\end{equation}
where
\begin{equation}
\mathcal{T}=\mathcal{T}(\dot{Q},Q)=\frac{1}{2}\dot{Q}^{\mathrm{T}}\alpha
\dot{Q}+\frac{1}{2}\dot{Q}^{\mathrm{T}}\theta Q,\qquad\mathcal{V}%
=\mathcal{V}(\dot{Q},Q)=\frac{1}{2}Q^{\mathrm{T}}\eta Q-\frac{1}{2}\dot
{Q}^{\mathrm{T}}\theta Q \label{dlag3_1}%
\end{equation}
are interpreted as the kinetic and potential energy, respectively. By
Hamilton's principle the dynamics of the system are governed by the
Euler-Lagrange equations
\begin{equation}
\frac{d}{dt}\left(  \frac{\partial\mathcal{L}}{\partial\dot{Q}}\right)
-\frac{\partial\mathcal{L}}{\partial Q}=0, \label{dlag4}%
\end{equation}
which are the following second-order ordinary differential equation (ODEs)
\begin{equation}
\alpha\ddot{Q}+2\theta\dot{Q}+\eta Q=0, \label{dlag5}%
\end{equation}

The above second-order ODEs can be turned in the first-order ODEs with help of
the Hamiltonian function defined by the Lagrangian through the Legendre
transformation
\begin{equation}
\mathcal{H}=\mathcal{H}\left(  P,Q\right)  =P^{\mathrm{T}}\dot{Q}%
-\mathcal{L}\left(  Q,\dot{Q}\right)  ,\text{ where }P=\frac{\partial
\mathcal{L}}{\partial\dot{Q}}=\alpha\dot{Q}+\theta Q. \label{dlag6}%
\end{equation}
Hence,%
\begin{equation}
\dot{Q}=\alpha^{-1}\left(  P-\theta Q\right)  \label{dlag6a}%
\end{equation}
and, consequently
\begin{equation}
\mathcal{H}\left(  P,Q\right)  =\frac{1}{2}\left[  \left(  P-\theta Q\right)
^{T}\alpha^{-1}\left(  P-\theta Q\right)  +Q^{T}\eta Q\right]  =\frac{1}%
{2}\dot{Q}^{T}\alpha\dot{Q}+\frac{1}{2}Q^{T}\eta Q. \label{dlag7}%
\end{equation}
In particular, the Hamiltonian is the sum of the kinetic and potential energy,
that is%
\begin{equation}
\mathcal{H}=\mathcal{T}+\mathcal{V}. \label{dlag7c}%
\end{equation}
We remind that the Hamiltonian function $\mathcal{H}\left(  P,Q\right)  $ is
interpreted as the system energy which is a conserved quantity, that is%
\begin{equation}
\partial_{t}\mathcal{H}\left(  P,Q\right)  =0. \label{dlag7b}%
\end{equation}
The function $\mathcal{H}\left(  P,Q\right)  $ defined by (\ref{dlag7}) is a
quadratic form associated with a matrix $M_{\mathrm{H}}$ through the relation
\begin{equation}
\mathcal{H}\left(  P,Q\right)  =\frac{1}{2}\left[
\begin{array}
[c]{l}%
P\\
Q
\end{array}
\right]  ^{\mathrm{T}}M_{\mathrm{H}}\left[
\begin{array}
[c]{l}%
P\\
Q
\end{array}
\right]  , \label{dlag8}%
\end{equation}
where $M_{\mathrm{H}}$ is the $2N\times2N$ matrix having the block form
\begin{equation}
M_{\mathrm{H}}=\left[
\begin{array}
[c]{ll}%
\alpha^{-1} & -\alpha^{-1}\theta\\
-\theta^{\mathrm{T}}\alpha^{-1} & \theta^{\mathrm{T}}\alpha^{-1}\theta+\eta
\end{array}
\right]  =\left[
\begin{array}
[c]{ll}%
\mathbf{1} & 0\\
-\theta^{\mathrm{T}} & \mathbf{1}%
\end{array}
\right]  \left[
\begin{array}
[c]{ll}%
\alpha^{-1} & 0\\
0 & \eta
\end{array}
\right]  \left[
\begin{array}
[c]{ll}%
\mathbf{1} & -\theta\\
0 & \mathbf{1}%
\end{array}
\right]  , \label{dlag7a}%
\end{equation}
where $\mathbf{1}$ is the identity matrix. Notice that the representations
(\ref{dlag8}), (\ref{dlag7a}) combined with the inequalities (\ref{dlag2})
imply%
\begin{equation}
\mathcal{H}\left(  P,Q\right)  \geq0\text{ and }M_{\mathrm{H}}=M_{\mathrm{H}%
}^{\mathrm{T}}\geq0. \label{dlag8a}%
\end{equation}
The Hamiltonian evolution equations then take the form%
\begin{equation}
\partial_{t}\left[
\begin{array}
[c]{l}%
P\\
Q
\end{array}
\right]  =JM_{\mathrm{H}}\left[
\begin{array}
[c]{l}%
P\\
Q
\end{array}
\right]  ,\quad J=\left[
\begin{array}
[c]{ll}%
0 & -\mathbf{1}\\
\mathbf{1} & 0
\end{array}
\right]  \text{, } \label{dlag9}%
\end{equation}
Observe that the symplectic matrix $J$ satisfies the following relations%
\begin{equation}
J^{2}=-\mathbf{1},\quad J=-J^{\mathrm{T}}. \label{dlag9a}%
\end{equation}

\textbf{The Rayleigh dissipation function. }Up to now we have dealt with a
conservative system\ satisfying the energy conservation (\ref{dlag7b}). Let us
introduce now dissipative forces using Rayleigh's method described in
\cite[Sec. 10.11, 10.12]{Pars}, \cite[Sec. 8, 9, 46]{Gantmacher}. The
\emph{Rayleigh dissipation function} $\mathcal{R}$ is defined as a quadratic
function of the generalized velocities, namely
\begin{equation}
\mathcal{R}=\mathcal{R}\left(  \dot{Q}\right)  =\frac{1}{2}\dot{Q}%
^{\mathrm{T}}\beta R\dot{Q},\text{ \ \ }R\not =0,\text{ \ \ }R=R^{\mathrm{T}%
}\geq0,\text{ \ \ }\beta\geq0, \label{radis1}%
\end{equation}
where the scalar $\beta$ is a dimensionless \emph{loss parameter} which we
introduce to scale the intensity of dissipation. In particular, $R$ is an
$N\times N$ symmetric matrix with real-valued entries and, most importantly,
is positive semidefinite.

The dissipation is then introduced through the following general
Euler-Lagrange equations of motion with forces%
\begin{equation}
\frac{d}{dt}\left(  \frac{\partial\mathcal{L}}{\partial\dot{Q}}\right)
-\frac{\partial\mathcal{L}}{\partial Q}=-\frac{\partial\mathcal{R}}%
{\partial\dot{Q}}+F, \label{radis2}%
\end{equation}
where $\frac{\partial\mathcal{R}}{\partial\dot{Q}}$ are generalized
dissipative forces and $F=F\left(  t\right)  $ is an external force, yielding
the following second-order ODEs
\begin{equation}
\alpha\ddot{Q}+\left(  2\theta+\beta R\right)  \dot{Q}+\eta Q=F.
\label{radis2a}%
\end{equation}

\subsection{The Lagrangian system and two-component composite
model\label{sLagrModelEigenmQFac}}

Now by a linear (dissipative) \emph{Lagrangian system} we mean a system whose
state is described by a time-dependent $Q=Q(t)$ taking values in the Hilbert
space $%
\mathbb{C}
^{N}$ with the standard inner product $(\cdot,\cdot)$ (i.e., $(a,b)=a^{\ast}b
$, where $\ast$ denotes the conjugate transpose, i.e., $a^{\ast}=\overline
{a}^{\mathrm{T}}$) whose dynamics are governed by the ODEs (\ref{radis2a}).

The energy balance equation for any such state, which follows from
(\ref{radis2a}), is%

\begin{equation}
\frac{d\mathcal{H}}{dt}=-2\mathcal{R}+\operatorname{Re}\left(  \dot
{Q},F\right)  , \label{radis4}%
\end{equation}
where $\mathcal{H=T+V}$\ as in (\ref{dlag7c}) but now instead of
(\ref{dlag3_1})\ we have%
\begin{gather}
\mathcal{T}=\mathcal{T}(\dot{Q},Q)=\frac{1}{2}\left(  \dot{Q},\alpha\dot
{Q}\right)  +\frac{1}{2}\operatorname{Re}\left(  \dot{Q},\theta Q\right)
,\text{ }\label{radis4_0}\\
\mathcal{V}=\mathcal{V}(\dot{Q},Q)=\frac{1}{2}\left(  Q,\eta Q\right)
-\frac{1}{2}\operatorname{Re}\left(  \dot{Q},\theta Q\right)  ,\nonumber\\
\mathcal{R}=\mathcal{R}\left(  \dot{Q}\right)  =\frac{1}{2}\left(  \dot
{Q},\beta R\dot{Q}\right)  .\nonumber
\end{gather}
We continue to interpret $\mathcal{T}$, $\mathcal{V}$ as the kinetic and
potential energies, $\mathcal{H}$ as the system energy, and $2\mathcal{R}$ as
the dissipated power. The term $\operatorname{Re}\left(  \dot{Q},F\right)  $
is interpreted as the rate of work done by the force $F$.

The physical significance of the various energetic terms $\mathcal{E}$, where
$\mathcal{E}\in\left\{  \mathcal{H},\mathcal{T},\mathcal{V},\mathcal{R}%
\right\}  $, is that for a complex-valued state $Q=Q_{1}+\mathrm{i}Q_{2}$ with
real-valued $Q_{1},Q_{2},F$ both $Q_{1}$ and $Q_{2}$ are also states
representing physical solutions and $\mathcal{E}\left(  \dot{Q},Q\right)
=\mathcal{E}\left(  \dot{Q}_{1},Q_{1}\right)  +\mathcal{E}\left(  \dot{Q}%
_{2},Q_{2}\right)  $. These latter two terms in the sum reduce to the previous
definitions of the energetic term $\mathcal{E}$ above in (\ref{dlag3_1}),
(\ref{dlag7}), or (\ref{radis1}) for real-valued vector quantities.

We now introduce an important quantity -- the \emph{loss fraction} $\delta
_{R}$. It is defined as the ratio of the rank of the matrix $R$ to the total
degrees of freedom $N$ of the system%
\begin{equation}
\text{loss fraction: \ \ }\delta_{R}=\frac{N_{R}}{N},\text{ \ \ }%
N_{R}=\operatorname{rank}R. \label{radis4a}%
\end{equation}
From our hypothesis $R\not =0$ it follows that $0<\delta_{R}\leq1$. We
consider the dissipative Lagrangian system to be a model of a two-component
composite with a lossy and a lossless component whenever the loss fraction condition%

\begin{equation}
\text{loss fraction condition: \ \ }0<\delta_{R}<1 \label{radis4b}%
\end{equation}
is satisfied. We then associate the range of the operator $R$, i.e.,
$\operatorname{Ran}R$, with the lossy component of the system and consider the
lossy component to be highly lossy when $\beta\gg1$, i.e., the high-loss
regime. Our paper is focused on this case.

\textbf{On the eigenmodes and the quality factor.} To study the dissipative
properties of the dissipative Lagrangian system (\ref{radis2a}) in the
high-loss regime $\beta\gg1$, one can consider the eigenmodes of the system
and their quality factor in this regime.

An eigenmode of this system (\ref{radis2a}) is defined as a nonzero solutions
of the ODEs (\ref{radis2a}) with no forcing, i.e., $F=0$, having the form
$Q\left(  t\right)  =qe^{-\mathrm{i}\zeta t}$, where $\operatorname{Re}\zeta$
and $-\operatorname{Im}\zeta$ are called its frequency and damping factor,
respectively. Since the system is dissipative, i.e., dissipates energy as
interpreted from the energy balance equation (\ref{radis4}), the damping
factor must satisfy
\begin{equation}
0\leq2\mathcal{R}=-\frac{d\mathcal{H}}{dt}=-2\operatorname{Im}\zeta\mathcal{H}
\label{radis4_1}%
\end{equation}
which implies that $-\operatorname{Im}\zeta\geq0$.

Thus, an eigenmode\ is a state of the dissipative Lagrangian system which is
in damped harmonic motion and the motion is oscillatory provided
$\operatorname{Re}\zeta\not =0$. An important quantity which characterizes the
quality of such a damped oscillation is the quality factor $Q_{\zeta}$ (i.e.,
$Q$-factor) of the eigenmode which is defined as the reciprocal of the
relative rate of energy dissipation per temporal cycle, i.e.,%

\begin{equation}
Q_{\zeta}=2\pi\frac{\text{energy stored in system}}{\text{energy lost per
cycle}}=\left\vert \operatorname{Re}\zeta\right\vert \frac{\mathcal{H}}%
{-\frac{d\mathcal{H}}{dt}}. \label{radis4_2}%
\end{equation}
It follows immediately from this definition and (\ref{radis4_1}) that
$Q_{\zeta}$ depends only on the complex frequency $\zeta$ of the eigenmode, in
particular,%
\begin{equation}
Q_{\zeta}=-\frac{1}{2}\frac{\left\vert \operatorname{Re}\zeta\right\vert
}{\operatorname{Im}\zeta}, \label{radis4_3}%
\end{equation}
with the convention $Q_{\zeta}=+\infty$ if $\operatorname{Im}\zeta=0$.

\subsection{The canonical system\label{scansys}}

Now the matrix Hamilton form of the Euler-Lagrange equation involving Rayleigh
dissipative forces and external force (\ref{radis2}) reads%
\begin{equation}
\partial_{t}u=\left(  J-\left[
\begin{array}
[c]{ll}%
\beta R & 0\\
0 & 0
\end{array}
\right]  \right)  M_{\mathrm{H}}u+\left[
\begin{array}
[c]{l}%
F\\
0
\end{array}
\right]  ,\quad u=\left[
\begin{array}
[c]{l}%
P\\
Q
\end{array}
\right]  . \label{radis3}%
\end{equation}

It is important to recognize that the evolution equation (\ref{radis3}) are
not quite yet of the desired form which is the most suitable for the spectral
analysis. Advancing general ideas of the Hamiltonian treatment of dissipative
systems developed in \cite{FigSch2} we can factor the matrix $M_{\mathrm{H}}$
as
\begin{equation}
M_{\mathrm{H}}=K^{\mathrm{T}}K, \label{radis5}%
\end{equation}
where the matrix $K$ is the block matrix%
\begin{gather}
K=\left[
\begin{array}
[c]{ll}%
K_{\mathrm{p}} & 0\\
0 & K_{\mathrm{q}}%
\end{array}
\right]  \left[
\begin{array}
[c]{ll}%
\mathbf{1} & -\theta\\
0 & \mathbf{1}%
\end{array}
\right]  =\left[
\begin{array}
[c]{ll}%
K_{\mathrm{p}} & -K_{\mathrm{p}}\theta\\
0 & K_{\mathrm{q}}%
\end{array}
\right]  ,\label{radis6}\\
K_{\mathrm{p}}=\sqrt{\alpha}^{-1},\text{ }K_{\mathrm{q}}=\sqrt{\eta}
\label{radis6a1}%
\end{gather}
which manifestly takes into account the gyroscopic term $\theta$. Here
$\sqrt{\alpha}$ and $\sqrt{\eta}$ denote the unique positive semidefinite
square roots of the matrices $\alpha$ and $\eta$, respectively. \ In
particular, it follows from the properties (\ref{dlag2}) and the proof of
\cite[\S VI.4, Theorem VI.9]{ReSi1} that $K_{\mathrm{p}}$, $K_{\mathrm{q}}$
are $N\times N$ matrices with real-valued entries with the properties
\begin{equation}
K_{\mathrm{p}}=K_{\mathrm{p}}^{\mathrm{T}}>0,\text{ \ \ }K_{\mathrm{q}%
}=K_{\mathrm{q}}^{\mathrm{T}}\geq0. \label{radis6b}%
\end{equation}

The introduction of the matrix $K$ according to \cite{FigSch2} is intimately
related to the introduction of force variables%
\begin{equation}
v=Ku. \label{radis7}%
\end{equation}
Consequently, recasting the evolution equation (\ref{radis3}) in terms of the
force variables $v$ yields the desired canonical form, namely
\begin{equation}
\partial_{t}v=-\mathrm{i}A\left(  \beta\right)  v+f,\text{ \ \ where }A\left(
\beta\right)  =\Omega-\mathrm{i}\beta B,\text{ \ \ }\beta\geq0, \label{radis8}%
\end{equation}
and%
\begin{gather}
\Omega=\mathrm{i}KJK^{\mathrm{T}}=\left[
\begin{array}
[c]{cc}%
\Omega_{\mathrm{p}} & -\mathrm{i}\Phi^{\mathrm{T}}\\
\mathrm{i}\Phi & 0
\end{array}
\right]  ,\text{ \ }B=K\left[
\begin{array}
[c]{ll}%
R & 0\\
0 & 0
\end{array}
\right]  K^{\mathrm{T}}=\left[
\begin{array}
[c]{ll}%
\mathsf{\tilde{R}} & 0\\
0 & 0
\end{array}
\right]  ,\text{ \ }f=\left[
\begin{array}
[c]{l}%
K_{\mathrm{p}}F\\
0
\end{array}
\right]  ,\label{radis8a}\\
\Omega_{\mathrm{p}}=-\mathrm{i}2K_{\mathrm{p}}\theta K_{\mathrm{p}%
}^{\mathrm{T}},\text{ \ \ }\Phi=K_{\mathrm{q}}K_{\mathrm{p}}^{\mathrm{T}%
},\text{ \ \ }\mathsf{\tilde{R}}=K_{\mathrm{p}}RK_{\mathrm{p}}^{\mathrm{T}}.
\label{radis8b}%
\end{gather}
Observe that the $2N\times2N$ matrices $\Omega$, $B$ in block form are
Hermitian and positive semidefinite, respectively, that is
\begin{equation}
\Omega=\Omega^{\ast},\quad B=B^{\ast}\geq0. \label{radis10}%
\end{equation}
Moreover, the matrix $B$ does not have full rank and its rank is that of $R$,
i.e.
\begin{equation}
\operatorname{rank}B=\operatorname{rank}R=N_{R}. \label{radis10a}%
\end{equation}
The \emph{system operator} $A\left(  \beta\right)  $ has the following
important properties
\begin{equation}
-\operatorname{Im}A\left(  \beta\right)  =\beta B\geq0,\text{ \ }%
\operatorname{Re}A\left(  \beta\right)  =\Omega,\text{ \ }A\left(
\beta\right)  ^{\ast}=-A\left(  \beta\right)  ^{\mathrm{T}}, \label{radis10b}%
\end{equation}
where the latter property comes from the fact that $\Omega^{\ast}%
=\Omega=-\Omega^{\mathrm{T}}$ and $B^{\ast}=B=B^{\mathrm{T}}$. Also, recall
that for any square matrix $M$, one can write $M=\operatorname{Re}%
M+\mathrm{i}\operatorname{Im}M$, where $\operatorname{Re}M=\frac{M+M^{\ast}%
}{2}$ and $\operatorname{Im}M=\frac{M-M^{\ast}}{2\mathrm{i}}$ denote the real
and imaginary parts of the matrix $M$, respectively.

Now by the \emph{canonical system} we mean the system whose state is described
by a time-dependent $v=v\left(  t\right)  $ taking values in the Hilbert space
$H=%
\mathbb{C}
^{2N}$ with the standard inner product $\left(  \cdot,\cdot\right)  $ whose
dynamics are governed by the ODEs (\ref{radis8}). These ODEs will be referred
to as the \emph{canonical evolution equation}.

In this paper we will focus on the case when there is no forcing, i.e., $f=0$
(or, equivalently, \thinspace$F=0$). In this case, the canonical system
represents a dissipative dynamical system (since $-\operatorname{Im}A\left(
\beta\right)  =\beta B\geq0$) with the evolution governed by the semigroup
$e^{-\mathrm{i}A\left(  \beta\right)  t}$. As we shall see in this paper and
as mentioned in the introduction, there are some serious advantages in
studying this dissipative dynamical system, i.e., the canonical system, over
the Lagrangian or Hamiltonian systems.

\subsection{The energetic equivalence between the two systems\label{seneq}}

In a previous work \cite{FigWel1} of ours, we studied canonical systems whose
states are solutions of a linear evolution equation in the canonical form
(\ref{radis8}) with system operator $A\left(  \beta\right)  $\thinspace$=$
$\Omega-\mathrm{i}\beta B$, $\beta\geq0$ having exactly the properties
(\ref{radis10}) in which the matrix $B$ did not have full rank. In that paper,
the canonical system was a simplified version of an abstract model of an
oscillator damped retarded friction that modeled a two-component composite
system with a lossy and lossless component. We showed that the state
$v=v\left(  t\right)  $ of such a system satisfied the energy balance equation%
\begin{equation}
\partial_{t}U\left[  v\left(  t\right)  \right]  =-W_{\mathrm{dis}}\left[
v(t)\right]  +W\left[  v\left(  t\right)  \right]  , \label{radis11}%
\end{equation}
in which the system energy $U\left[  v\left(  t\right)  \right]  $, dissipated
power $W_{\mathrm{dis}}\left[  v(t)\right]  $, and $W\left[  v\left(
t\right)  \right]  $ the rate of work done by the force $f\left(  t\right)  $
were given by%

\begin{equation}
U\left[  v\left(  t\right)  \right]  =\frac{1}{2}\left(  v\left(  t\right)
,v\left(  t\right)  \right)  ,\text{ }W_{\mathrm{dis}}\left[  v(t)\right]
=\beta\left(  v\left(  t\right)  ,Bv\left(  t\right)  \right)  ,\text{
}W\left[  v\left(  t\right)  \right]  =\operatorname{Re}\left(  v\left(
t\right)  ,f\left(  t\right)  \right)  . \label{radis11a}%
\end{equation}

An important result of this paper which follows immediately from the block
form (\ref{radis8a}), (\ref{radis8b}) and the relation between the variables
$v$, $u$, $P,Q$ from (\ref{radis7}), (\ref{radis3}), (\ref{dlag6})] is that a
state $Q$ of the Lagrangian system, whose dynamics are governed by
second-order ODEs (\ref{radis2a}), and the corresponding state $v$ of the
canonical system, whose dynamics are governed by the canonical evolution
equation (\ref{radis8}), the energetics are equivalent in the sense%

\begin{equation}
U\left[  v\right]  =\mathcal{T}\left(  \dot{Q},Q\right)  +\mathcal{V}\left(
\dot{Q},Q\right)  ,\text{ \ }W_{\mathrm{dis}}\left[  v\right]  =2\mathcal{R}%
\left(  \dot{Q}\right)  ,\text{ \ }W\left[  v\right]  =\operatorname{Re}%
\left(  \dot{Q},F\right)  \text{,} \label{radis11b}%
\end{equation}
where, as you will recall from the energy balance equation (\ref{radis4}),
$\mathcal{H}=\mathcal{T}+\mathcal{V}$ was the system energy for the Lagrangian system.

\textbf{On the eigenmodes and the quality factor.} We will be interested in
the eigenmodes of the canonical system (\ref{radis8}) and their quality
factor. An eigenmode of the canonical system is defined as a nonzero solutions
of the ODEs (\ref{radis8}) with no forcing, i.e., $f=0$, having the form
$v\left(  t\right)  =we^{-\mathrm{i}\zeta t}$. The quality factor $Q[w]$ of
such an eigenmode is defined as in \cite{FigWel1} by%

\begin{equation}
Q[w]=2\pi\frac{\text{energy stored in system}}{\text{energy lost per cycle}%
}=\left\vert \operatorname{Re}\zeta\right\vert \frac{U\left[  v\left(
t\right)  \right]  }{W_{\mathrm{dis}}\left[  v(t)\right]  }. \label{radis12}%
\end{equation}
By the energy balance equation (\ref{radis11}) it follows that%

\begin{equation}
Q[w]=-\frac{1}{2}\frac{\left\vert \operatorname{Re}\zeta\right\vert
}{\operatorname{Im}\zeta}, \label{radis12_1}%
\end{equation}
with the convention $Q[w]=+\infty$ if $\operatorname{Im}\zeta=0$.

Now given an eigenmode $Q(t)=qe^{-\mathrm{i}\zeta t}$ of the Lagrangian system
\ref{radis2a}), it follows that the corresponding state $v$ of the canonical
system (\ref{radis8}), related by (\ref{radis7}), is an eigenmode of the form
$v(t)=we^{-\mathrm{i}\zeta t}$ (excluding the case $Ku=0$, which can only
occur if $\zeta=0$). Importantly, the energetic equivalence (\ref{radis11b})
holds and their quality factors (\ref{radis4_3}), (\ref{radis12_1}) are equal,
i.e.,%
\begin{equation}
Q\left[  w\right]  =Q_{\zeta}\text{.} \label{radis13}%
\end{equation}

A more detailed discussion on the eigenmodes of the two systems and the
relationship between them can be found in Section \ref{sspan}.

\section{Electric circuit example\label{scircuits}}

One of the important applications of our methods described above is electric
circuits and networks involving resistors representing losses. A general study
of electric networks with losses can be carried out with the help of the
Lagrangian approach, and that systematic study has already been carried out in
this paper. For Lagrangian treatment of electric networks and circuits we
refer to \cite[Sec. 9]{Gantmacher}, \cite[Sec. 2.5]{Goldstein}, \cite{Pars}.

We illustrate the idea and give a flavor of the efficiency of our methods by
considering below a rather simple example of an electric circuit as in Fig.
\ref{Figc1} with the assumptions%
\begin{equation}
L_{1},\text{ }L_{2},\text{ }C_{1},\text{ }C_{2},\text{ }C_{12}>0\text{ and
}R_{2}\geq0\text{.} \label{circ0}%
\end{equation}
This example\ has the essential features of two component systems
incorporating high-loss and lossless components. \begin{figure}[ptb]
\begin{center}
\includegraphics[height=2.6227in, width=4.7823in]{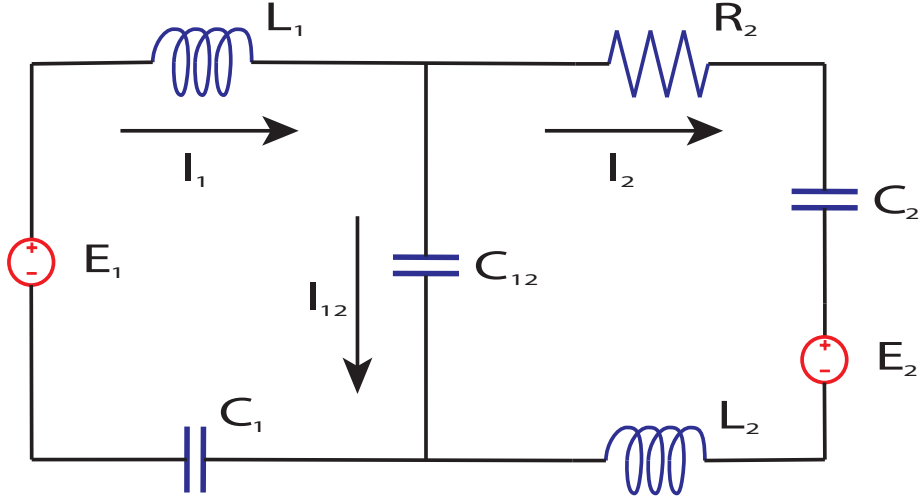}
\end{center}
\caption{An electric circuit involving three capacitances $C_{1}$, $C_{2}$,
$C_{12}$, two inductances $L_{1}$, $L_{2}$, a resistor $R_{2}$, and two
sources $E_{1}$, $E_{2}$. This electric circuit example fits within the
framework of our model.\ Indeed, since resistors represent losses, this two
component system consists of a lossy component and a lossless component -- the
right and left circuits, respectively.}%
\label{Figc1}%
\end{figure}

\textbf{The Lagrangian system.} To derive evolution equations for the electric
circuit in Fig. \ref{Figc1} we use a general method for constructing
Lagrangians for circuits, \cite[Sec. 9]{Gantmacher}, that yields%
\begin{equation}
\mathcal{T}=\frac{L_{1}}{2}\dot{q}_{1}^{2}+\frac{L_{2}}{2}\dot{q}_{2}%
^{2},\quad\mathcal{V}=\frac{1}{2C_{1}}q_{1}^{2}+\frac{1}{2C_{12}}\left(
q_{1}-q_{2}\right)  ^{2}+\frac{1}{2C_{2}}q_{2}^{2},\quad\mathcal{R}%
=\frac{R_{2}}{2}\dot{q}_{2}^{2}, \label{circ1}%
\end{equation}
where $\mathcal{T}$ and $\mathcal{V}$ are respectively the kinetic and the
potential energies, $\mathcal{L}=\mathcal{T}-\mathcal{V}$ is the Lagrangian,
and $\mathcal{R}$ is the Rayleigh dissipative function. Notice that
$I_{1}=\dot{q}_{1}$ and $I_{2}=\dot{q}_{2}$ are the currents. The general
Euler-Lagrange equations of motion with forces are, \cite[Sec. 8]{Gantmacher},%
\begin{equation}
\frac{\partial}{\partial t}\frac{\partial\mathcal{L}}{\partial\dot{Q}}%
-\frac{\partial\mathcal{L}}{\partial Q}=-\frac{\partial\mathcal{R}}%
{\partial\dot{Q}}+F, \label{circ2}%
\end{equation}
where $Q$ are the charges and $F$ the sources%
\begin{equation}
Q=\left[
\begin{array}
[c]{c}%
q_{1}\\
q_{2}%
\end{array}
\right]  ,\quad F=\left[
\begin{array}
[c]{c}%
E_{1}\\
E_{2}%
\end{array}
\right]  , \label{circ3}%
\end{equation}
yielding from (\ref{circ0})--(\ref{circ3}) the following second-order ODEs%
\begin{equation}
\alpha\ddot{Q}+\beta R\dot{Q}+\eta Q=F, \label{circ4}%
\end{equation}
with the dimensionless loss parameter%
\begin{equation}
\beta=\frac{R_{2}}{\ell}\text{ \ \ (where }\ell>0\text{ is fixed and has same
units as }R_{2}\text{)} \label{circ5}%
\end{equation}
that scales the intensity of losses in the system, and%
\begin{equation}
\alpha=\left[
\begin{array}
[c]{cc}%
L_{1} & 0\\
0 & L_{2}%
\end{array}
\right]  ,\text{ \ \ }R=\left[
\begin{array}
[c]{cc}%
0 & 0\\
0 & \ell
\end{array}
\right]  ,\text{ \ \ }\eta=\left[
\begin{array}
[c]{cc}%
\frac{1}{C_{1}}+\frac{1}{C_{12}} & -\frac{1}{C_{12}}\\
-\frac{1}{C_{12}} & \frac{1}{C_{2}}+\frac{1}{C_{12}}%
\end{array}
\right]  . \label{circ6}%
\end{equation}
Recall, the loss fraction $\delta_{R}$ defined in (\ref{radis4a}) is the ratio
of the rank of the matrix $R$ to the total degrees of freedom $N$ of the
system which in this case is
\begin{equation}
\text{loss fraction: \ \ }\delta_{R}=\frac{N_{R}}{N}=\frac{1}{2},\text{
\ \ }N=2,\text{ \ \ }N_{R}=\operatorname{rank}R=1. \label{circ7}%
\end{equation}
Thus the Lagrangian system (\ref{circ4}) has all the properties described in
Sections \ref{slagrangian}, \ref{sLagrModelEigenmQFac} and since the loss
fraction condition (\ref{radis4b}), i.e., $0<\delta_{R}<1$, is satisfied then
it is model of a two-component composite with a lossy and a lossless component.

\textbf{The canonical system.} \ Following the method in Section
\ref{scansys}, we introduce the variables $v$ defined by (\ref{dlag6}),
(\ref{radis3}), (\ref{radis7})\ in terms of the matrix $K$ defined in
(\ref{radis6}), (\ref{radis6a1}) as%
\begin{gather}
v=Ku,\text{ \ \ }u=%
\begin{bmatrix}
\alpha\dot{Q}\\
Q
\end{bmatrix}
,\label{circ8}\\
K=\left[
\begin{array}
[c]{ll}%
K_{\mathrm{p}} & 0\\
0 & K_{\mathrm{q}}%
\end{array}
\right]  ,\text{ \ \ }K_{\mathrm{p}}=\sqrt{\alpha}^{-1}=\left[
\begin{array}
[c]{ll}%
\frac{1}{\sqrt{L_{1}}} & 0\\
0 & \frac{1}{\sqrt{L_{2}}}%
\end{array}
\right]  ,\text{ \ \ }K_{\mathrm{q}}=\sqrt{\eta}.\nonumber
\end{gather}
As $\eta>0$ is a $2\times2$ matrix in this case, then we can compute its
square root explicitly using the formula%
\begin{equation}
\sqrt{M}=\left(  \sqrt{\operatorname{Tr}\left(  M\right)  +2\sqrt{\det\left(
M\right)  }}\right)  ^{-1}\left(  \sqrt{\det\left(  M\right)  }\mathbf{1}%
+M\right)  \text{,} \label{circ9}%
\end{equation}
where $\sqrt{\cdot}$ denotes the positive square root, which holds for any
$2\times2$ matrix $M\geq0$ with $M\not =0$.

Consequently, in the Lagrangian system (\ref{circ4}) in terms of the variable
$v$ is the desired canonical evolution equation from (\ref{radis8}), namely,%

\begin{equation}
\partial_{t}v=-\mathrm{i}A\left(  \beta\right)  v+f,\text{ \ \ where }A\left(
\beta\right)  =\Omega-\mathrm{i}\beta B,\text{ \ \ }\beta\geq0, \label{circ10}%
\end{equation}
and%
\begin{gather}
\Omega=\left[
\begin{array}
[c]{cc}%
0 & -\mathrm{i}\Phi^{\mathrm{T}}\\
\mathrm{i}\Phi & 0
\end{array}
\right]  ,\text{ \ }B=\left[
\begin{array}
[c]{ll}%
\mathsf{\tilde{R}} & 0\\
0 & 0
\end{array}
\right]  ,\text{ \ }f=\left[
\begin{array}
[c]{l}%
K_{\mathrm{p}}F\\
0
\end{array}
\right]  ,\label{circ11}\\
\Phi=K_{\mathrm{q}}K_{\mathrm{p}}^{\mathrm{T}},\text{ \ \ }\mathsf{\tilde{R}%
}=K_{\mathrm{p}}RK_{\mathrm{p}}^{\mathrm{T}}=\left[
\begin{array}
[c]{ll}%
0 & 0\\
0 & \frac{\ell}{L_{2}}%
\end{array}
\right]  .\nonumber
\end{gather}

\section{The virial theorem for dissipative systems and equipartition of
energy\label{svir}}

In this section we will introduce a new virial theorem for dissipative
Lagrangian systems in terms of the eigenmodes of the system. This result
generalizes the virial theorem from classical mechanics \cite[pp. 83-86;
\S 3.4]{Goldstein}, as discussed in Appendix \ref{ApdxVirThm}, for
conservative Lagrangian systems for these types of modes. Recall that the
virial theorem is about the oscillatory transfer of energy from one form to
another and its equipartition.

We can now state precisely and prove our generalization of the virial theorem
to dissipative ($\beta\geq0$) Lagrangian systems\ (\ref{radis2a}) and the
equipartition of energy for the oscillatory eigenmodes. In fact, the proof in
Section \ref{sproofs} shows that our theorem is true for more general system
of ODEs in the form (\ref{radis2a}) since essentially all that really matters
is that the energy balance equation (\ref{radis4}) holds for the eigenmodes.

\begin{theorem}
[virial theorem]\label{tvir}If $Q\left(  t\right)  =qe^{-\mathrm{i}\zeta t}$
is an eigenmode of the Lagrangian system (\ref{radis2a}) then the following
identity holds if $\operatorname{Re}\zeta\not =0$:
\begin{equation}
\mathcal{T}\left(  \dot{Q},Q\right)  =\mathcal{V}\left(  \dot{Q},Q\right)
-\left(  \frac{\operatorname{Im}\zeta}{\operatorname{Re}\zeta}\right)
^{2}\operatorname{Re}\left(  \dot{Q},\theta Q\right)  , \label{tvirs1}%
\end{equation}
where $\mathcal{T}\left(  \dot{Q},Q\right)  $ and $\mathcal{V}\left(  \dot
{Q},Q\right)  $ are the kinetic and potential energy, respectively, defined in
(\ref{radis4_0}). On the other hand, if $\operatorname{Re}\zeta=0\,$\ then the
identity (\ref{tvirs1}) no longer holds and either $\zeta=0$ or the identity
$\left(  \dot{Q},\theta Q\right)  =0$ must hold.
\end{theorem}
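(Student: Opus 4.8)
The plan is to substitute the eigenmode ansatz $Q(t)=qe^{-\mathrm{i}\zeta t}$ directly into the two basic relations satisfied by every eigenmode: the energy balance equation \eqref{radis4} (with $F=0$) and a second identity obtained by pairing the equation of motion \eqref{radis2a} with $Q$ rather than with $\dot Q$. Writing $\zeta=\omega+\mathrm{i}\sigma$ with $\omega=\operatorname{Re}\zeta$ and $\sigma=\operatorname{Im}\zeta$, I note that $\dot Q=-\mathrm{i}\zeta Q$, so every quadratic energetic term $\mathcal{E}(\dot Q,Q)$ acquires an explicit time factor $e^{2\sigma t}$, which is common to all terms and can be divided out; what remains are algebraic relations among $(q,\alpha q)$, $(q,\eta q)$, and $(q,\theta q)$ with coefficients that are polynomials in $\omega$ and $\sigma$. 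The key point to exploit is that $\alpha$ and $\eta$ are symmetric (so $(q,\alpha q)$ and $(q,\eta q)$ are real) while $\theta$ is skew-symmetric (so $(q,\theta q)$ is purely imaginary, i.e. $\operatorname{Re}(q,\theta q)=0$, though $\operatorname{Re}(\dot Q,\theta Q)=\operatorname{Re}(-\mathrm{i}\zeta\,e^{2\sigma t}(q,\theta q))$ need not vanish).

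First I would record the two scalar identities. From $\eqref{radis2a}$ with $F=0$, taking the inner product with $Q$ and using $\ddot Q=-\zeta^2 Q$, $\dot Q=-\mathrm{i}\zeta Q$ gives $-\zeta^2(Q,\alpha Q)+(Q,(2\theta+\beta R)(-\mathrm{i}\zeta)Q)+(Q,\eta Q)=0$; taking real and imaginary parts splits this into two real equations. Separately, the energy balance \eqref{radis4} with $F=0$ reads $\dot{\mathcal H}=-2\mathcal R$, and since $\mathcal H=\mathcal H_0 e^{2\sigma t}$ this is $2\sigma\,\mathcal H_0=-2\mathcal R_0$, i.e. a relation linking $\mathcal H=\mathcal T+\mathcal V$ to $\mathcal R$ and $\sigma$. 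Combining these, and eliminating the $\beta R$-term (equivalently $\mathcal R$) between the imaginary part of the first identity and the energy balance, should leave a relation purely among $\mathcal T$, $\mathcal V$, and $\operatorname{Re}(\dot Q,\theta Q)$ with the coefficient $(\sigma/\omega)^2$ appearing exactly when one divides through by $\omega^2$ — which is legitimate precisely under the hypothesis $\omega=\operatorname{Re}\zeta\neq0$. I expect the bookkeeping to be: write $\mathcal T=\frac12(\dot Q,\alpha\dot Q)+\frac12\operatorname{Re}(\dot Q,\theta Q)$ and $\mathcal V=\frac12(Q,\eta Q)-\frac12\operatorname{Re}(\dot Q,\theta Q)$ as in \eqref{radis4_0}, express $(\dot Q,\alpha\dot Q)=|\zeta|^2(Q,\alpha Q)$ and $\operatorname{Re}(\dot Q,\theta Q)=\sigma\cdot\mathrm{i}(q,\theta q)e^{2\sigma t}$ (a real quantity since $\mathrm{i}(q,\theta q)\in\mathbb R$), and then verify the claimed combination.

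For the degenerate case $\operatorname{Re}\zeta=0$: here $\zeta=\mathrm{i}\sigma$ with $\sigma\le0$ (by \eqref{radis4_1}), so $Q(t)=qe^{\sigma t}$ is real-scalar-multiple evolution with no oscillation, $\dot Q=\sigma Q$, and $\operatorname{Re}(\dot Q,\theta Q)=\sigma\operatorname{Re}(Q,\theta Q)=0$ automatically because $\theta$ is skew-symmetric. Substituting into the equation of motion and pairing with $Q$ gives $\sigma^2(Q,\alpha Q)+\sigma(Q,(2\theta+\beta R)Q)+(Q,\eta Q)=0$; taking the imaginary part kills the symmetric terms and leaves $2\sigma\operatorname{Im}(Q,\theta Q)=2\sigma\cdot\mathrm{i}^{-1}\cdot\big(\tfrac12(q,\theta q)-\overline{(q,\theta q)}/2\big)\|\cdot\|$... more simply $\sigma\,\mathrm{i}(q,\theta q)\cdot(-2\mathrm{i})=0$-type relation, forcing $\sigma\,(q,\theta q)=0$. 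Hence either $\sigma=0$, i.e. $\zeta=0$, or $(q,\theta q)=0$, which is exactly $(\dot Q,\theta Q)=\sigma(q,\theta q)e^{2\sigma t}=0$; and one checks directly that plugging $\omega=0$ into \eqref{tvirs1} produces $\mathcal T=\mathcal V$, which generically fails in the overdamped regime, so the identity genuinely breaks down.

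The main obstacle I anticipate is purely organizational rather than conceptual: keeping the real/imaginary decomposition straight when both the complex frequency $\zeta$ and the skew-symmetric pairing $(q,\theta q)$ contribute imaginary parts, and making sure the elimination of the dissipative term is carried out in the right combination so that $\beta$, $R$, and $\mathcal R$ drop out entirely and the surviving coefficient is exactly $(\operatorname{Im}\zeta/\operatorname{Re}\zeta)^2$ and not some other rational function of $\omega,\sigma$. A useful sanity check at the end is the conservative case $\theta=0$, $\beta=0$: then $\sigma=0$ and \eqref{tvirs1} collapses to $\mathcal T=\mathcal V$, recovering the classical equipartition statement cited from \cite{Goldstein}.
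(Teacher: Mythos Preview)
Your approach is correct and is essentially the same as the paper's: both pair the equation of motion with $Q$ (the paper packages this as computing the time derivative of the virial $\mathrm{G}=(\alpha\dot Q,Q)$), combine with the energy balance $\dot{\mathcal H}=-2\mathcal R$ to eliminate the dissipative term, and then separate real and imaginary parts. The ``organizational obstacle'' you anticipate is exactly what the virial device streamlines --- writing $\mathrm{G}=\frac{2}{-\mathrm{i}\zeta}\mathcal T_0$ and computing $\dot{\mathrm{G}}$ two ways keeps the bookkeeping of the complex frequency and the skew-symmetric $\theta$-pairing tidy and produces the coefficient $(\operatorname{Im}\zeta/\operatorname{Re}\zeta)^2$ in one clean step, whereas your direct substitution requires tracking the same quantities by hand; your treatment of the degenerate case $\operatorname{Re}\zeta=0$ is correct in substance but the write-up is garbled and should be cleaned up to simply read: the imaginary part of $(Q,C(\zeta,\beta)Q)=0$ reduces to $2\sigma\operatorname{Im}(Q,\theta Q)=0$, hence $\sigma(Q,\theta Q)=0$.
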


\begin{corollary}
[energy equipartition]\label{tvir_eqp}For systems with $\theta=0$, if
$Q\left(  t\right)  =qe^{-\mathrm{i}\zeta t}$ is an eigenmode of the
Lagrangian system (\ref{radis2a}) with $\operatorname{Re}\zeta\not =0$, then
the following identity holds%
\begin{equation}
\mathcal{T}\left(  \dot{Q},Q\right)  =\mathcal{V}\left(  \dot{Q},Q\right)  .
\label{tvir_eqps1}%
\end{equation}
In other words, for the oscillatory eigenmodes there is an equipartition of
the system energy, i.e., $\mathcal{H}=\mathcal{T}+\mathcal{V}$, between their
kinetic energy $\mathcal{T}$ and their potential energy $\mathcal{V}$.
\end{corollary}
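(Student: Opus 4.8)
The quickest route is to read this off from the virial theorem (Theorem~\ref{tvir}): when $\theta=0$ the extra term $\operatorname{Re}\left(\dot{Q},\theta Q\right)$ on the right-hand side of (\ref{tvirs1}) is identically zero, so for an eigenmode with $\operatorname{Re}\zeta\neq0$ the virial identity (\ref{tvirs1}) collapses to $\mathcal{T}(\dot{Q},Q)=\mathcal{V}(\dot{Q},Q)$, which is exactly (\ref{tvir_eqps1}); the equipartition statement then follows because $\mathcal{H}=\mathcal{T}+\mathcal{V}$ forces $\mathcal{T}=\mathcal{V}=\tfrac12\mathcal{H}$. In this sense there is no genuine obstacle — the corollary is a pure specialization of the theorem, and all the real work sits in Theorem~\ref{tvir}.

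For a self-contained argument I would prove this special case directly, as follows. Substitute $Q(t)=qe^{-\mathrm{i}\zeta t}$ into the equations of motion (\ref{radis2a}) with $F=0$, which for $\theta=0$ read $\alpha\ddot{Q}+\beta R\dot{Q}+\eta Q=0$; using $\dot{Q}=-\mathrm{i}\zeta Q$ and $\ddot{Q}=-\zeta^{2}Q$ this becomes the algebraic identity $-\zeta^{2}\alpha q-\mathrm{i}\zeta\beta Rq+\eta q=0$. Pairing this with $q$ in the inner product of $\mathbb{C}^{N}$ and setting $a=(q,\alpha q)$, $e=(q,\eta q)$, $\rho=(q,Rq)$, all of which are real by the symmetry and semidefiniteness hypotheses (\ref{dlag2}), (\ref{radis1}) (with $a>0$ since $\alpha>0$ and $q\neq0$), I obtain $-a\zeta^{2}-\mathrm{i}\beta\rho\,\zeta+e=0$. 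Writing $\zeta=x+\mathrm{i}y$ with $x=\operatorname{Re}\zeta\neq0$ and separating real and imaginary parts, the imaginary part gives $-2axy-\beta\rho x=0$, hence $\beta\rho=-2ay$ (this is where the hypothesis $\operatorname{Re}\zeta\neq0$ is used, to divide by $x$), and substituting this into the real part $-a(x^{2}-y^{2})+\beta\rho y+e=0$ and simplifying yields $e=a(x^{2}+y^{2})=a\left|\zeta\right|^{2}$.

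Finally, for this eigenmode $\mathcal{T}(\dot{Q},Q)=\tfrac12(\dot{Q},\alpha\dot{Q})=\tfrac12\left|\zeta\right|^{2}a$ and $\mathcal{V}(\dot{Q},Q)=\tfrac12(Q,\eta Q)=\tfrac12 e$ from the energetic formulas (\ref{radis4_0}) with $\theta=0$; moreover every energetic quantity of the mode equals $e^{2\operatorname{Im}\zeta\,t}$ times its value at $t=0$, so it suffices to verify the identity at $t=0$. Thus the relation $e=a\left|\zeta\right|^{2}$ derived above is precisely $\mathcal{T}=\mathcal{V}$, and the equipartition interpretation follows as in the first paragraph. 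The only point needing care — and the nearest thing to an ``obstacle'' — is the bookkeeping: confirming that $a,e,\rho$ are real so that ``taking imaginary parts'' is legitimate, and using the correct complex-state energetic expressions; both are immediate from the standing hypotheses on $\alpha,\eta,R$.
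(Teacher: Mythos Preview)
Your first paragraph is precisely the paper's proof: the corollary is stated to follow immediately from Theorem~\ref{tvir}, and setting $\theta=0$ in (\ref{tvirs1}) gives (\ref{tvir_eqps1}) at once. The additional direct computation you give is correct and is essentially the $\theta=0$ specialization of the paper's own proof of Theorem~\ref{tvir} (compare the derivation of (\ref{vir10})--(\ref{vir14}) there), so there is no real divergence in method.
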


\section{Spectral analysis of the system eigenmodes\label{sspan}}

In this section we study time-harmonic solutions to the Euler-Lagrange
equation (\ref{radis2a}) which constitutes a subject of the spectral theory.
An important subject of this section is the study of the relations between
standard spectral theory and the quadratic pencil formulation as it arises
naturally as the time Fourier transformation of the Euler-Lagrange evolution equation.

\subsection{Standard versus pencil formulations of the spectral problems
\label{stvspen}}

As introduced in Section \ref{sLagrModelEigenmQFac}, the eigenmodes of the
Lagrangian system are nonzero solutions of the ODEs (\ref{radis2a})\ with
$F=0$ having the form $Q\left(  t\right)  =qe^{-\mathrm{i}\zeta t}$. For a
fixed $\beta$, these modes correspond to the solutions of the quadratic
eigenvalue problem (QEP)%
\begin{equation}
C(\zeta,\beta)q=0,\qquad q\not =0 \label{pfsp1}%
\end{equation}
for the quadratic matrix pencil in $\zeta$%
\begin{equation}
C(\zeta,\beta)=\zeta^{2}\alpha+\left(  2\theta+\beta R\right)  \mathrm{i}%
\zeta-\eta. \label{pfsp2}%
\end{equation}
The set of eigenvalues (spectrum) of the pencil $C\left(  \cdot,\beta\right)
$ is the set%
\begin{equation}
\sigma\left(  C\left(  \cdot,\beta\right)  \right)  =\left\{  \zeta\in%
\mathbb{C}
:\det C(\zeta,\beta)=0\right\}  , \label{pfsp3}%
\end{equation}
which are exactly those values $\zeta$ for which a solution to the QEP
(\ref{pfsp1}) exists. The spectral theory of polynomial operator pencils
\cite{Mar88} can be applied to study the eigenmodes but it has its
disadvantages such as being more complicated than standard spectral theory.
Thus an alternative approach to the spectral theory is desirable.

Often the alternative approach is to use the Hamiltonian system and consider
its eigenmodes, that is, the nonzero solutions of the Hamiltonian equations
(\ref{radis3}) with $F=0$ having the form $u\left(  t\right)  =ue^{-\mathrm{i}%
\zeta t}$. These modes correspond to the solutions of the eigenvalue problem%
\begin{equation}
Mu=-\mathrm{i}\zeta u,\qquad u\not =0 \label{pfsp4}%
\end{equation}
for the matrix%
\begin{equation}
M\left(  \beta\right)  =\left(  J-\left[
\begin{array}
[c]{ll}%
\beta R & 0\\
0 & 0
\end{array}
\right]  \right)  M_{\mathrm{H}}. \label{pfsp5}%
\end{equation}
An advantage to this approach is the simple correspondence via (\ref{dlag6}),
(\ref{radis3}) between the set of modes of the two systems, namely, the
eigenmodes of the Hamiltonian system are solutions of (\ref{radis3}) having
the block form $u\left(  t\right)  =\left[
\begin{array}
[c]{l}%
p\\
q
\end{array}
\right]  e^{-\mathrm{i}\zeta t}$ in which $Q\left(  t\right)  =qe^{-\mathrm{i}%
\zeta t}$ an eigenmode of the Lagrangian system and $p=\left(  -\mathrm{i}%
\zeta\alpha+\theta\right)  q$. In particular, this means the matrix
$\mathrm{i}M\left(  \beta\right)  $ and the pencil $C\left(  \cdot
,\beta\right)  $ have the same eigenvalues and hence the same spectrum, i.e.,%
\begin{equation}
\sigma\left(  \mathrm{i}M\left(  \beta\right)  \right)  =\sigma\left(
C\left(  \cdot,\beta\right)  \right)  . \label{pfsp6}%
\end{equation}
A major disadvantage of this approach is that $M\left(  \beta\right)  $ is a
non-self-adjoint matrix such that the standard theory of self-adjoint or
dissipative operators does not apply without further transformation of the
system, even in the absence of losses, i.e., $\beta=0$. Instead, in this case
it is the Krein spectral theory \cite[\S 42]{Arnold}, \cite{Yak},
\cite{YakSta} that is often used. But this theory is far more complex than the
standard spectral theory and much harder to apply.

Our approach to these spectral problems which overcomes the disadvantages in
the pencil or Krein spectral theory is to use the canonical system and
consider its eigenmodes, that is, the nonzero solutions of the canonical
evolution equation (\ref{radis8}) with $f=0$ having the form $v\left(
t\right)  =we^{-\mathrm{i}\zeta t}$. These modes correspond to the solutions
of the eigenvalue problem%
\begin{equation}
A\left(  \beta\right)  w=\zeta w,\text{ \ \ }w\not =0 \label{pfsp7}%
\end{equation}
for the system operator $A\left(  \beta\right)  =\Omega-\mathrm{i}\beta B$
with Hermitian matrices $\Omega$, $B$ with $B\geq0$. The advantage of this
approach is that the standard spectral theory can be used since $-\mathrm{i}%
A\left(  \beta\right)  $ is a dissipative operator and when losses are absent
$A(0)=\Omega$ is self-adjoint. This is a serious advantage since the spectral
theory is significantly simpler and allows the usage of the deep and effective
results from perturbation theory such as those developed in \cite{FigWel1} to
study the modes of such canonical evolution equations having the general form
(\ref{radis8}). This is the approach we take in this paper to study spectral
symmetries and the dissipative properties of the eigenmodes of the Lagrangian
system such as the modal dichotomy and overdamping phenomenon.

\textbf{Correspondence between spectral problems. }We now conclude this
section by summarizing the correspondence between the two main spectral
problems of this paper, namely, between the standard eigenvalue problem
(\ref{pfsp7}) and the quadratic eigenvalue problem (\ref{pfsp1}). We do this
in the next corollary which uses the following proposition that tells us the
characteristic matrix of the system operator $\zeta\mathbf{1}-A\left(
\beta\right)  $ can be factored in terms of the quadratic matrix pencil
$C\left(  \zeta,\beta\right)  $.

First we introduce some notation that will be useful. \ The Hilbert space $H=%
\mathbb{C}
^{2N}$ with standard inner product $\left(  \cdot,\cdot\right)  $ can be
decomposed as $H=H_{\mathrm{p}}\oplus H_{\mathrm{q}}$ into the orthogonal
subspaces $H_{\mathrm{p}}=%
\mathbb{C}
^{N}$, $H_{\mathrm{q}}=%
\mathbb{C}
^{N}$ with orthogonal matrix projections%
\begin{equation}
P_{\mathrm{p}}=\left[
\begin{array}
[c]{cc}%
\mathbf{1} & 0\\
0 & 0
\end{array}
\right]  ,\text{ \ \ }P_{\mathrm{q}}=\left[
\begin{array}
[c]{cc}%
0 & 0\\
0 & \mathbf{1}%
\end{array}
\right]  . \label{pfsp8a}%
\end{equation}
In particular, the matrices $\Omega$, $B$, and $A\left(  \beta\right)  $
defined in (\ref{radis8})--(\ref{radis8b}) are block matrices already
partitioned with respect to the decomposition $H=H_{\mathrm{p}}\oplus
H_{\mathrm{q}}$ and any vector $w\in H$ can be represented uniquely in the
block form%
\begin{equation}
w=\left[
\begin{array}
[c]{c}%
\varphi\\
\psi
\end{array}
\right]  ,\text{ \ \ where }\varphi=P_{\mathrm{p}}w,\text{ \ \ }%
\psi=P_{\mathrm{q}}w. \label{pfsp8b}%
\end{equation}
Then with respect to this decomposition we have the following results.

\begin{proposition}
\label{ppfsp}If $\zeta\not =0$ then
\begin{equation}
\zeta\mathbf{1}-A\left(  \beta\right)  =\left[
\begin{array}
[c]{cc}%
K_{\mathrm{p}} & \zeta^{-1}\mathrm{i}\Phi^{\mathrm{T}}\\
0 & \mathbf{1}%
\end{array}
\right]  \left[
\begin{array}
[c]{cc}%
\zeta^{-1}\mathbf{1} & 0\\
0 & \zeta\mathbf{1}%
\end{array}
\right]  \left[
\begin{array}
[c]{cc}%
C(\zeta,\beta) & 0\\
0 & \mathbf{1}%
\end{array}
\right]  \left[
\begin{array}
[c]{cc}%
K_{\mathrm{p}}^{\mathrm{T}} & 0\\
-\zeta^{-1}\mathrm{i}\Phi & \mathbf{1}%
\end{array}
\right]  . \label{pfsp8}%
\end{equation}

\end{proposition}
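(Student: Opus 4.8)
The plan is to prove the factorization (\ref{pfsp8}) by multiplying out the right-hand side into its four $N\times N$ blocks (with respect to $H=H_{\mathrm p}\oplus H_{\mathrm q}$) and matching them against those of $\zeta\mathbf{1}-A(\beta)$. Since $\zeta\neq 0$ by hypothesis, every factor of $\zeta^{-1}$ is legitimate. First I would collapse the two inner factors: multiplying the middle diagonal factor $\operatorname{diag}(\zeta^{-1}\mathbf{1},\zeta\mathbf{1})$ into the pencil block $\operatorname{diag}(C(\zeta,\beta),\mathbf{1})$ gives $\operatorname{diag}(\zeta^{-1}C(\zeta,\beta),\zeta\mathbf{1})$, and multiplying this on the right by the lower-triangular factor yields a matrix with blocks $\zeta^{-1}C(\zeta,\beta)K_{\mathrm p}^{\mathrm T}$, $0$, $-\mathrm i\Phi$, $\zeta\mathbf{1}$. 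Multiplying on the left by the upper-triangular factor then produces a $2N\times2N$ matrix whose $(1,2)$, $(2,1)$, and $(2,2)$ blocks are $\mathrm i\Phi^{\mathrm T}$, $-\mathrm i\Phi$, and $\zeta\mathbf{1}$. By the block forms of $\Omega$ and $B$ in (\ref{radis8a}), these three blocks already agree with those of $\zeta\mathbf{1}-A(\beta)=\zeta\mathbf{1}-\Omega+\mathrm i\beta B$.

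The only block requiring computation is the $(1,1)$ entry, which comes out as $\zeta^{-1}\bigl(K_{\mathrm p}C(\zeta,\beta)K_{\mathrm p}^{\mathrm T}+\Phi^{\mathrm T}\Phi\bigr)$, and I must verify this equals $\zeta\mathbf{1}-\Omega_{\mathrm p}+\mathrm i\beta\mathsf{\tilde{R}}$, the $(1,1)$ block of $\zeta\mathbf{1}-A(\beta)$. Substituting $C(\zeta,\beta)=\zeta^{2}\alpha+(2\theta+\beta R)\mathrm i\zeta-\eta$ from (\ref{pfsp2}) and expanding, I would invoke four elementary identities that follow from the definitions (\ref{radis6a1}), (\ref{radis8b}) together with $K_{\mathrm p}=K_{\mathrm p}^{\mathrm T}$ and $K_{\mathrm q}=K_{\mathrm q}^{\mathrm T}$: (i) $K_{\mathrm p}\alpha K_{\mathrm p}^{\mathrm T}=\mathbf{1}$, because $K_{\mathrm p}=\sqrt{\alpha}^{-1}$; (ii) $2K_{\mathrm p}\theta K_{\mathrm p}^{\mathrm T}=\mathrm i\Omega_{\mathrm p}$, directly from $\Omega_{\mathrm p}=-\mathrm i2K_{\mathrm p}\theta K_{\mathrm p}^{\mathrm T}$; (iii) $K_{\mathrm p}RK_{\mathrm p}^{\mathrm T}=\mathsf{\tilde{R}}$, by definition; and (iv) $\Phi^{\mathrm T}\Phi=K_{\mathrm p}K_{\mathrm q}^{\mathrm T}K_{\mathrm q}K_{\mathrm p}^{\mathrm T}=K_{\mathrm p}\eta K_{\mathrm p}^{\mathrm T}$, because $K_{\mathrm q}^{2}=\eta$. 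Using (i)--(iv): the $\zeta^{2}$-term of $K_{\mathrm p}C(\zeta,\beta)K_{\mathrm p}^{\mathrm T}$ contributes $\zeta^{2}\mathbf{1}$; the $2\theta$-term contributes $\mathrm i\zeta(\mathrm i\Omega_{\mathrm p})=-\zeta\Omega_{\mathrm p}$; the $\beta R$-term contributes $\mathrm i\zeta\beta\mathsf{\tilde{R}}$; and the $-\eta$-term contributes $-K_{\mathrm p}\eta K_{\mathrm p}^{\mathrm T}$, which cancels exactly against $\Phi^{\mathrm T}\Phi=K_{\mathrm p}\eta K_{\mathrm p}^{\mathrm T}$. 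Dividing the resulting $\zeta^{2}\mathbf{1}-\zeta\Omega_{\mathrm p}+\mathrm i\zeta\beta\mathsf{\tilde{R}}$ by $\zeta$ gives precisely $\zeta\mathbf{1}-\Omega_{\mathrm p}+\mathrm i\beta\mathsf{\tilde{R}}$, completing the match.

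There is essentially no obstacle here: the statement is an algebraic identity and the proof is bookkeeping. The one place that deserves a moment's attention is identity (iv) and the cancellation it produces --- it is exactly the fact that $K_{\mathrm q}=\sqrt{\eta}$ is the (positive semidefinite) square root, so $K_{\mathrm q}^{\mathrm T}K_{\mathrm q}=\eta$, that removes the $-\eta$ term of $C(\zeta,\beta)$ after conjugation by $K_{\mathrm p}$ and leaves a polynomial in $\zeta$ of degree only one, consistent with the $\zeta^{-1}$ pulled out of the middle diagonal factor. Once the $(1,1)$ block is verified, all four blocks of the right-hand side of (\ref{pfsp8}) coincide with those of $\zeta\mathbf{1}-A(\beta)$, and the proposition follows.
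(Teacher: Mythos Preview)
Your proof is correct. The block-by-block verification goes through exactly as you describe, and the four identities (i)--(iv) you isolate are precisely the algebraic facts that make the $(1,1)$ block come out right.

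The paper's own proof runs in the opposite direction: rather than multiplying out the right-hand side and matching blocks, it starts from the block form
\[
\zeta\mathbf{1}-A(\beta)=\left[\begin{array}{cc}\zeta\mathbf{1}-\Omega_{\mathrm p}+\mathrm i\beta\mathsf{\tilde R} & \mathrm i\Phi^{\mathrm T}\\ -\mathrm i\Phi & \zeta\mathbf{1}\end{array}\right]
\]
and applies the Aitken block-diagonalization formula (Schur complement with respect to the invertible lower-right block $\zeta\mathbf{1}$) to \emph{produce} the triangular--diagonal--triangular factorization. The Schur complement is $\zeta\mathbf{1}-\Omega_{\mathrm p}+\mathrm i\beta\mathsf{\tilde R}-\zeta^{-1}\Phi^{\mathrm T}\Phi$, which the paper then rewrites as $\zeta^{-1}K_{\mathrm p}C(\zeta,\beta)K_{\mathrm p}^{\mathrm T}$ using the same identities you use. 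So the underlying algebra is identical; the paper's approach has the advantage of explaining \emph{where} the factorization comes from (any block matrix with invertible corner admits such a decomposition), while your direct verification is more self-contained and does not require invoking the Aitken formula from Appendix~\ref{apxsc}.
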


\begin{corollary}
[spectral equivalence]\label{cpfsp}For any $\zeta\in%
\mathbb{C}
$,
\begin{equation}
\det\left(  \zeta\mathbf{1}-A\left(  \beta\right)  \right)  =\frac{\det
C(\zeta,\beta)}{\det\alpha}. \label{cpfsp1}%
\end{equation}
In particular, the system operator $A\left(  \beta\right)  $ and quadratic
matrix pencil $C(\zeta,\beta)$ have the same spectrum, i.e.,
\begin{equation}
\sigma\left(  A\left(  \beta\right)  \right)  =\sigma\left(  C\left(
\cdot,\beta\right)  \right)  . \label{cpfsp2}%
\end{equation}
Moreover, if $\zeta\not =0$ then the following statements are true:

\begin{enumerate}
\item If $A\left(  \beta\right)  w=\zeta w$ and $w\not =0$ then
\begin{equation}
w=\left[
\begin{array}
[c]{c}%
-\mathrm{i}\zeta\sqrt{\alpha}q\\
\sqrt{\eta}q
\end{array}
\right]  ,\text{ \ \ where }C(\zeta,\beta)q=0,\text{ \ \ }q\not =0.
\label{cpfsp3}%
\end{equation}

\item If $C(\zeta,\beta)q=0$ and $q\not =0$ then
\begin{equation}
A\left(  \beta\right)  w=\zeta w\text{, \ \ where }w=\left[
\begin{array}
[c]{c}%
-\mathrm{i}\zeta\sqrt{\alpha}q\\
\sqrt{\eta}q
\end{array}
\right]  \not =0. \label{cpfsp4}%
\end{equation}

\end{enumerate}
\end{corollary}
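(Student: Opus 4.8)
The plan is to derive all parts of the corollary from the factorization (\ref{pfsp8}) in Proposition \ref{ppfsp}, which is valid for $\zeta\neq 0$, and to dispatch the remaining case $\zeta=0$ by a continuity argument.

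First I would establish the determinant identity (\ref{cpfsp1}) for $\zeta\neq 0$ by taking determinants of both sides of (\ref{pfsp8}). The first and last factors are block triangular with diagonal blocks $K_{\mathrm p}$ (resp. $K_{\mathrm p}^{\mathrm T}$) and $\mathbf 1$, so each has determinant $\det K_{\mathrm p}=(\det\alpha)^{-1/2}$ since $K_{\mathrm p}=\sqrt\alpha^{-1}$; the block-diagonal factor $\mathrm{diag}(\zeta^{-1}\mathbf 1,\zeta\mathbf 1)$ has determinant $\zeta^{-N}\zeta^{N}=1$; and $\mathrm{diag}(C(\zeta,\beta),\mathbf 1)$ has determinant $\det C(\zeta,\beta)$. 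Multiplying these yields $\det(\zeta\mathbf 1-A(\beta))=\det C(\zeta,\beta)/\det\alpha$. Since both sides are polynomials in $\zeta$ (hence continuous on all of $\mathbb C$) agreeing on the infinite set $\mathbb C\setminus\{0\}$, they agree for every $\zeta\in\mathbb C$, in particular at $\zeta=0$. The spectral equality (\ref{cpfsp2}) is then immediate: as $\det\alpha\neq 0$ (because $\alpha>0$), the zero sets of $\zeta\mapsto\det(\zeta\mathbf 1-A(\beta))$ and of $\zeta\mapsto\det C(\zeta,\beta)$ coincide, and these are by definition $\sigma(A(\beta))$ and $\sigma(C(\cdot,\beta))$.

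For the eigenvector correspondence, fix $\zeta\neq 0$. The outer triangular factors and the factor $\mathrm{diag}(\zeta^{-1}\mathbf 1,\zeta\mathbf 1)$ in (\ref{pfsp8}) are all invertible (here $K_{\mathrm p}>0$), so for $w$ with block components $\varphi=P_{\mathrm p}w$, $\psi=P_{\mathrm q}w$ as in (\ref{pfsp8b}) we have $(\zeta\mathbf 1-A(\beta))w=0$ if and only if applying the rightmost factor of (\ref{pfsp8}) to $w$ and then $\mathrm{diag}(C(\zeta,\beta),\mathbf 1)$ gives zero, i.e. if and only if $C(\zeta,\beta)K_{\mathrm p}^{\mathrm T}\varphi=0$ and $\psi=\zeta^{-1}\mathrm i\Phi\varphi$. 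Given an eigenvector $w\neq 0$ of $A(\beta)$, I would put $q:=\tfrac{\mathrm i}{\zeta}\sqrt\alpha^{-1}\varphi$, so that $K_{\mathrm p}^{\mathrm T}\varphi=\sqrt\alpha^{-1}\varphi=-\mathrm i\zeta q$; then $C(\zeta,\beta)q=0$, and since $\Phi=K_{\mathrm q}K_{\mathrm p}^{\mathrm T}=\sqrt\eta\,\sqrt\alpha^{-1}$ one gets $\psi=\zeta^{-1}\mathrm i\Phi\varphi=\sqrt\eta\,q$, which is exactly (\ref{cpfsp3}); also $q\neq 0$ since $q=0$ would force $w=0$. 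Conversely, given $C(\zeta,\beta)q=0$ with $q\neq 0$, I would take $w$ as in (\ref{cpfsp4}); then $w\neq 0$ because its first block $-\mathrm i\zeta\sqrt\alpha q$ is nonzero ($\sqrt\alpha>0$ invertible, $\zeta\neq 0$, $q\neq 0$), and a direct substitution — applying the rightmost factor of (\ref{pfsp8}) to $w$ produces the vector with blocks $-\mathrm i\zeta q$ and $0$, and then $\mathrm{diag}(C(\zeta,\beta),\mathbf 1)$ sends it to the vector with blocks $-\mathrm i\zeta C(\zeta,\beta)q=0$ and $0$ — shows $(\zeta\mathbf 1-A(\beta))w=0$, i.e. $A(\beta)w=\zeta w$.

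All the computations are routine linear algebra. The only points needing a little care are the scalar normalization $q=\tfrac{\mathrm i}{\zeta}\sqrt\alpha^{-1}\varphi$, chosen so that the eigenvector of $A(\beta)$ comes out in precisely the form (\ref{cpfsp3}) rather than a scalar multiple of it (the pencil eigenvector being determined only up to a scalar), and the fact that (\ref{pfsp8}) is unavailable at $\zeta=0$, so the determinant identity there and the question of whether $0$ lies in the spectrum must be argued by continuity rather than directly. I do not expect any genuine obstacle beyond this bookkeeping.
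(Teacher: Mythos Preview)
Your proof is correct and follows essentially the same approach as the paper: both derive everything from the factorization (\ref{pfsp8}) in Proposition \ref{ppfsp}, with the paper simply noting the inverse of the rightmost triangular factor while you work through the kernel computation directly. Your explicit continuity argument at $\zeta=0$ for the determinant identity is a point the paper glosses over, so your write-up is in fact a bit more careful there.
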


\subsection{On the spectrum of the system operator\label{sspso}}

In Section \ref{stvspen} it was shown that the study of the eigenmodes of the
Lagrangian system (\ref{radis2a}) reduces to the quadratic eigenvalue problem
(\ref{pfsp1}) and this motivates a study of the spectrum $\sigma\left(
C\left(  \cdot,\beta\right)  \right)  $ of the quadratic matrix pencil
$C(\zeta,\beta)$ in (\ref{pfsp2}). Corollary \ref{cpfsp} makes it clear that
we can instead study the eigenmodes of the canonical system (\ref{radis8})
and, in particular, the system operator $A\left(  \beta\right)  $ spectrum
satisfies $\sigma\left(  A\left(  \beta\right)  \right)  =\sigma\left(
C\left(  \cdot,\beta\right)  \right)  $. The purpose of this section is to
give a detailed analysis of the set $\sigma\left(  A\left(  \beta\right)
\right)  $.

Recall, the system operator $A\left(  \beta\right)  =\Omega-\mathrm{i}\beta B
$, $\beta\geq0$ from (\ref{radis8}) with $2N\times2N$ matrices $\Omega$, $B$
has the fundamental properties (\ref{radis10a}), (\ref{radis10b})%
\begin{gather}
-\operatorname{Im}A\left(  \beta\right)  =\beta B\geq0,\text{ \ }%
\operatorname{Re}A\left(  \beta\right)  =\Omega,\text{ \ }A\left(
\beta\right)  ^{\ast}=-A\left(  \beta\right)  ^{\mathrm{T}},\label{sasp1}\\
0<\operatorname{rank}B=N_{R}\leq N, \label{ebmd1}%
\end{gather}
where $N_{R}=\operatorname{rank}R$. These properties are particularly
important in describing the spectrum $\sigma\left(  A\left(  \beta\right)
\right)  $ of the system operator $A\left(  \beta\right)  $.

For instance, the next proposition on the spectrum for nondissipative
($\beta=0$) Lagrangian systems (\ref{radis2a}) follows immediately from these
properties which would otherwise not be exactly obvious for gyroscopic systems
(i.e., $\theta\not =0$).

\begin{proposition}
[real eigenfrequencies]For nondissipative Lagrangian systems (\ref{radis2a}),
that is when $\beta=0 $, all the eigenfrequencies $\zeta$ are real, and
consequently any eigenmode evolution is of the form $Q\left(  t\right)
=qe^{-\mathrm{i}\zeta t} $ with $\operatorname{Im}\zeta=0$.
\end{proposition}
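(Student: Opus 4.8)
The plan is to exploit the structural identity $A(0)^\ast = -A(0)^{\mathrm{T}} = \Omega^\ast = \Omega$, i.e.\ that when $\beta = 0$ the system operator $A(0) = \Omega$ is Hermitian (self-adjoint) on $H = \mathbb{C}^{2N}$. This is immediate from \eqref{radis8}--\eqref{radis10}: $A(0) = \Omega$ with $\Omega = \Omega^\ast$. Since a Hermitian matrix has only real eigenvalues, $\sigma(A(0)) \subseteq \mathbb{R}$. By the spectral equivalence of Corollary \ref{cpfsp}, $\sigma\left(A(0)\right) = \sigma\left(C(\cdot,0)\right)$, so every eigenfrequency $\zeta$ of the Lagrangian system \eqref{radis2a} with $\beta = 0$ — being, by definition, an eigenvalue of the quadratic pencil $C(\zeta,0)$ via the QEP \eqref{pfsp1} — is real. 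Consequently $\operatorname{Im}\zeta = 0$ for every eigenmode, and each eigenmode evolution is a purely oscillatory $Q(t) = qe^{-\mathrm{i}\zeta t}$, as claimed.

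The only substantive point worth spelling out is \emph{why} $\Omega$ is Hermitian despite the presence of the gyroscopic term $\theta$, since for the Hamiltonian operator $M(0) = JM_{\mathrm{H}}$ this is not manifest — $JM_{\mathrm{H}}$ is not self-adjoint even when $\beta = 0$, which is precisely the reason Krein theory is ordinarily invoked. Here the factorization $M_{\mathrm{H}} = K^{\mathrm{T}}K$ of \eqref{radis5}--\eqref{radis6a1} and the resulting similarity $\Omega = \mathrm{i}KJK^{\mathrm{T}}$ of \eqref{radis8a} do the work: using $J^\ast = J^{\mathrm{T}} = -J$ from \eqref{dlag9a} and that $K$ has real entries, one computes $\Omega^\ast = \left(\mathrm{i}KJK^{\mathrm{T}}\right)^\ast = -\mathrm{i}\,\overline{K}\,\overline{J^{\mathrm{T}}}\,\overline{K}^{\,\mathrm{T}} \cdot(\text{transpose conventions}) = -\mathrm{i}K J^{\mathrm{T}} K^{\mathrm{T}} = \mathrm{i}KJK^{\mathrm{T}} = \Omega$. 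This is already recorded in \eqref{radis10}, so in the write-up I would simply cite it rather than recompute. The block form $\Omega = \begin{bmatrix}\Omega_{\mathrm{p}} & -\mathrm{i}\Phi^{\mathrm{T}}\\ \mathrm{i}\Phi & 0\end{bmatrix}$ with $\Omega_{\mathrm{p}} = -2\mathrm{i}K_{\mathrm{p}}\theta K_{\mathrm{p}}^{\mathrm{T}}$ Hermitian (since $\theta^{\mathrm{T}} = -\theta$) makes the gyroscopic contribution visibly self-adjoint.

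There is essentially no obstacle here: the proposition is a direct corollary of machinery already erected in the excerpt, and its whole point is rhetorical — to advertise that the canonical-form reduction tames the gyroscopic case for free. The one thing to be careful about is the degenerate possibility $\zeta = 0$ and the caveat "excluding the case $Ku = 0$" noted after \eqref{radis13}: a Lagrangian eigenmode with $\zeta = 0$ maps to $v = Ku$ which could vanish if $u \in \ker K$. But $\zeta = 0$ is trivially real, so this edge case does not threaten the conclusion; the spectral equivalence \eqref{cpfsp2} holds for all $\zeta \in \mathbb{C}$ including $0$ by \eqref{cpfsp1}, so no separate argument is needed. Thus the full proof is: (i) $A(0) = \Omega$ is Hermitian by \eqref{radis10}; (ii) hence $\sigma(A(0)) \subset \mathbb{R}$; (iii) by Corollary \ref{cpfsp}, the Lagrangian eigenfrequencies coincide with $\sigma(A(0))$ and are therefore real; (iv) reality of $\zeta$ gives $Q(t) = qe^{-\mathrm{i}\zeta t}$ with $\operatorname{Im}\zeta = 0$.
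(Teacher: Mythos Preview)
Your proof is correct and follows essentially the same route as the paper: observe that $A(0)=\Omega$ is Hermitian (cited from \eqref{radis10}), so $\sigma(A(0))\subset\mathbb{R}$, and then invoke Corollary~\ref{cpfsp} to conclude that every Lagrangian eigenfrequency lies in $\sigma(A(0))$ and is therefore real. Your additional commentary on why the gyroscopic term does not spoil self-adjointness and on the $\zeta=0$ edge case is accurate but not needed for the argument.
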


\begin{proof}
If $\beta=0$ (i.e., no dissipation) then $A\left(  0\right)  =\Omega$ is a
Hermitian matrix. Thus, the spectrum $\sigma\left(  A\left(  0\right)
\right)  $ is a subset of $%
\mathbb{R}
$. Hence, if $Q\left(  t\right)  =qe^{-\mathrm{i}\zeta t}$ is an eigenmode of
the Lagrangian systems (\ref{radis2a}) with $\beta=0$ then by Corollary
\ref{cpfsp} we have $\zeta\in\sigma\left(  A\left(  0\right)  \right)  $ and
so $\operatorname{Im}\zeta=0$. This completes the proof.
\end{proof}

In the next few sections we will give a deeper analysis of the spectral
properties of $A\left(  \beta\right)  $ including spectral symmetries in
Section \ref{sspsy} and the modal dichotomy in Sections \ref{sevmd},
\ref{smdhlr}. In order to do so we must first introduce some notation. In the
Hilbert space $H=%
\mathbb{C}
^{2N}$, denote by $b_{j}$, $j=1,\ldots,N_{R}$ the nonzero eigenvalues of $B$
(counting multiplicities) with the smallest denoted by%

\begin{equation}
b_{\text{min}}=\min_{1\leq j\leq N_{R}}b_{j}. \label{sasp2}%
\end{equation}
In particular, the spectrum of $B$ is
\begin{equation}
\sigma\left(  B\right)  =\left\{  b_{0},b_{1},\ldots,b_{N_{R}}\right\}  ,
\label{sasp2a}%
\end{equation}
where $b_{0}=0$.

Denote the largest eigenvalue of $\Omega$ by $\omega_{\text{max}}$. It follows
from the fact that $\Omega$ is a Hermitian matrix which is skew-symmetric that%

\begin{equation}
\omega_{\text{max}}=\left\Vert \Omega\right\Vert , \label{sasp3}%
\end{equation}
where $\left\Vert \cdot\right\Vert $ denotes the operator norm on square matrices.

\subsubsection{Spectral symmetry\label{sspsy}}

The next proposition describes the spectral symmetries of the system operator
$A$ which follow from the property $A\left(  \beta\right)  ^{\ast}=-A\left(
\beta\right)  ^{\mathrm{T}}$.

\begin{proposition}
[spectral symmetry]\label{pssym}The following statements are true:

\begin{enumerate}
\item The characteristic polynomial of $A\left(  \beta\right)  $ satisfies
\begin{equation}
\overline{\det\left(  -\overline{\zeta}\mathbf{1}-A\left(  \beta\right)
\right)  }=\det\left(  \zeta I-A\left(  \beta\right)  \right)  \label{pssym1}%
\end{equation}
for every $\zeta\in%
\mathbb{C}
$. \ In particular, the spectrum $\sigma\left(  A\right)  $ of the system
operator $A$ has the symmetry%
\begin{equation}
\sigma\left(  A\left(  \beta\right)  \right)  =-\overline{\sigma\left(
A\left(  \beta\right)  \right)  }\text{.} \label{pssym2}%
\end{equation}

\item If $w$ is an eigenvector of the system operator $A$ with corresponding
eigenvalue $\zeta$ then $\overline{w}$ is an eigenvector of $A$ with
corresponding eigenvalue $-\overline{\zeta}$.

\item If $\beta=0$ (i.e., no dissipation) then $\det\left(  -\zeta
\mathbf{1}-A\left(  0\right)  \right)  =$ $\det\left(  \zeta I-A\left(
0\right)  \right)  $ for every $\zeta\in%
\mathbb{C}
$.
\end{enumerate}
\end{proposition}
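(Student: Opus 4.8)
The plan is to derive all three parts from the single structural identity $A\left(\beta\right)^{\ast}=-A\left(\beta\right)^{\mathrm{T}}$ recorded in (\ref{sasp1}) and (\ref{radis10b}). The first thing I would point out is that, since $A^{\ast}=\overline{A}^{\mathrm{T}}$, this identity is equivalent (transpose both sides) to $\overline{A\left(\beta\right)}=-A\left(\beta\right)$, i.e.\ the matrix $A\left(\beta\right)$ has purely imaginary entries --- which is of course also visible directly from (\ref{radis8a})--(\ref{radis8b}), since $\Omega$ is purely imaginary ($\theta$ and $\Phi$ being real) while $B$ is real, so $-\mathrm{i}\beta B$ is purely imaginary. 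This reformulation is what turns the proposition into bookkeeping with determinants.

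For part 1, I would compute $\overline{\det\left(-\overline{\zeta}\mathbf{1}-A\left(\beta\right)\right)}$ by moving the conjugation inside the determinant (legitimate since $\det$ is a polynomial in the entries with real coefficients), obtaining $\det\left(-\zeta\mathbf{1}-\overline{A\left(\beta\right)}\right)$, then using $\overline{A\left(\beta\right)}=-A\left(\beta\right)$ to rewrite this as $\det\left(-\left(\zeta\mathbf{1}-A\left(\beta\right)\right)\right)$, and finally pulling the scalar $-1$ out of the $2N\times 2N$ determinant: this contributes $\left(-1\right)^{2N}=1$, so the expression equals $\det\left(\zeta\mathbf{1}-A\left(\beta\right)\right)$, which is exactly (\ref{pssym1}). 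The spectral symmetry (\ref{pssym2}) then follows: $\zeta$ is a root of the right-hand characteristic polynomial iff the conjugated left-hand side of (\ref{pssym1}) vanishes iff $-\overline{\zeta}\in\sigma\left(A\left(\beta\right)\right)$, and since $\zeta\mapsto-\overline{\zeta}$ is an involution this gives $\sigma\left(A\left(\beta\right)\right)=-\overline{\sigma\left(A\left(\beta\right)\right)}$.

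Part 2 I would handle by bare conjugation: from $A\left(\beta\right)w=\zeta w$ with $w\neq0$, conjugating gives $\overline{A\left(\beta\right)}\,\overline{w}=\overline{\zeta}\,\overline{w}$, and substituting $\overline{A\left(\beta\right)}=-A\left(\beta\right)$ yields $A\left(\beta\right)\overline{w}=-\overline{\zeta}\,\overline{w}$ with $\overline{w}\neq0$, which is the claim. For part 3, setting $\beta=0$ makes $A\left(0\right)=\Omega$, which carries the \emph{stronger} property $\Omega^{\mathrm{T}}=-\Omega$ from (\ref{radis10}); then $\det\left(-\zeta\mathbf{1}-\Omega\right)=\det\bigl(\left(-\zeta\mathbf{1}-\Omega\right)^{\mathrm{T}}\bigr)=\det\left(-\zeta\mathbf{1}+\Omega\right)=\left(-1\right)^{2N}\det\left(\zeta\mathbf{1}-\Omega\right)=\det\left(\zeta\mathbf{1}-\Omega\right)$, using $\det M=\det M^{\mathrm{T}}$ and once more the even dimension.

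There is no genuine obstacle here: the whole argument is elementary linear algebra built on $\det M=\det M^{\mathrm{T}}$, $\overline{\det M}=\det\overline{M}$, and the conjugation/skew-symmetry properties of $A\left(\beta\right)$. The one place to be careful is the sign produced by $\det\left(-M\right)=\left(-1\right)^{\dim}\det M$, which collapses to $+1$ only because the phase space $H=\mathbb{C}^{2N}$ is even-dimensional; in odd dimension both (\ref{pssym1}) and part 3 would acquire a spurious minus sign. The only conceptual step worth isolating at the start of the proof --- essentially a one-line lemma --- is the equivalence $A\left(\beta\right)^{\ast}=-A\left(\beta\right)^{\mathrm{T}}\iff\overline{A\left(\beta\right)}=-A\left(\beta\right)$, since all three statements follow immediately from it (and, for part 3, from the extra skew-symmetry $\Omega^{\mathrm{T}}=-\Omega$).
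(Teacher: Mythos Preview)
Your proposal is correct and takes essentially the same approach as the paper: the paper's proof is a one-sentence appeal to $A(\beta)^{\ast}=-A(\beta)^{\mathrm{T}}$ (equivalently $\overline{A(\beta)}=-A(\beta)$) together with elementary determinant properties, and you have simply spelled out those elementary steps in full.
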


\subsubsection{Eigenvalue bounds and modal dichotomy\label{sevmd}}

We will denote the discs centered at the eigenvalues of $-\mathrm{i}\beta B$
with radius $\omega_{\text{max}}$ by%
\begin{equation}
D_{j}\left(  \beta\right)  =\left\{  \zeta\in%
\mathbb{C}
:\left\vert \zeta-(-\mathrm{i}\beta b_{j})\right\vert \leq\omega_{\text{max}%
}\right\}  ,\text{ }0\leq j\leq N_{R}. \label{mdss0}%
\end{equation}
Two\ subsets of the spectrum $\sigma\left(  A\left(  \beta\right)  \right)  $
of the system operator $A=$ $\Omega-\mathrm{i}\beta B$ which play a central
role in our analysis are%
\begin{align}
\sigma_{0}\left(  A\left(  \beta\right)  \right)   &  =\sigma\left(  A\left(
\beta\right)  \right)  \cap D_{0}\left(  \beta\right)  ,\label{mdss}\\
\sigma_{1}\left(  A\left(  \beta\right)  \right)   &  =\sigma\left(  A\left(
\beta\right)  \right)  \cap\cup_{j=1}^{N_{R}}D_{j}\left(  \beta\right)
.\nonumber
\end{align}

\begin{proposition}
[eigenvalue bounds]\label{pevbd}The following statements are true:

\begin{enumerate}
\item The eigenvalues of the system operator $A\left(  \beta\right)  $ lie in
the union of the closed discs whose centers are the eigenvalues of
$-\mathrm{i}\beta B$ with radius $\omega_{\text{max}}$, that is,%
\begin{equation}
\sigma\left(  A\left(  \beta\right)  \right)  =\sigma_{0}\left(  A\left(
\beta\right)  \right)  \cup\sigma_{1}\left(  A\left(  \beta\right)  \right)  .
\label{pevbd1}%
\end{equation}

\item If $w\not =0$ and $A\left(  \beta\right)  w=\zeta w$ then%
\begin{equation}
\operatorname{Re}\zeta=\frac{\left(  w,\Omega w\right)  }{\left(  w,w\right)
},\text{ \ \ }-\operatorname{Im}\zeta=\beta\frac{\left(  w,Bw\right)
}{\left(  w,w\right)  }\geq0\text{.} \label{pevbd2}%
\end{equation}

\item If $\zeta$ is an eigenvalue of $A\left(  \beta\right)  $ and $\left\vert
\zeta\right\vert >\omega_{\text{max}}$ then
\begin{equation}
-\operatorname{Im}\zeta\geq\beta b_{\text{min}}-\omega_{\text{max}}\text{.}
\label{pevbd3}%
\end{equation}

\end{enumerate}
\end{proposition}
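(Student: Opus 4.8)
The plan is to prove the three statements of Proposition \ref{pevbd} essentially in reverse logical order: first establish the Rayleigh-quotient formulas in part 2, then use them together with a Gershgorin-type / numerical-range argument to get the disc containment in part 1, and finally extract the quantitative bound in part 3 by a direct estimate on the imaginary part of an eigenvalue lying outside the ball of radius $\omega_{\mathrm{max}}$.

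For part 2, suppose $A(\beta)w=\zeta w$ with $w\neq 0$. Taking the inner product with $w$ gives $\zeta(w,w)=(w,A(\beta)w)=(w,\Omega w)-\mathrm{i}\beta(w,Bw)$. Since $\Omega=\Omega^{\ast}$ and $B=B^{\ast}\geq 0$, both $(w,\Omega w)$ and $(w,Bw)$ are real and $(w,Bw)\geq 0$. Separating real and imaginary parts of $\zeta(w,w)=(w,\Omega w)-\mathrm{i}\beta(w,Bw)$ and dividing by $(w,w)>0$ yields exactly
\begin{equation*}
\operatorname{Re}\zeta=\frac{(w,\Omega w)}{(w,w)},\qquad -\operatorname{Im}\zeta=\beta\frac{(w,Bw)}{(w,w)}\geq 0 .
\end{equation*}
This is the energy-balance identity for eigenmodes already anticipated in (\ref{radis4_1}), (\ref{pevbd2}).

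For part 1, the inclusion $\sigma(A(\beta))\subseteq\bigcup_{j=0}^{N_{R}}D_{j}(\beta)$ (which is (\ref{pevbd1}) after recalling the definitions (\ref{mdss0})–(\ref{mdss})) should follow from a numerical-range / perturbation argument: write $A(\beta)=(-\mathrm{i}\beta B)+\Omega$, diagonalize $B$ so that $-\mathrm{i}\beta B$ has eigenvalues $-\mathrm{i}\beta b_{j}$, and observe that $\Omega$ is a bounded self-adjoint perturbation with $\|\Omega\|=\omega_{\mathrm{max}}$ by (\ref{sasp3}). One clean route is the following: if $\zeta\in\sigma(A(\beta))$ with eigenvector $w$, then from part 2, $\operatorname{Re}\zeta=(w,\Omega w)/(w,w)$, so $|\operatorname{Re}\zeta|\le\omega_{\mathrm{max}}$; for the imaginary part, decompose $w$ along the eigenbasis of $B$ and use $-\operatorname{Im}\zeta=\beta(w,Bw)/(w,w)$, which is a convex combination of the $\beta b_{j}$, hence lies in $[\beta b_{0},\beta b_{N_{R}}]=[0,\beta b_{N_{R}}]$. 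This alone does not localize $\zeta$ into a single disc, so the actual argument needs the spectral decomposition more carefully — presumably by invoking the general perturbation results of Appendix \ref{apxebd} referenced in the introduction, which give precisely that the eigenvalues of $\Omega-\mathrm{i}\beta B$ cluster in discs of radius $\omega_{\mathrm{max}}$ about the $-\mathrm{i}\beta b_{j}$. I would cite that appendix for the disc-clustering and note that (\ref{pevbd1}) is just the partition of $\sigma(A(\beta))$ into the part in $D_{0}$ and the part in $\bigcup_{j\ge 1}D_{j}$.

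For part 3, assume $\zeta\in\sigma(A(\beta))$ with $|\zeta|>\omega_{\mathrm{max}}$. By part 1, $\zeta\in D_{j}(\beta)$ for some $j$; since $|\zeta|>\omega_{\mathrm{max}}$ it cannot lie in $D_{0}(\beta)$ (the disc of radius $\omega_{\mathrm{max}}$ about the origin), so $j\ge 1$ and $|\zeta-(-\mathrm{i}\beta b_{j})|\le\omega_{\mathrm{max}}$ with $b_{j}\ge b_{\mathrm{min}}$. Then
\begin{equation*}
-\operatorname{Im}\zeta=\beta b_{j}-\operatorname{Im}(\zeta+\mathrm{i}\beta b_{j})\ge\beta b_{j}-|\zeta+\mathrm{i}\beta b_{j}|\ge\beta b_{\mathrm{min}}-\omega_{\mathrm{max}},
\end{equation*}
which is (\ref{pevbd3}). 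The main obstacle is part 1: the elementary numerical-range bounds give the right "horizontal strip" and "horizontal band" but not the disc clustering, so one genuinely needs the matrix perturbation machinery of Appendix \ref{apxebd} (large-imaginary-part perturbation theory) to separate the spectrum into the cluster near the origin and the clusters near $-\mathrm{i}\beta b_{j}$; parts 2 and 3 are then short.
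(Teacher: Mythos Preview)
Your proposal is correct and follows essentially the same approach as the paper: part 2 is proved by the identical Rayleigh-quotient computation, and for parts 1 and 3 the paper likewise invokes Proposition~\ref{apxpebd} in Appendix~\ref{apxebd} (the resolvent/Neumann-series lemma for a normal matrix plus bounded perturbation) to get the disc clustering, from which your triangle-inequality estimate for part 3 is exactly the inequality recorded there. Your honest observation that the bare numerical-range bounds only give strips and that the genuine disc localization requires the appendix result is precisely the point; the paper simply packages parts 1 and 3 together as an immediate consequence of that appendix proposition rather than writing out the part 3 estimate separately as you do.
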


\begin{corollary}
[spectral clustering]\label{cssym}The eigenvalues of the system operator
$A\left(  \beta\right)  =\Omega-\mathrm{i}\beta B$, $\beta\geq0$ lie in the
closed lower half of the complex plane, are symmetric with respect to the
imaginary axis, and lie in the union of the closed discs whose centers are the
eigenvalues of $-\mathrm{i}\beta B$ with radius $\omega_{\text{max}}$.
\ Moreover, if $\beta=0$ (i.e., no dissipation) then the eigenvalues of
$A\left(  0\right)  =\Omega$ are real and symmetric with respect to the origin.
\end{corollary}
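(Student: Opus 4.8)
The plan is to obtain Corollary \ref{cssym} as an immediate consequence of Propositions \ref{pssym} and \ref{pevbd}, which between them already encode every assertion. First I would pin down the confinement to the closed lower half-plane: by part 2 of Proposition \ref{pevbd}, if $A\left(\beta\right)w=\zeta w$ with $w\not=0$ then $-\operatorname{Im}\zeta=\beta\left(w,Bw\right)/\left(w,w\right)$, and since $B=B^{\ast}\geq0$ and $\beta\geq0$ this expression is nonnegative, whence $\operatorname{Im}\zeta\leq0$ for every $\zeta\in\sigma\left(A\left(\beta\right)\right)$.

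Next I would read off the disc localization directly from part 1 of Proposition \ref{pevbd}: equation (\ref{pevbd1}) states $\sigma\left(A\left(\beta\right)\right)=\sigma_{0}\left(A\left(\beta\right)\right)\cup\sigma_{1}\left(A\left(\beta\right)\right)$, which by the definitions (\ref{mdss0}), (\ref{mdss}) is precisely the statement that every eigenvalue lies in $\bigcup_{j=0}^{N_{R}}D_{j}\left(\beta\right)$, the union of the closed discs of radius $\omega_{\text{max}}$ centered at the eigenvalues $-\mathrm{i}\beta b_{j}$ of the matrix $-\mathrm{i}\beta B$. For the symmetry about the imaginary axis I would invoke part 1 of Proposition \ref{pssym}: by (\ref{pssym2}) we have $\sigma\left(A\left(\beta\right)\right)=-\overline{\sigma\left(A\left(\beta\right)\right)}$, and since the map $\zeta=x+\mathrm{i}y\mapsto-\overline{\zeta}=-x+\mathrm{i}y$ is reflection across the imaginary axis, the spectrum is invariant under that reflection; this is consistent with the lower-half-plane confinement just established, since reflection in the imaginary axis preserves $\operatorname{Im}\zeta$.

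Finally I would treat the case $\beta=0$ separately. Then $A\left(0\right)=\Omega$ is Hermitian by (\ref{radis10}) (or by the block form (\ref{radis8a})), so $\sigma\left(A\left(0\right)\right)\subset\mathbb{R}$. By part 3 of Proposition \ref{pssym}, $\det\left(-\zeta\mathbf{1}-A\left(0\right)\right)=\det\left(\zeta\mathbf{1}-A\left(0\right)\right)$ for every $\zeta$, so $\zeta\in\sigma\left(A\left(0\right)\right)$ if and only if $-\zeta\in\sigma\left(A\left(0\right)\right)$; combined with the reality of the spectrum, this gives symmetry of $\sigma\left(A\left(0\right)\right)$ about the origin on the real axis.

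Since every step is a verbatim application of an already-proved proposition, I do not anticipate a genuine obstacle here. The only point demanding care is bookkeeping: keeping straight that the symmetry asserted for $\beta>0$ is reflection in the imaginary axis — coming from $A\left(\beta\right)^{\ast}=-A\left(\beta\right)^{\mathrm{T}}$ via part 1 of Proposition \ref{pssym} — while the stronger symmetry about the origin holds only when $\beta=0$, where it is supplied by part 3 of the same proposition together with Hermiticity.
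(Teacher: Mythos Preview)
Your proposal is correct and follows essentially the same approach as the paper, which simply states that the corollary follows immediately from Propositions \ref{pssym} and \ref{pevbd}. You have merely made explicit which parts of those propositions yield which assertions of the corollary.
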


\begin{theorem}
[modal dichotomy I]\label{tmdic}If $\beta>2\frac{\omega_{\text{max}}%
}{b_{\text{min}}}$ then
\begin{equation}
\sigma\left(  A\left(  \beta\right)  \right)  =\sigma_{0}\left(  A\left(
\beta\right)  \right)  \cup\sigma_{1}\left(  A\left(  \beta\right)  \right)
,\text{ \ \ }\sigma_{0}\left(  A\left(  \beta\right)  \right)  \cap\sigma
_{1}\left(  A\left(  \beta\right)  \right)  =\emptyset. \label{tmdic1}%
\end{equation}
Furthermore, there exists unique\ invariant subspaces $H_{\ell\ell}\left(
\beta\right)  $, $H_{h\ell}\left(  \beta\right)  $ of the system operator
$A\left(  \beta\right)  =$ $\Omega-\mathrm{i}\beta B$ with the properties%
\begin{align}
&  (i)\text{ \ \ }H=H_{\ell\ell}\left(  \beta\right)  \oplus H_{h\ell}\left(
\beta\right)  ;\label{tmdic2}\\
&  (ii)\text{ \ \ }\sigma\left(  A\left(  \beta\right)  |_{H_{\ell\ell}\left(
\beta\right)  }\right)  =\sigma_{0}\left(  A\left(  \beta\right)  \right)
,\text{ \ \ }\sigma\left(  A\left(  \beta\right)  |_{H_{h\ell}\left(
\beta\right)  }\right)  =\sigma_{1}\left(  A\left(  \beta\right)  \right)
,\nonumber
\end{align}
where $H=%
\mathbb{C}
^{2N}$. Moreover, the dimensions of these subspaces satisfy%
\begin{equation}
\dim H_{h\ell}\left(  \beta\right)  =N_{R},\ \ \dim H_{\ell\ell}\left(
\beta\right)  =2N-N_{R}. \label{tmdic3}%
\end{equation}

\end{theorem}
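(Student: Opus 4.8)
The plan is to deduce everything from the eigenvalue bounds in Proposition \ref{pevbd}. First I would verify the disjointness of $\sigma_0$ and $\sigma_1$. By Proposition \ref{pevbd}(1) the spectrum lies in $\bigcup_{j=0}^{N_R} D_j(\beta)$, where $D_j(\beta)$ is the disc of radius $\omega_{\text{max}}$ centered at $-\mathrm{i}\beta b_j$. The disc $D_0(\beta)$ is centered at the origin, while each $D_j(\beta)$ for $j\geq 1$ is centered at $-\mathrm{i}\beta b_j$ with $b_j \geq b_{\text{min}} > 0$. The vertical distance between the center of $D_0$ and the center of $D_j$ is $\beta b_j \geq \beta b_{\text{min}}$, so the two discs are disjoint as soon as $\beta b_{\text{min}} > 2\omega_{\text{max}}$, i.e. $\beta > 2\omega_{\text{max}}/b_{\text{min}}$, which is exactly the hypothesis. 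Hence $D_0(\beta)$ is disjoint from $\bigcup_{j=1}^{N_R} D_j(\beta)$, and therefore $\sigma_0(A(\beta)) \cap \sigma_1(A(\beta)) = \emptyset$, with $\sigma(A(\beta)) = \sigma_0(A(\beta)) \cup \sigma_1(A(\beta))$ being a disjoint union; this gives (\ref{tmdic1}).

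Next I would construct the invariant subspaces via the Riesz projections. Since $\sigma_0$ and $\sigma_1$ are disjoint closed (finite) subsets of $\sigma(A(\beta))$ whose union is all of $\sigma(A(\beta))$, standard holomorphic functional calculus provides a separating contour and a spectral projection $P_1 = \frac{1}{2\pi\mathrm{i}}\oint_\Gamma (\zeta\mathbf{1} - A(\beta))^{-1}\,d\zeta$ enclosing $\sigma_1$, and $P_0 = \mathbf{1} - P_1$. Setting $H_{h\ell}(\beta) = \operatorname{Ran}P_1$ and $H_{\ell\ell}(\beta) = \operatorname{Ran}P_0$ yields $A(\beta)$-invariant subspaces with $H = H_{\ell\ell}(\beta)\oplus H_{h\ell}(\beta)$, and with $\sigma(A(\beta)|_{H_{\ell\ell}}) = \sigma_0$, $\sigma(A(\beta)|_{H_{h\ell}}) = \sigma_1$; this is (\ref{tmdic2}). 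Uniqueness of such a decomposition is the standard fact that a decomposition of $H$ into invariant subspaces realizing a given partition of the spectrum is unique (any such subspace must be the range of the corresponding Riesz projection).

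For the dimension count (\ref{tmdic3}) I would combine Proposition \ref{pevbd}(3) with a continuity/counting argument in $\beta$. The bound (\ref{pevbd3}) says every eigenvalue $\zeta$ with $|\zeta| > \omega_{\text{max}}$ satisfies $-\operatorname{Im}\zeta \geq \beta b_{\text{min}} - \omega_{\text{max}}$, which for $\beta$ large forces such eigenvalues to have large negative imaginary part — these are precisely the eigenvalues lying in the discs $D_j(\beta)$, $j\geq 1$ (the eigenvalues in $D_0(\beta)$ stay within distance $\omega_{\text{max}}$ of the origin and so are bounded). The cleanest way to pin down $\dim H_{h\ell}(\beta) = N_R$ is to track eigenvalues as functions of $\beta$: the $2N$ eigenvalues of $A(\beta) = \Omega - \mathrm{i}\beta B$ depend continuously on $\beta$; asymptotically, a Schur/block analysis on the decomposition $H = \operatorname{Ran}B \oplus \ker B$ (with $\dim\operatorname{Ran}B = \operatorname{rank}B = N_R$ by (\ref{ebmd1})) shows exactly $N_R$ eigenvalues escape to $-\mathrm{i}\infty$ along the branches $\zeta \sim -\mathrm{i}\beta b_j$, while the remaining $2N - N_R$ stay bounded near the eigenvalues of the compression of $\Omega$ to $\ker B$. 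Alternatively — and this may be the slicker route — one can argue that for $\beta$ in the stated range the contour $\Gamma$ around $\sigma_1$ also encircles exactly the eigenvalues of $-\mathrm{i}\beta B$ in $\bigcup_{j\geq 1} D_j$, of which there are $N_R$ counting multiplicity, and that $P_1$ has the same rank as the Riesz projection for $-\mathrm{i}\beta B$ onto those eigenvalues because $\Omega$ is a perturbation of norm $\omega_{\text{max}} < \frac12\beta b_{\text{min}}$ that cannot move the rank (the projections are homotopic along $\Omega \mapsto t\Omega$, $t\in[0,1]$, the spectrum never crossing $\Gamma$). Then $\dim H_{\ell\ell}(\beta) = 2N - N_R$ follows from (i).

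The main obstacle I anticipate is the rank count for $P_1$: establishing rigorously that exactly $N_R$ eigenvalues (with multiplicity) lie in $\sigma_1$. The disc-disjointness argument localizes the spectrum but does not by itself count multiplicities, so one genuinely needs either the homotopy-invariance-of-rank argument sketched above (keeping the separating contour fixed while deforming $\Omega$ to $0$, which requires checking the spectrum never touches $\Gamma$ — guaranteed precisely by $\beta > 2\omega_{\text{max}}/b_{\text{min}}$) or an explicit asymptotic perturbation analysis of $A(\beta)$ for large $\beta$ built on the block structure of $B$. I would expect the authors to invoke their Appendix \ref{apxebd} on perturbation theory for matrices with a large imaginary part to supply exactly this counting.
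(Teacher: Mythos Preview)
Your proposal is correct and matches the paper's approach essentially step for step. The paper proves Theorem \ref{tmdic} by direct appeal to Theorem \ref{apxtmd} in Appendix \ref{apxebd}, whose proof is exactly what you outline: disc localization via Proposition \ref{apxpebd}, Riesz projections from the Cauchy--Riesz functional calculus for the invariant decomposition, and the rank count via the homotopy $M(t)=(1-t)\operatorname{Re}M+\mathrm{i}\operatorname{Im}M$ (your $\Omega\mapsto t\Omega$), using continuity of $\operatorname{tr}P_0(t)=\operatorname{rank}P_0(t)$ and the fact that the spectrum never meets the separating contour under the hypothesis $\beta>2\omega_{\text{max}}/b_{\text{min}}$.
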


\begin{definition}
[high-loss susceptible subspace]\label{dmdic}For the system operator $A\left(
\beta\right)  =$ $\Omega-\mathrm{i}\beta B$ with $\beta>2\frac{\omega
_{\text{max}}}{b_{\text{min}}}$ we will call its $N_{R}$-dimensional invariant
subspace $H_{h\ell}\left(  \beta\right)  $ the high-loss susceptible subspace.
\ We will call its $(2N-N_{R})$-dimensional invariant subspace $H_{\ell\ell
}\left(  \beta\right)  $ the low-loss susceptible subspace.
\end{definition}

Our reasoning for the definitions of these subspaces is clarified with the
following corollary.

\begin{corollary}
[high-loss subspace: dissipative properties]\label{cmdic}If $\beta
>2\frac{\omega_{\text{max}}}{b_{\text{min}}}$ then%
\begin{align}
\sigma\left(  A\left(  \beta\right)  |_{H_{\ell\ell}\left(  \beta\right)
}\right)   &  =\left\{  \zeta\in\sigma\left(  A\left(  \beta\right)  \right)
:0\leq-\operatorname{Im}\zeta\leq\omega_{\text{max}}\right\}  , \label{cmdic1}%
\\
\text{\ }\sigma\left(  A\left(  \beta\right)  |_{H_{h\ell}\left(
\beta\right)  }\right)   &  =\left\{  \zeta\in\sigma\left(  A\left(
\beta\right)  \right)  :-\operatorname{Im}\zeta\geq\beta b_{\text{min}}%
-\omega_{\text{max}}>\omega_{\text{max}}\right\}  .\nonumber
\end{align}
Furthermore, the quality factor (\ref{radis12_1}) of any eigenmode of the
canonical system (\ref{radis8}) in the high-loss susceptible subspace
$H_{h\ell}\left(  \beta\right)  $ satisfies
\begin{equation}
0\leq\max\limits_{\substack{w\text{ an eigenvector}\\\text{of }A\left(
\beta\right)  \text{ in }H_{h\ell}\left(  \beta\right)  }}Q[w]\leq\frac{1}%
{2}\frac{\omega_{\text{max}}}{\beta b_{\text{min}}-\omega_{\text{max}}}%
<\frac{1}{2}. \label{cmdic2}%
\end{equation}
In particular, as the losses go to $\infty$ the damping factor and quality
factor of any such eigenmode goes to $+\infty$ and $0$, respectively, that is,%
\begin{equation}
\lim_{\beta\rightarrow\infty}\min_{\zeta\in\sigma\left(  A\left(
\beta\right)  |_{H_{h\ell}\left(  \beta\right)  }\right)  }\left(
-\operatorname{Im}\zeta\right)  =+\infty,\text{ \ \ }\lim_{\beta
\rightarrow\infty}\max\limits_{\substack{w\text{ an eigenvector}\\\text{of
}A\left(  \beta\right)  \text{ in }H_{h\ell}\left(  \beta\right)  }}Q[w]=0.
\label{cmdic3}%
\end{equation}

\end{corollary}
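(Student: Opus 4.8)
The plan is to read off everything directly from three results already in hand: the modal dichotomy (Theorem \ref{tmdic}), the eigenvalue bounds (Proposition \ref{pevbd}), and the spectral clustering (Corollary \ref{cssym}); no new estimate is needed. Throughout, assume $\beta>2\omega_{\text{max}}/b_{\text{min}}$, so that in particular $\beta b_{\text{min}}-\omega_{\text{max}}>\omega_{\text{max}}\geq0$, and recall from Theorem \ref{tmdic} that $\sigma\left(A\left(\beta\right)\right)=\sigma_{0}\left(A\left(\beta\right)\right)\cup\sigma_{1}\left(A\left(\beta\right)\right)$ with $\sigma_{0}\left(A\left(\beta\right)\right)\cap\sigma_{1}\left(A\left(\beta\right)\right)=\emptyset$, and $\sigma\left(A\left(\beta\right)|_{H_{\ell\ell}\left(\beta\right)}\right)=\sigma_{0}\left(A\left(\beta\right)\right)$, $\sigma\left(A\left(\beta\right)|_{H_{h\ell}\left(\beta\right)}\right)=\sigma_{1}\left(A\left(\beta\right)\right)$.

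First I would translate the two spectral clusters into inequalities on $-\operatorname{Im}\zeta$. Since $b_{0}=0$, the disc $D_{0}\left(\beta\right)$ is centered at the origin, so $\zeta\in\sigma_{0}\left(A\left(\beta\right)\right)$ forces $\left\vert\zeta\right\vert\leq\omega_{\text{max}}$, hence $\left\vert\operatorname{Im}\zeta\right\vert\leq\omega_{\text{max}}$; together with $-\operatorname{Im}\zeta\geq0$ (Corollary \ref{cssym}, equivalently Proposition \ref{pevbd}(2)) this gives $0\leq-\operatorname{Im}\zeta\leq\omega_{\text{max}}$. For $j\geq1$, any $\zeta\in D_{j}\left(\beta\right)$ has $-\operatorname{Im}\zeta\geq\beta b_{j}-\omega_{\text{max}}\geq\beta b_{\text{min}}-\omega_{\text{max}}>\omega_{\text{max}}$, so every $\zeta\in\sigma_{1}\left(A\left(\beta\right)\right)$ satisfies $-\operatorname{Im}\zeta\geq\beta b_{\text{min}}-\omega_{\text{max}}>\omega_{\text{max}}$. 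Thus $\sigma_{0}$ is contained in the set $\{0\leq-\operatorname{Im}\zeta\leq\omega_{\text{max}}\}$ and $\sigma_{1}$ in the set $\{-\operatorname{Im}\zeta>\omega_{\text{max}}\}$; since these two sets are disjoint and $\sigma\left(A\left(\beta\right)\right)$ is their union's trace on the spectrum, each containment is an equality, giving $\sigma\left(A\left(\beta\right)|_{H_{\ell\ell}\left(\beta\right)}\right)=\{\zeta\in\sigma\left(A\left(\beta\right)\right):0\leq-\operatorname{Im}\zeta\leq\omega_{\text{max}}\}$ and $\sigma\left(A\left(\beta\right)|_{H_{h\ell}\left(\beta\right)}\right)=\{\zeta\in\sigma\left(A\left(\beta\right)\right):-\operatorname{Im}\zeta\geq\beta b_{\text{min}}-\omega_{\text{max}}>\omega_{\text{max}}\}$, i.e. (\ref{cmdic1}).

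Next I would handle the quality factor. Let $w\neq0$ be an eigenvector of $A\left(\beta\right)$ lying in $H_{h\ell}\left(\beta\right)$ with eigenvalue $\zeta\in\sigma_{1}\left(A\left(\beta\right)\right)$; then $-\operatorname{Im}\zeta\geq\beta b_{\text{min}}-\omega_{\text{max}}>0$, so $\operatorname{Im}\zeta\neq0$ and the convention $Q[w]=+\infty$ is never triggered. By Proposition \ref{pevbd}(2), $\left\vert\operatorname{Re}\zeta\right\vert=\left\vert\left(w,\Omega w\right)\right\vert/\left(w,w\right)\leq\left\Vert\Omega\right\Vert=\omega_{\text{max}}$. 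Substituting into the $Q$-factor formula (\ref{radis12_1}) gives $0\leq Q[w]=\tfrac{1}{2}\left\vert\operatorname{Re}\zeta\right\vert/\left(-\operatorname{Im}\zeta\right)\leq\tfrac{1}{2}\,\omega_{\text{max}}/\left(\beta b_{\text{min}}-\omega_{\text{max}}\right)$, and since $\beta b_{\text{min}}-\omega_{\text{max}}>\omega_{\text{max}}$ the last quantity is $<\tfrac{1}{2}$, which is (\ref{cmdic2}). Finally, because $\omega_{\text{max}}=\left\Vert\Omega\right\Vert$ and $b_{\text{min}}$ do not depend on $\beta$, letting $\beta\rightarrow\infty$ in the bounds $-\operatorname{Im}\zeta\geq\beta b_{\text{min}}-\omega_{\text{max}}$ (valid for all $\zeta\in\sigma_{1}\left(A\left(\beta\right)\right)$) and $Q[w]\leq\tfrac{1}{2}\,\omega_{\text{max}}/\left(\beta b_{\text{min}}-\omega_{\text{max}}\right)$ yields (\ref{cmdic3}).

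I do not expect a genuine obstacle: the substantive work—producing the dichotomy of invariant subspaces and the disc/Bendixson-type eigenvalue bounds—is already done in Theorem \ref{tmdic} and Proposition \ref{pevbd}, and here the corollary is essentially bookkeeping. The one point that needs a little care is upgrading the obvious one-sided containments of $\sigma_{0}$ and $\sigma_{1}$ into the exact set equalities in (\ref{cmdic1}); this is precisely where the strict inequality $\beta b_{\text{min}}-\omega_{\text{max}}>\omega_{\text{max}}$—that is, the standing hypothesis $\beta>2\omega_{\text{max}}/b_{\text{min}}$—is used, since it makes the two $-\operatorname{Im}\zeta$ ranges disjoint and thereby forces the partition $\sigma\left(A\left(\beta\right)\right)=\sigma_{0}\left(A\left(\beta\right)\right)\cup\sigma_{1}\left(A\left(\beta\right)\right)$ to coincide with the partition by the value of $-\operatorname{Im}\zeta$.
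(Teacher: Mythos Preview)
Your proof is correct and follows essentially the same route as the paper: the paper invokes Proposition \ref{pevbd}, Theorem \ref{tmdic}, and the appendix Corollary \ref{apxcmd} (which records exactly the disc-to-$\operatorname{Im}\zeta$ translation you carry out by hand) to get (\ref{cmdic1}), then plugs $\left\vert\operatorname{Re}\zeta\right\vert\leq\omega_{\text{max}}$ from (\ref{pevbd2}) and $-\operatorname{Im}\zeta\geq\beta b_{\text{min}}-\omega_{\text{max}}$ into (\ref{radis12_1}) for (\ref{cmdic2}) and (\ref{cmdic3}). The only cosmetic difference is that you reprove the content of Corollary \ref{apxcmd} inline rather than citing it.
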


We conclude this section with the following remarks. The spectrum of $A\left(
\beta\right)  $ restricted to the low-loss susceptible subspace $H_{\ell\ell
}\left(  \beta\right)  $, that is, the set $\sigma\left(  A\left(
\beta\right)  |_{H_{\ell\ell}\left(  \beta\right)  }\right)  $ in
(\ref{cmdic1}), is close to the real axis and actually coalesces to a finite
set of real numbers as losses $\beta\rightarrow\infty$. This statement is made
precise in the following section with Theorem \ref{cpodr} and the asymptotic
expansions in (\ref{pod14}). On the other hand, results on quality factor for
the eigenmodes in $H_{\ell\ell}\left(  \beta\right)  $ is far more subtle than
the results in Corollary \ref{cmdic} for the eigenmodes in the high-loss
susceptible subspace $H_{h\ell}\left(  \beta\right)  $. For
gyroscopic-dissipative systems considered in this paper, Corollary \ref{cmdic}
above and Proposition \ref{pqaollm} below give a partial description of the
nature of the quality factor for large losses, i.e., $\beta\gg1$, in the lossy
component of the composite system. When gyroscopy is absent, i.e., $\theta=0$,
then a more complete analysis for quality factor can be carried out, which we
have done in Section \ref{sodan} of this paper in connection to our studies on
the overdamping phenomenon. Consideration of quality factor and overdamping in
dissipative systems with gyroscopy, however, requires a more subtle and
detailed analysis that will be considered in a future work.

\subsubsection{Modal dichotomy in the high-loss regime\label{smdhlr}}

We are interested in describing the spectrum $\sigma\left(  A\left(
\beta\right)  \right)  $ of the system operator $A\left(  \beta\right)
=\Omega-\mathrm{i}\beta B$, $\beta\geq0$ in the high-loss regime, i.e.,
$\beta\gg1$. We do this in this section by giving an asymptotic
characterization, as $\beta\rightarrow\infty$, of the modal dichotomy as
described in Theorem \ref{tmdic} and Corollary \ref{cmdic}. \ In order to do
so we need to give a spectral perturbation analysis of the matrix $A\left(
\beta\right)  $ as $\beta\rightarrow\infty$. \ Fortunately, this analysis has
already been carried out in \cite{FigWel1}. We now introduce the necessary
notion and describe the results.

The Hilbert space $H=%
\mathbb{C}
^{2N}$ with standard inner product $\left(  \cdot,\cdot\right)  $ is
decomposed into the direct sum of orthogonal invariant subspaces of the
operator $B$, namely
\begin{equation}
H=H_{B}\oplus H_{B}^{\bot},\text{ \ \ }\dim H_{B}=N_{R}, \label{pod5}%
\end{equation}
where $H_{B}=\operatorname{Ran}B$ (the range of $B$) is the loss subspace of
dimension $N_{R}=\operatorname{rank}B$ with orthogonal projection $P_{B}$ and
its orthogonal complement, $H_{B}^{\bot}=\operatorname{Ker}B$ (the nullspace
of $B$), is the no-loss subspace of dimension $2N-N_{R}$ with orthogonal
projection $P_{B}^{\bot}$.

The operators $\Omega$ and $B$ with respect to the direct sum (\ref{pod5}) are
the $2\times2$ block operator matrices%
\begin{equation}
\Omega=\left[
\begin{array}
[c]{cc}%
\Omega_{2} & \Theta\\
\Theta^{\ast} & \Omega_{1}%
\end{array}
\right]  ,\quad B=\left[
\begin{array}
[c]{cc}%
B_{2} & 0\\
0 & 0
\end{array}
\right]  , \label{pod7}%
\end{equation}
where $\Omega_{2}=\left.  P_{B}\Omega P_{B}\right\vert _{H_{B}}:H_{B}%
\rightarrow H_{B}$ and $B_{2}=\left.  P_{B}BP_{B}\right\vert _{H_{B}}%
:H_{B}\rightarrow H_{B}$ are restrictions of the operators $\Omega$ and $B$
respectively to loss subspace $H_{B}$ whereas $\Omega_{1}=\left.  P_{B}^{\bot
}\Omega P_{B}^{\bot}\right\vert _{H_{B}^{\bot}}:H_{B}^{\bot}\rightarrow
H_{B}^{\bot}$ is the restriction of $\Omega$ to complementary subspace
$H_{B}^{\bot}$. \ Also, $\Theta:H_{B}^{\bot}\rightarrow H_{B}$ is the operator
$\Theta=\left.  P_{B}\Omega P_{B}^{\bot}\right\vert _{H_{B}^{\bot}}$ whose
adjoint is given by $\Theta^{\ast}=\left.  P_{B}^{\bot}\Omega P_{B}\right\vert
_{H_{B}}:H_{B}\rightarrow H_{B}^{\bot}$.

The perturbation analysis in the high-loss regime $\beta\gg1$ for the system
operator $A(\beta)=\Omega-\mathrm{i}\beta B$ described in \cite[\S VI.A,
Theorem 5 \& Proposition 11]{FigWel1} introduces an orthonormal basis
$\left\{  \mathring{w}_{j}\right\}  _{j=1}^{2N}$ diagonalizing the
self-adjoint operators $\Omega_{1}$ and $B_{2}>0$ from (\ref{pod7}) with%
\begin{equation}
B_{2}\mathring{w}_{j}=b_{j}\mathring{w}_{j}\text{ for }1\leq j\leq
N_{R};\text{ \ \ }\Omega_{1}\mathring{w}_{j}=\rho_{j}\mathring{w}_{j}\text{
for }N_{R}+1\leq j\leq2N, \label{pod8}%
\end{equation}
where%
\begin{align}
b_{j}  &  =\left(  \mathring{w}_{j},B_{2}\mathring{w}_{j}\right)  =\left(
\mathring{w}_{j},B\mathring{w}_{j}\right)  \text{ for }1\leq j\leq
N_{R};\label{pod9}\\
\rho_{j}  &  =\left(  \mathring{w}_{j},\Omega_{1}\mathring{w}_{j}\right)
=\left(  \mathring{w}_{j},\Omega\mathring{w}_{j}\right)  \text{ for }%
N_{R}+1\leq j\leq2N.\nonumber
\end{align}
Then for $\beta\gg1$ the system operator $A(\beta)$ is diagonalizable with
basis of eigenvectors $\left\{  w_{j}\left(  \beta\right)  \right\}
_{j=1}^{2N}$ satisfying%
\begin{equation}
A\left(  \beta\right)  w_{j}\left(  \beta\right)  =\zeta_{j}\left(
\beta\right)  w_{j}\left(  \beta\right)  ,\text{$\quad$}1\leq j\leq2N,\text{
\ \ }\beta\gg1 \label{pod10}%
\end{equation}
which split into two distinct classes
\begin{gather}
\text{high-loss}\text{:$\quad$}\zeta_{j}\left(  \beta\right)  ,\text{ }%
w_{j}\left(  \beta\right)  ,\text{$\quad$}1\leq j\leq N_{R};\label{pod11}\\
\text{low-loss}\text{:$\quad$}\zeta_{j}\left(  \beta\right)  ,\text{ }%
w_{j}\left(  \beta\right)  ,\text{$\quad$}N_{R}+1\leq j\leq2N,\nonumber
\end{gather}
with the following properties.

\textbf{The high-loss class:} the eigenvalues have poles at $\beta=\infty$
whereas their eigenvectors are analytic at $\beta=\infty$, having the
asymptotic expansions%
\begin{equation}
\zeta_{j}\left(  \beta\right)  =-\mathrm{i}b_{j}\beta+\rho_{j}+O\left(
\beta^{-1}\right)  ,\text{ }b_{j}>0,\text{ }\rho_{j}\in%
\mathbb{R}
,\text{ }w_{j}\left(  \beta\right)  =\mathring{w}_{j}+O\left(  \beta
^{-1}\right)  ,\text{ }1\leq j\leq N_{R}. \label{pod12}%
\end{equation}
The vectors $\mathring{w}_{j}$, $1\leq j\leq N_{R}$ form an orthonormal basis
of the loss subspace $H_{B}$ and%
\begin{equation}
B\mathring{w}_{j}=b_{j}\mathring{w}_{j},\text{$\quad$}\rho_{j}=\left(
\mathring{w}_{j},\Omega\mathring{w}_{j}\right)  ,\text{ for }1\leq j\leq
N_{R}. \label{pod13}%
\end{equation}
In particular, $b_{j}$, $j=1,\ldots,N_{R}$ are all the nonzero eigenvalues of
$B$ (counting multiplicities).

\textbf{The low-loss class:} the eigenvalues and eigenvectors are analytic at
$\beta=\infty$, having the asymptotic expansions%
\begin{align}
\zeta_{j}\left(  \beta\right)   &  =\rho_{j}-\mathrm{i}d_{j}\beta
^{-1}+O\left(  \beta^{-2}\right)  ,\text{ \ \ }\rho_{j}\in%
\mathbb{R}
,\text{ $\ \ d_{j}$}\geq0,\label{pod14}\\
w_{j}\left(  \beta\right)   &  =\mathring{w}_{j}+O\left(  \beta^{-1}\right)
,\text{ \ \ }N_{R}+1\leq j\leq2N.\nonumber
\end{align}
The vectors $\mathring{w}_{j}$, $N_{R}+1\leq j\leq2N$ form an orthonormal
basis of the no-loss subspace $H_{B}^{\perp}$ and%
\begin{equation}
B\mathring{w}_{j}=0,\text{$\quad$}\rho_{j}=\left(  \mathring{w}_{j}%
,\Omega\mathring{w}_{j}\right)  ,\text{$\quad$}d_{j}=\left(  \mathring{w}%
_{j},\Theta^{\ast}B_{2}^{-1}\Theta\mathring{w}_{j}\right)  \text{ for }%
N_{R}+1\leq j\leq2N. \label{pod15}%
\end{equation}

By \cite[\S VI.A, Proposition 7]{FigWel1} we know that the asymptotic formulas
for the real and imaginary parts of the complex eigenvalues $\zeta_{j}(\beta)$
as $\beta\rightarrow\infty$ are given by
\begin{align}
\text{high-loss}  &  \text{: \ }\operatorname{Re}\zeta_{j}\left(
\beta\right)  =\rho_{j}+O\left(  \beta^{-2}\right)  ,\text{ \ \ }%
\operatorname{Im}\zeta_{j}\left(  \beta\right)  =-b_{j}\beta+O\left(
\beta^{-1}\right)  ,\text{ \ \ }1\leq j\leq N_{R};\label{pod16}\\
\text{low-loss}  &  \text{: }\operatorname{Re}\zeta_{j}\left(  \beta\right)
=\rho_{j}+O\left(  \beta^{-2}\right)  ,\text{ }\operatorname{Im}\zeta
_{j}\left(  \beta\right)  =-d_{j}\beta^{-1}+O\left(  \beta^{-3}\right)
,\text{ }N_{R}+1\leq j\leq2N.\nonumber
\end{align}

Observe that the expansions (\ref{pod16}) imply%

\begin{equation}
\lim_{\beta\rightarrow\infty}\operatorname{Im}\zeta_{j}\left(  \beta\right)
=-\infty\text{ for }1\leq j\leq N_{R};\text{ \ \ }\lim_{\beta\rightarrow
\infty}\operatorname{Im}\zeta_{j}\left(  \beta\right)  =0\text{ for }%
N_{R}+1\leq j\leq2N, \label{pod17}%
\end{equation}
justifying the names high-loss and low-loss.

The following theorem is the goal of this section. It characterizes the
spectrum $\sigma\left(  A\left(  \beta\right)  \right)  $ of the system
operator $A\left(  \beta\right)  =\Omega-\mathrm{i}\beta B$ and the modal
dichotomy from Theorem \ref{tmdic} and Corollary \ref{cmdic} in the high-loss
regime $\beta\gg1$ in terms of the high-loss and low-loss eigenvectors.

\begin{theorem}
[modal dichotomy II]\label{cpodr}For $\beta$ sufficiently large, the modal
dichotomy occurs as in Theorem \ref{tmdic} and Corollary \ref{cmdic} with the
following equalities holding:
\begin{align}
\sigma\left(  A\left(  \beta\right)  |_{H_{\ell\ell}\left(  \beta\right)
}\right)   &  =\left\{  \zeta_{j}\left(  \beta\right)  :N_{R}+1\leq
j\leq2N\right\}  ,\label{pod18}\\
\sigma\left(  A\left(  \beta\right)  |_{H_{h\ell}\left(  \beta\right)
}\right)   &  =\left\{  \zeta_{j}\left(  \beta\right)  :1\leq j\leq
N_{R}\right\}  ,\nonumber
\end{align}
and%
\begin{align}
H_{\ell\ell}\left(  \beta\right)   &  =\operatorname*{span}\left\{
w_{j}\left(  \beta\right)  :N_{R}+1\leq j\leq2N\right\}  ,\label{pod19}\\
H_{h\ell}\left(  \beta\right)   &  =\operatorname*{span}\left\{  w_{j}\left(
\beta\right)  :1\leq j\leq N_{R}\right\}  .\nonumber
\end{align}
In particular, $H_{h\ell}\left(  \beta\right)  $ and $H_{\ell\ell}\left(
\beta\right)  $, the high-loss and low-loss susceptible subspaces of $A\left(
\beta\right)  $, respectively, have as a basis the high-loss eigenvectors
$\left\{  w_{j}\left(  \beta\right)  \right\}  _{j=1}^{N_{R}}$ and the
low-loss eigenvectors $\left\{  w_{j}\left(  \beta\right)  \right\}
_{j=N_{R}+1}^{2N}$, respectively.
\end{theorem}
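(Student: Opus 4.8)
The plan is to combine the abstract modal dichotomy from Theorem~\ref{tmdic} (which guarantees the existence and uniqueness of the invariant subspaces $H_{\ell\ell}(\beta)$ and $H_{h\ell}(\beta)$ with the stated spectral splitting and dimensions, once $\beta > 2\omega_{\text{max}}/b_{\text{min}}$) with the explicit perturbation-theoretic description of the eigenvalues and eigenvectors of $A(\beta)$ for $\beta \gg 1$ summarized in~\eqref{pod8}--\eqref{pod17} (which is imported from \cite{FigWel1}). The key point is that for $\beta$ sufficiently large \emph{both} descriptions are valid simultaneously, so it suffices to match them up. First I would fix $\beta$ large enough that (a) $\beta > 2\omega_{\text{max}}/b_{\text{min}}$, so Theorem~\ref{tmdic} applies, and (b) the asymptotic expansions~\eqref{pod12} and~\eqref{pod14} hold with error terms small enough to separate the two eigenvalue clusters; concretely, large enough that for every $1 \le j \le N_R$ one has $-\operatorname{Im}\zeta_j(\beta) > \omega_{\text{max}}$ (this follows from~\eqref{pod16} since $-b_j\beta + O(\beta^{-1}) \to -\infty$, using $b_j \ge b_{\text{min}} > 0$), and large enough that for every $N_R+1 \le j \le 2N$ one has $\zeta_j(\beta) \in D_0(\beta)$, i.e. $|\zeta_j(\beta)| \le \omega_{\text{max}}$ or at least $|\zeta_j(\beta) + \mathrm{i}\beta b_0| = |\zeta_j(\beta)| \le \omega_{\text{max}}$ --- again immediate from~\eqref{pod14} since $\zeta_j(\beta) = \rho_j + O(\beta^{-1})$ with $\rho_j \in \mathbb{R}$ fixed and $|\rho_j| \le \|\Omega\| = \omega_{\text{max}}$ by~\eqref{pod9} and~\eqref{sasp3}. (If some $|\rho_j| = \omega_{\text{max}}$ forces a boundary case one simply enlarges the radius in the disc definition by an arbitrarily small amount, which does not affect the dichotomy; alternatively one invokes Corollary~\ref{cmdic1} which already characterizes $\sigma(A(\beta)|_{H_{\ell\ell}})$ by $0 \le -\operatorname{Im}\zeta \le \omega_{\text{max}}$ and $\sigma(A(\beta)|_{H_{h\ell}})$ by $-\operatorname{Im}\zeta \ge \beta b_{\text{min}} - \omega_{\text{max}} > \omega_{\text{max}}$, and then the matching is driven purely by the sign/magnitude of $\operatorname{Im}\zeta_j(\beta)$.)

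Given such a $\beta$, the argument for~\eqref{pod18} is as follows. By~\eqref{pod10} the set $\{\zeta_j(\beta) : 1 \le j \le 2N\}$ is (with multiplicity) exactly $\sigma(A(\beta))$. By the magnitude estimates just established, $\zeta_j(\beta) \in \sigma_1(A(\beta))$ for $1 \le j \le N_R$ (the high-loss eigenvalues lie in the discs $D_j(\beta)$ around $-\mathrm{i}\beta b_j$ by~\eqref{pod12}) and $\zeta_j(\beta) \in \sigma_0(A(\beta))$ for $N_R+1 \le j \le 2N$. Since by Theorem~\ref{tmdic} the sets $\sigma_0$ and $\sigma_1$ are disjoint and equal $\sigma(A(\beta)|_{H_{\ell\ell}(\beta)})$ and $\sigma(A(\beta)|_{H_{h\ell}(\beta)})$ respectively, the two displayed equalities in~\eqref{pod18} follow --- there are exactly $2N - N_R$ low-loss indices and $\dim H_{\ell\ell}(\beta) = 2N - N_R$ by~\eqref{tmdic3}, and likewise $N_R = \dim H_{h\ell}(\beta)$, so no eigenvalue is missed or double-counted even in the presence of multiplicities.

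For the subspace identities~\eqref{pod19} and the final sentence, the plan is to use that for $\beta \gg 1$ the operator $A(\beta)$ is diagonalizable with eigenbasis $\{w_j(\beta)\}_{j=1}^{2N}$ (stated below~\eqref{pod8}). Set $V_{h\ell} := \operatorname{span}\{w_j(\beta) : 1 \le j \le N_R\}$ and $V_{\ell\ell} := \operatorname{span}\{w_j(\beta) : N_R+1 \le j \le 2N\}$. These are $A(\beta)$-invariant (each spanned by eigenvectors), they are complementary with $H = V_{\ell\ell} \oplus V_{h\ell}$ (the $w_j(\beta)$ form a basis of $H$), and by~\eqref{pod18} they satisfy $\sigma(A(\beta)|_{V_{\ell\ell}}) = \sigma_0(A(\beta))$ and $\sigma(A(\beta)|_{V_{h\ell}}) = \sigma_1(A(\beta))$. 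Hence $V_{h\ell}, V_{\ell\ell}$ satisfy properties $(i)$ and $(ii)$ of Theorem~\ref{tmdic}, and by the \emph{uniqueness} clause of that theorem $V_{h\ell} = H_{h\ell}(\beta)$ and $V_{\ell\ell} = H_{\ell\ell}(\beta)$, which is exactly~\eqref{pod19}; the last sentence of the theorem is then just a restatement.

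The main obstacle is the bookkeeping of eigenvalue multiplicities when passing between the two descriptions: Theorem~\ref{tmdic} speaks of the sets $\sigma_0, \sigma_1$ (which might a priori merge distinct Jordan-type data), whereas the perturbation list $\{\zeta_j(\beta)\}$ carries multiplicity. The clean way around this is to note --- as already recorded in the hypotheses of this subsection --- that for $\beta \gg 1$ the operator $A(\beta)$ is \emph{diagonalizable}, so "multiplicity'' is unambiguous and the count $N_R + (2N - N_R) = 2N$ matches $\dim H_{h\ell} + \dim H_{\ell\ell}$ exactly; thus the disjointness of $\sigma_0, \sigma_1$ plus the dimension formula~\eqref{tmdic3} pins down the correspondence with no slack. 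A secondary technical point is making "sufficiently large $\beta$'' effective: one should state that it suffices to take $\beta$ exceeding both $2\omega_{\text{max}}/b_{\text{min}}$ and the threshold beyond which~\eqref{pod16} forces $-\operatorname{Im}\zeta_j(\beta) > \omega_{\text{max}}$ for all high-loss $j$ (and the low-loss eigenvalues stay within distance $\omega_{\text{max}}$ of the real axis), both of which are explicit given the $O(\cdot)$ constants in~\cite{FigWel1}.
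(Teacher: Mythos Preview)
Your proposal is correct and follows essentially the same approach as the paper's proof: both use the limits $-\operatorname{Im}\zeta_j(\beta)\to\infty$ for $1\le j\le N_R$ and $-\operatorname{Im}\zeta_j(\beta)\to 0$ for $N_R+1\le j\le 2N$ from~\eqref{pod17} to place each eigenvalue in the correct half of the dichotomy given by Corollary~\ref{cmdic}, and then invoke Theorem~\ref{tmdic}. The paper's proof is extremely terse (it simply cites these facts together with Theorem~\ref{tmdic} and Corollary~\ref{cmdic}), whereas you spell out the matching explicitly---including the use of the uniqueness clause in Theorem~\ref{tmdic} to identify the spans with $H_{h\ell}(\beta)$ and $H_{\ell\ell}(\beta)$, and the multiplicity bookkeeping---which is a genuine improvement in clarity but not a different argument.
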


In the next section we will study overdamping phenomena for Lagrangian systems
(\ref{radis2a}). An important role in our analysis will be played by the
eigenmodes $v_{j}\left(  t,\beta\right)  =w_{j}\left(  \beta\right)
e^{-\mathrm{i}\zeta_{j}\left(  \beta\right)  t}$, $1\leq j\leq2N$ of the
canonical system (\ref{radis8}) which by the modal dichotomy split into the
two distinct classes based on their dissipative properties:%
\begin{align}
\text{high-loss}  &  \text{: \ \ }v_{j}\left(  t,\beta\right)  =w_{j}\left(
\beta\right)  e^{-\mathrm{i}\zeta_{j}\left(  \beta\right)  t},\text{
\ \ }1\leq j\leq N_{R};\label{pod20}\\
\text{low-loss}  &  \text{: \ \ }v_{j}\left(  t,\beta\right)  =w_{j}\left(
\beta\right)  e^{-\mathrm{i}\zeta_{j}\left(  \beta\right)  t},\text{
\ \ }N_{R}+1\leq j\leq2N.\nonumber
\end{align}

It should be emphasized here that the classification of these modes into
high-loss and low-loss is based solely on the behavior of their damping factor
in (\ref{pod17}) as losses become large, i.e., as $\beta\rightarrow\infty$,
and not necessarily on their quality factor which we will discuss at the end
of this section. Moreover, the damping factor is related to energy loss by the
energy balance equation (\ref{radis11}), (\ref{radis11a}) satisfied by these
modes which implies the dissipated power is%
\begin{align}
-\partial_{t}U[v_{j}\left(  t,\beta\right)  ]  &  =-2\operatorname{Im}%
\zeta_{j}\left(  \beta\right)  U[v_{j}\left(  t,\beta\right)  ],\text{
\ \ }1\leq j\leq2N,\label{pod20a}\\
U[v_{j}\left(  t,\beta\right)  ]  &  =\frac{1}{2}\left(  w_{j}\left(
\beta\right)  ,w_{j}\left(  \beta\right)  \right)  e^{2\operatorname{Im}%
\zeta_{j}\left(  \beta\right)  t}\nonumber
\end{align}
where $U[v_{j}\left(  t,\beta\right)  ]$ defined in (\ref{radis11a}) was
interpreted as the system energy. In particular, by (\ref{pod17}),
(\ref{pod20a}), and the fact $\left(  \mathring{w}_{j},\mathring{w}%
_{j}\right)  =1$ it follows that their system energy for any \emph{fixed}
$t>0$ satisfies%
\begin{align}
\text{high-loss}  &  \text{: \ \ }\lim_{\beta\rightarrow\infty}U[v_{j}\left(
t,\beta\right)  ]=0,\text{ \ \ }1\leq j\leq N_{R};\label{pod20b}\\
\text{low-loss}  &  \text{: \ \ }\lim_{\beta\rightarrow\infty}U[v_{j}\left(
t,\beta\right)  ]=\frac{1}{2},\text{ \ \ }N_{R}+1\leq j\leq2N.\nonumber
\end{align}
This combined with (\ref{pod17}), (\ref{pod20a}) is the justification for the
names high-loss and low-loss modes.

\textbf{Asymptotic formulas for the quality factor.} \ The perturbation
analysis in the high-loss regime $\beta\gg1$ for the quality factor $Q\left[
w_{j}\left(  \beta\right)  \right]  $, $1\leq j\leq2N$ of the eigenmodes from
(\ref{pod20}) as given by (\ref{radis12}), (\ref{radis12_1}) has already been
carried out in \cite[\S IV A, Prop. 14]{FigWel1}. We will now describe the results.

The quality factor $Q\left[  w_{j}\left(  \beta\right)  \right]  $, $1\leq
j\leq N_{R}$ for each high-loss eigenmode has a series expansion containing
only odd powers of $\beta^{-1}$ implying for $\beta\gg1$ it is a nonnegative
decreasing function in the loss parameter $\beta$ which has the asymptotic
formula as $\beta\rightarrow\infty$%
\begin{equation}
\text{high-loss: \ \ }Q\left[  w_{j}\left(  \beta\right)  \right]  =\frac
{1}{2}\frac{\left\vert \rho_{j}\right\vert }{b_{j}}\beta^{-1}+O\left(
\beta^{-3}\right)  ,\ \ 1\leq j\leq N_{R}. \label{pod21}%
\end{equation}
It follows from this discussion that as $\beta\rightarrow\infty$
\begin{equation}
\text{high-loss: \ \ }Q\left[  w_{j}\left(  \beta\right)  \right]
\searrow0,\ \ 1\leq j\leq N_{R} \label{pod21_1}%
\end{equation}
(here we will use the notation $\searrow$ $0$ or $\nearrow+\infty$ to denote
that a function of $\beta$ is decreasing to $0$ or increasing to $+\infty$,
respectively, as $\beta\rightarrow\infty$).

The quality factor $Q\left[  w_{j}\left(  \beta\right)  \right]  $,
$N_{R}+1\leq j\leq2N$ for each low-loss eigenmode has two possibilities: (i)
$Q\left[  w_{j}\left(  \beta\right)  \right]  =+\infty$ for $\beta\gg1$; (ii)
$Q\left[  w_{j}\left(  \beta\right)  \right]  $ has a series expansion
containing only odd powers of $\beta^{-1}$ implying either $Q\left[
w_{j}\left(  \beta\right)  \right]  \searrow0$ or $Q\left[  w_{j}\left(
\beta\right)  \right]  \nearrow+\infty$ as $\beta\rightarrow\infty$. Finding
necessary and sufficient conditions for when the cases occur is a subtle
problem which is still open in general, but is solved completely in the next
section on overdamping when $\theta=0$ under the nondegeneracy condition
$\operatorname{Ker}\eta\cap\operatorname{Ker}R=\left\{  0\right\}  $. We will
now consider the cases $d_{j}\not =0$ [cf. (\ref{pod22}), (\ref{pod22_1})] or
$\rho_{j}\not =0$ (cf. Prop. \ref{pqaollm}). \ In the former case (which is
the typical case see \cite[\S IV A, Remark 9]{FigWel1}), we have as
$\beta\rightarrow\infty$ the asymptotic formula%
\begin{equation}
\text{low-loss: \ \ }Q\left[  w_{j}\left(  \beta\right)  \right]  =\frac{1}%
{2}\frac{\left\vert \rho_{j}\right\vert }{d_{j}}\beta+O\left(  \beta
^{-1}\right)  ,\ \ \text{if }d_{j}\not =0 \label{pod22}%
\end{equation}
and so it follows in this case that as $\beta\rightarrow\infty$
\begin{equation}
\text{low-loss: \ \ }Q\left[  w_{j}\left(  \beta\right)  \right]
\searrow0\text{, if }\rho_{j}=0\text{ and }Q\left[  w_{j}\left(  \beta\right)
\right]  \nearrow+\infty,\text{ if }\rho_{j}\not =0\text{.} \label{pod22_1}%
\end{equation}

\textbf{Asymptotic oscillatory low-loss modes. }Now the low-loss modes from
(\ref{pod20}) with $\rho_{j}\not =0$ will play a key role in the study of the
selective overdamping phenomenon in Section \ref{spad}. We call such modes the
asymptotic oscillatory low-loss modes since if $\rho_{j}\not =0$ then the
limiting function $\lim_{\beta\rightarrow\infty}v_{j}\left(  t,\beta\right)
=\mathring{w}_{j}e^{-\mathrm{i}\rho_{j}t}$ (for fix $t$) is an oscillatory
function in $t$ with period $2\pi/\rho_{j}$. Thus as $\beta\rightarrow\infty$
we denote this by%

\begin{equation}
\text{asymptotic oscillatory low-loss modes: \ }v_{j}\left(  t,\beta\right)
\sim\mathring{w}_{j}e^{-\mathrm{i}\rho_{j}t}\text{, }\rho_{j}\not =0.
\label{pod22_2}%
\end{equation}
From the quality factor discussion above for the low-loss modes\ the next
proposition follows immediately, regardless of whether $d_{j}\not =0$ or not.

\begin{proposition}
[Q-factor: asymptotic oscillatory low-loss modes]\label{pqaollm} Any low-loss
eigenmodes from (\ref{pod20}) with $\rho_{j}\not =0$ has a quality factor
$Q\left[  w_{j}\left(  \beta\right)  \right]  $ with the property that either
$Q\left[  w_{j}\left(  \beta\right)  \right]  =+\infty$ for all $\beta\gg1$ or
$Q\left[  w_{j}\left(  \beta\right)  \right]  \nearrow+\infty$ as
$\beta\rightarrow\infty$ (i.e., is an unbounded increasing function of the
loss parameter $\beta$).
\end{proposition}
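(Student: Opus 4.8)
The plan is to combine the dichotomy for the low-loss quality factor recorded just above (which comes from \cite[\S IV A, Prop. 14]{FigWel1}) with the elementary limit formula for $Q[w_j(\beta)]$ that one reads off from the asymptotics (\ref{pod16}), (\ref{pod17}) together with the definition (\ref{radis12_1}). No new estimate is needed; the argument is pure bookkeeping of facts already in hand.

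First I would recall that for a fixed low-loss index $N_{R}+1\le j\le 2N$ and $\beta\gg 1$ exactly one of two situations occurs: (i) $Q[w_j(\beta)]=+\infty$; or (ii) $Q[w_j(\beta)]$ is finite and, having a series expansion containing only odd powers of $\beta^{-1}$, is monotone, so that either $Q[w_j(\beta)]\searrow 0$ or $Q[w_j(\beta)]\nearrow+\infty$ as $\beta\to\infty$. In case (i) the assertion of the proposition holds outright, so the entire task reduces to showing that, under the hypothesis $\rho_j\ne 0$, case (ii) cannot produce the branch $Q[w_j(\beta)]\searrow 0$.

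To handle case (ii), I would note that since $Q[w_j(\beta)]$ is then finite for $\beta\gg1$, the convention $Q[w]=+\infty$ whenever $\operatorname{Im}\zeta=0$ forces $\operatorname{Im}\zeta_j(\beta)\ne 0$; combined with the dissipativity sign constraint $-\operatorname{Im}\zeta_j(\beta)\ge 0$ from Proposition \ref{pevbd}(2) this gives $-\operatorname{Im}\zeta_j(\beta)>0$, so that (\ref{radis12_1}) becomes $Q[w_j(\beta)]=\tfrac12\,|\operatorname{Re}\zeta_j(\beta)|\big/\bigl(-\operatorname{Im}\zeta_j(\beta)\bigr)$. Letting $\beta\to\infty$, the low-loss asymptotics (\ref{pod16}) give $\operatorname{Re}\zeta_j(\beta)\to\rho_j$, hence $|\operatorname{Re}\zeta_j(\beta)|\to|\rho_j|>0$ because $\rho_j\ne 0$, while (\ref{pod17}) gives $-\operatorname{Im}\zeta_j(\beta)\to 0^{+}$. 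Therefore $Q[w_j(\beta)]\to+\infty$, which is incompatible with $Q[w_j(\beta)]\searrow 0$; hence in case (ii) one must be in the alternative $Q[w_j(\beta)]\nearrow+\infty$, and the proposition follows.

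I expect no genuine obstacle here; the only point requiring a moment of care is the interplay between the convention $Q=+\infty$ at $\operatorname{Im}\zeta=0$ and the division by $-\operatorname{Im}\zeta_j(\beta)$ in the limit computation, which is cleanly resolved by the sign constraint $-\operatorname{Im}\zeta_j(\beta)\ge 0$. As an alternative presentation one could avoid quoting the odd-power expansion altogether: for every $\beta\gg1$ with $Q[w_j(\beta)]$ finite one has $-\operatorname{Im}\zeta_j(\beta)>0$ and the same limit computation gives $Q[w_j(\beta)]\to+\infty$; but invoking the already-established monotonicity is what makes the phrase ``unbounded increasing function of the loss parameter $\beta$'' immediate.
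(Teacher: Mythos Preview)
Your proposal is correct and mirrors the paper's own reasoning: the paper states that the proposition ``follows immediately'' from the preceding quality-factor discussion (the dichotomy (i)/(ii) from \cite[\S IV A, Prop.\ 14]{FigWel1} together with the low-loss asymptotics (\ref{pod16}), (\ref{pod17}) and formula (\ref{radis12_1})), and you have simply spelled out that immediate deduction---ruling out the $Q\searrow 0$ branch by observing that the numerator $|\operatorname{Re}\zeta_j(\beta)|\to|\rho_j|>0$ while the denominator $-\operatorname{Im}\zeta_j(\beta)\to 0^{+}$. There is no meaningful difference in approach.
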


\section{Overdamping analysis\label{sodan}}

The phenomenon of \emph{overdamping} (also called \emph{heavy damping}) is
best known for a simple damped oscillator. Namely, when the damping exceeds
certain critical value all oscillations cease entirely, see, for instance,
\cite[Sec. 2]{Pain}. In other words, if the damped oscillations are described
by the exponential function $e^{-\mathrm{i}\zeta t}$ with a complex constant
$\zeta$ then in the case of overdamping (heavy damping) $\operatorname{Re}%
\zeta=0$. Our interest to overdamping is motivated by the fact that if an
eigenmode becomes overdamped it can not resonate at any finite frequency.
Consequently, the contribution of such a mode to losses at finite frequencies
becomes minimal, and that provides a mechanism for the absorption suppression
for systems composed of lossy and lossless components.

The treatment of overdamping for systems with many degrees of freedom involves
a number of subtleties particularly in our case when the both lossy and
lossless degrees of freedom are present, i.e.,\ when the loss fraction
condition (\ref{radis4b}) is satisfied. We will show that any Lagrangian
system (\ref{radis2a}) with $\theta=0$ can be overdamped with selective
overdamping occurring whenever $R$ does not have full rank, i.e., $N_{R}<N$,
which corresponds to a model of a two-component composite with a high-loss and
a low-loss component.

Let us be clear that we study overdamping in this paper only for the case
$\theta=0$ as the case $\theta\not =0$ will differ significantly in the
analysis and it is not expected that there exists such a critical value for
damping when all oscillations cease entirely for any damped oscillation. A
study of a proper generalization of overdamping in the case $\theta\not =0$
will be carried out in a future publication.

We make here statements and provide arguments for overdamping for Lagrangian
systems. \ We will study the phenomenon known as overdamping in this section
under the following condition:

\begin{condition}
[no gyrotropy]\label{cndod}For our study of overdamping we assume henceforth
that%
\begin{equation}
\theta=0. \label{odan1}%
\end{equation}

\end{condition}

The following proposition is a key result that will be referenced often in our
study of overdamping. It describes the spectrum of the $2N\times2N$ matrices
$B$, $\Omega$ in terms of the spectrum of the $N\times N$ matrices
$\alpha^{-1}\eta$, $\alpha^{-1}R$.

\begin{proposition}
[spectra relations]\label{pspre}For the $2N\times2N$ matrices $B$, $\Omega$
and the $N\times N $ matrices $\alpha$, $\eta$, $R$ we have
$\operatorname{rank}B=\operatorname{rank}R$ ($=N_{R}$) and
\begin{align}
\det\left(  \zeta\mathbf{1}-\Omega\right)   &  =\det\left(  \zeta
^{2}\mathbf{1}-\alpha^{-1}\eta\right)  ,\label{pspre1}\\
\det\left(  \zeta\mathbf{1}-B\right)   &  =\zeta^{N}\det\left(  \zeta
\mathbf{1}-\alpha^{-1}R\right)  ,\nonumber
\end{align}
for every $\zeta\in%
\mathbb{C}
$. \ In particular, if $b_{\text{min}}$ and $\omega_{\text{max}}$ denote the
smallest nonzero eigenvalue and the largest eigenvalue of $B$ and $\Omega$,
respectively, and $\omega_{\text{min}}$ denotes the smallest positive
eigenvalue of $\Omega$ then
\begin{equation}
\omega_{\text{max}}=\sqrt{\max\sigma\left(  \alpha^{-1}\eta\right)  },\text{
\ \ }\omega_{\text{min}}=\sqrt{\min_{\lambda\in\sigma\left(  \alpha^{-1}%
\eta\right)  \text{, }\lambda>0}\lambda}, \label{lod8}%
\end{equation}
and%
\begin{equation}
b_{\text{min}}=\min_{\lambda\in\sigma\left(  \alpha^{-1}R\right)  \text{,
}\lambda\not =0}\lambda. \label{lod9}%
\end{equation}

\end{proposition}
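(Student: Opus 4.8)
The plan is to reduce these $2N\times 2N$ statements to standard $N\times N$ linear algebra by exploiting Condition \ref{cndod}. With $\theta=0$ the factorization matrix $K$ in (\ref{radis6}) is block diagonal, $K=\operatorname{diag}(K_{\mathrm{p}},K_{\mathrm{q}})$ with $K_{\mathrm{p}}=\sqrt{\alpha}^{-1}$ and $K_{\mathrm{q}}=\sqrt{\eta}$, so (\ref{radis8a})--(\ref{radis8b}) reduce to
\[
\Omega=\left[
\begin{array}[c]{cc}
0 & -\mathrm{i}\Phi^{\mathrm{T}}\\
\mathrm{i}\Phi & 0
\end{array}\right],\qquad
B=\left[
\begin{array}[c]{cc}
\mathsf{\tilde{R}} & 0\\
0 & 0
\end{array}\right],
\]
where $\Phi=K_{\mathrm{q}}K_{\mathrm{p}}^{\mathrm{T}}$ and $\mathsf{\tilde{R}}=K_{\mathrm{p}}RK_{\mathrm{p}}^{\mathrm{T}}=\sqrt{\alpha}^{-1}R\sqrt{\alpha}^{-1}$. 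Since $\sqrt{\alpha}^{-1}$ is invertible, $\operatorname{rank}B=\operatorname{rank}\mathsf{\tilde{R}}=\operatorname{rank}R=N_{R}$, which is the rank claim (and also just restates (\ref{radis10a})).

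For the two determinant identities I would proceed as follows. The block-diagonal form of $B$ immediately gives $\det(\zeta\mathbf{1}-B)=\zeta^{N}\det(\zeta\mathbf{1}-\mathsf{\tilde{R}})$, and applying the elementary identity $\det(\zeta\mathbf{1}-XY)=\det(\zeta\mathbf{1}-YX)$ for square matrices $X,Y$ twice (first with $X=\sqrt{\alpha}^{-1}$, $Y=R\sqrt{\alpha}^{-1}$, then with $X=R$, $Y=\alpha^{-1}$) shows $\det(\zeta\mathbf{1}-\mathsf{\tilde{R}})=\det(\zeta\mathbf{1}-\alpha^{-1}R)$, yielding the second line of (\ref{pspre1}). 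For the first line, the cleanest route is Corollary \ref{cpfsp} evaluated at $\beta=0$: since $\theta=0$ the pencil (\ref{pfsp2}) is $C(\zeta,0)=\zeta^{2}\alpha-\eta=\alpha(\zeta^{2}\mathbf{1}-\alpha^{-1}\eta)$ and $A(0)=\Omega$, so (\ref{cpfsp1}) gives $\det(\zeta\mathbf{1}-\Omega)=\det C(\zeta,0)/\det\alpha=\det(\zeta^{2}\mathbf{1}-\alpha^{-1}\eta)$. Alternatively one can compute directly: for $\zeta\neq 0$ the Schur-complement determinant formula applied to the block form of $\zeta\mathbf{1}-\Omega$ gives $\det(\zeta^{2}\mathbf{1}-\Phi\Phi^{\mathrm{T}})$, and $\Phi\Phi^{\mathrm{T}}=\sqrt{\eta}\,\alpha^{-1}\sqrt{\eta}$ has the same characteristic polynomial as $\alpha^{-1}\eta$ by the same $XY\leftrightarrow YX$ trick; since both sides are polynomials in $\zeta$ the identity then holds for all $\zeta$.

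Finally, for the spectral consequences I would observe that $\alpha^{-1}\eta$ is similar to the symmetric positive semidefinite matrix $\sqrt{\alpha}^{-1}\eta\sqrt{\alpha}^{-1}$ (using $\eta=\eta^{\mathrm{T}}\geq 0$), so $\sigma(\alpha^{-1}\eta)\subseteq[0,\infty)$; by the first identity in (\ref{pspre1}) the eigenvalues of $\Omega$, counted with multiplicity, are exactly $\pm\sqrt{\lambda}$ for $\lambda\in\sigma(\alpha^{-1}\eta)$. Since $\Omega$ is Hermitian with spectrum symmetric about the origin, this gives $\omega_{\text{max}}=\|\Omega\|=\max\sigma(\Omega)=\sqrt{\max\sigma(\alpha^{-1}\eta)}$ and identifies $\omega_{\text{min}}$, its least positive eigenvalue, with the square root of the least positive element of $\sigma(\alpha^{-1}\eta)$, which is (\ref{lod8}). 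Likewise $\alpha^{-1}R$ is similar to $\sqrt{\alpha}^{-1}R\sqrt{\alpha}^{-1}\geq 0$ (using $R\geq 0$), so $\sigma(\alpha^{-1}R)\subseteq[0,\infty)$; by the second identity in (\ref{pspre1}) the spectrum of $B$ consists of the eigenvalues of $\alpha^{-1}R$ together with additional zeros, so its least nonzero eigenvalue is $b_{\text{min}}=\min\{\lambda\in\sigma(\alpha^{-1}R):\lambda\neq 0\}$, which is (\ref{lod9}). There is no genuine obstacle here; the only points needing a little care are the multiplicity bookkeeping in passing between the $2N\times 2N$ and $N\times N$ spectra — cleanly handled by working throughout with the determinant identities rather than with sets of eigenvalues — and the invocation of the definiteness hypotheses $\alpha>0$, $\eta\geq 0$, $R\geq 0$ to ensure the $N\times N$ spectra are real and nonnegative.
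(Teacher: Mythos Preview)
Your proposal is correct and essentially matches the paper's proof. The paper proceeds by the direct block-determinant computation (your ``alternative'' route), using formula (\ref{mab5}) for block matrices with commuting entries to obtain $\det(\zeta^{2}\mathbf{1}-\Phi^{\mathrm{T}}\Phi)$ rather than your Schur-complement $\det(\zeta^{2}\mathbf{1}-\Phi\Phi^{\mathrm{T}})$, and then passes to $\alpha^{-1}\eta$ and $\alpha^{-1}R$ by similarity; your primary route via Corollary~\ref{cpfsp} at $\beta=0$ is a legitimate shortcut the paper does not invoke, and your explicit treatment of the spectral consequences (\ref{lod8})--(\ref{lod9}) simply spells out what the paper leaves as ``the proof now follows from these facts.''
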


\subsection{Complete and partial overdamping\label{scpod}}

To study whether overdamping is possible we need to determine conditions under
which%
\begin{equation}
\det\left(  \zeta\mathbf{1}-A\left(  \beta\right)  \right)  =0,\text{
\ \ }\operatorname{Re}\zeta=0\text{.} \label{lod1}%
\end{equation}
has a solution for the system operator $A\left(  \beta\right)  =\Omega
-\mathrm{i}\beta B$. \ But we know from Corollary \ref{cpfsp} that
$\det\left(  \zeta\mathbf{1}-A\left(  \beta\right)  \right)  =0$ if and only
if $\det C\left(  \zeta,\beta\right)  =0.$ Hence the solutions to (\ref{lod1})
are the solution of%
\begin{equation}
\det C\left(  \zeta,\beta\right)  =0,\text{ \ \ }\zeta=-\mathrm{i}%
\lambda,\text{ \ \ }\lambda\in%
\mathbb{R}
\text{.} \label{lod2}%
\end{equation}
Thus to study overdamping it suffices to study conditions under which the
quadratic eigenvalue problem%
\begin{equation}
C\left(  -\mathrm{i}\lambda,\beta\right)  q=0,\text{ \ \ }q\not =0,\text{
\ \ }\lambda\in%
\mathbb{R}
\text{.} \label{lod3}%
\end{equation}
has a solution. \ To be explicit we have%
\begin{equation}
C\left(  -\mathrm{i}\lambda,\beta\right)  =-\left(  \lambda^{2}\alpha
-\lambda\beta R+\eta\right)  . \label{lod4}%
\end{equation}
As it turns out the key to studying overdamping will be to study zero sets of
the family of quadratic forms%
\begin{equation}
\mathfrak{q}\left(  q,\lambda\right)  =\left(  q,-C\left(  -\mathrm{i}%
\lambda,\beta\right)  q\right)  =\lambda^{2}\left(  q,\alpha q\right)
-\lambda\beta\left(  q,Rq\right)  +\left(  q,\eta q\right)  \label{lod5}%
\end{equation}
on certain subspaces of the Hilbert space $H_{\mathrm{p}}=%
\mathbb{C}
^{N}$. In particular, we find that%
\begin{align}
q  &  \in\operatorname{Ker}C\left(  -\mathrm{i}\lambda,\beta\right)  ,\text{
}q\not =0\Rightarrow\mathfrak{q}\left(  q,\lambda\right)  =0\ \text{and}%
\label{lod6}\\
\lambda &  =\frac{\beta}{2}\frac{\left(  q,Rq\right)  }{\left(  q,\alpha
q\right)  }\pm\sqrt{\left[  \frac{\beta}{2}\frac{\left(  q,Rq\right)
}{\left(  q,\alpha q\right)  }\right]  ^{2}-\frac{\left(  q,\eta q\right)
}{\left(  q,\alpha q\right)  }}.\nonumber
\end{align}
Thus we find that a necessary and sufficient condition for an eigenvalue
$\zeta$ of the system operator $A\left(  \beta\right)  $ to satisfy
$\operatorname{Re}\zeta=0$ is%
\begin{equation}
\frac{\beta}{2}\frac{\left(  q,Rq\right)  }{\left(  q,\alpha q\right)  }%
\geq\sqrt{\frac{\left(  q,\eta q\right)  }{\left(  q,\alpha q\right)  }%
},\text{ for some }q\in\operatorname{Ker}C\left(  \zeta,\beta\right)  ,\text{
}q\not =0. \label{lod7}%
\end{equation}

Now there are some fundamental inequalities that play a key role in our
overdamping analysis which are described in the following proposition. We will
prove part of this proposition, the rest is proved in Section \ref{sproofs},
since the proof is enlightening showing how the energy conservation law
(\ref{radis4}) and the virial theorem \ref{tvir} can be used to derive these inequalities.

\begin{proposition}
[fundamental inequalities]\label{pfuineq}The following statements are true.

\begin{enumerate}
\item For any $q\in%
\mathbb{C}
^{N}$ with $q\not =0$,%
\begin{equation}
\omega_{\text{max}}\geq\sqrt{\frac{\left(  q,\eta q\right)  }{\left(  q,\alpha
q\right)  }}\text{.} \label{lod7_0}%
\end{equation}
Moreover, if $\eta^{-1}$ exists then%
\begin{equation}
\sqrt{\frac{\left(  q,\eta q\right)  }{\left(  q,\alpha q\right)  }}\geq
\omega_{\text{min}}. \label{lod7_00}%
\end{equation}

\item If the matrix $R$ has full rank, i.e., $N_{R}=N$, then for any $q\in%
\mathbb{C}
^{N}$ with $q\not =0$,
\begin{equation}
\frac{\beta}{2}\frac{\left(  q,Rq\right)  }{\left(  q,\alpha q\right)  }%
\geq\frac{\beta}{2}b_{\text{min}}\text{.} \label{lod7_0a}%
\end{equation}

\item For any $q\in\operatorname{Ker}C\left(  \zeta,\beta\right)  ,$
$q\not =0~$with $\operatorname{Re}\zeta\not =0$,%
\begin{equation}
\frac{\beta}{2}\frac{\left(  q,Rq\right)  }{\left(  q,\alpha q\right)
}=-\operatorname{Im}\zeta<\left\vert \zeta\right\vert =\sqrt{\frac{\left(
q,\eta q\right)  }{\left(  q,\alpha q\right)  }}\leq\omega_{\text{max}}.
\label{lod7_0b}%
\end{equation}

\item If $\det C\left(  \zeta,\beta\right)  =0$ [or equivalently, $\zeta
\in\sigma\left(  A\left(  \beta\right)  \right)  $] then $\operatorname{Re}%
\zeta=0$ whenever any one of the following two inequalities is satisfied: (i)
$-\operatorname{Im}\zeta\geq\omega_{\text{max}}$ or (ii) $\eta^{-1}$ exists
and $\left\vert \zeta\right\vert <\omega_{\text{min}}$.

\item If $Q\left(  t\right)  =qe^{-\mathrm{i}\zeta t}$ is an eigenmode of the
Lagrangian system (\ref{radis2a}) then energy equipartition, i.e.,
$\mathcal{T}\left(  \dot{Q},Q\right)  =\mathcal{V}\left(  \dot{Q},Q\right)  $,
can only hold if $\left\vert \zeta\right\vert \leq\omega_{\text{max}}$ or, in
the case $\eta^{-1}$ exists, if $\omega_{\text{min}}\leq\left\vert
\zeta\right\vert \leq\omega_{\text{max}}$.
\end{enumerate}
\end{proposition}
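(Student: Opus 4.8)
The plan is to reduce the whole proposition to one scalar identity obtained by testing the quadratic pencil (or, equivalently, the energy-balance law) against an eigenvector, together with the standard Rayleigh-quotient bounds for the Hermitian pencils $(\eta,\alpha)$ and $(R,\alpha)$. For statements 1 and 2: since $\alpha=\alpha^{\mathrm{T}}>0$, the substitution $y=\sqrt{\alpha}\,q$ turns $\frac{(q,\eta q)}{(q,\alpha q)}$ into $\frac{(y,\tilde\eta y)}{(y,y)}$ with $\tilde\eta=\sqrt{\alpha}^{-1}\eta\sqrt{\alpha}^{-1}$ Hermitian positive semidefinite and similar to $\alpha^{-1}\eta$; hence this quotient ranges over $[\min\sigma(\alpha^{-1}\eta),\max\sigma(\alpha^{-1}\eta)]$. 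By Proposition \ref{pspre}, $\max\sigma(\alpha^{-1}\eta)=\omega_{\text{max}}^{2}$, giving (\ref{lod7_0}) unconditionally; and when $\eta^{-1}$ exists, $0\notin\sigma(\alpha^{-1}\eta)$, so $\min\sigma(\alpha^{-1}\eta)=\omega_{\text{min}}^{2}$ and (\ref{lod7_00}) follows. The same substitution with $\mathsf{\tilde{R}}=\sqrt{\alpha}^{-1}R\sqrt{\alpha}^{-1}$ gives statement 2: if $R$ has full rank then $R>0$, every eigenvalue of $\alpha^{-1}R$ is nonzero, so $b_{\text{min}}=\min\sigma(\alpha^{-1}R)$ and $\frac{(q,Rq)}{(q,\alpha q)}\ge b_{\text{min}}$.

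For statement 3 I would use the virial theorem and the energy balance, as the paper suggests. Writing the eigenmode $Q(t)=qe^{-\mathrm{i}\zeta t}$ with $q\ne0$ and evaluating (\ref{radis4_0}) with $\theta=0$ gives $\mathcal{T}=\tfrac12|\zeta|^{2}e^{2\operatorname{Im}\zeta t}(q,\alpha q)$, $\mathcal{V}=\tfrac12 e^{2\operatorname{Im}\zeta t}(q,\eta q)$, and $2\mathcal{R}=\beta|\zeta|^{2}e^{2\operatorname{Im}\zeta t}(q,Rq)$. Since $\operatorname{Re}\zeta\ne0$, Corollary \ref{tvir_eqp} gives $\mathcal{T}=\mathcal{V}$, i.e.\ $|\zeta|^{2}=\frac{(q,\eta q)}{(q,\alpha q)}$, which is the middle equality of (\ref{lod7_0b}); then $\mathcal{H}=2\mathcal{T}$, and feeding this into the energy-balance identity (\ref{radis4_1}), $2\mathcal{R}=-2\operatorname{Im}\zeta\,\mathcal{H}$, and cancelling the factor $|\zeta|^{2}e^{2\operatorname{Im}\zeta t}$ (nonzero because $\operatorname{Re}\zeta\ne0$ forces $\zeta\ne0$) yields $-\operatorname{Im}\zeta=\tfrac{\beta}{2}\frac{(q,Rq)}{(q,\alpha q)}$. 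The strict inequality $-\operatorname{Im}\zeta<|\zeta|$ is just $(\operatorname{Im}\zeta)^{2}<(\operatorname{Re}\zeta)^{2}+(\operatorname{Im}\zeta)^{2}$, valid since $\operatorname{Re}\zeta\ne0$, and $|\zeta|\le\omega_{\text{max}}$ is statement 1. (Equivalently and more directly, testing $C(\zeta,\beta)q=0$ against $q$ and dividing by $(q,\alpha q)$ shows $\zeta$ solves the scalar quadratic in (\ref{lod6}); $\operatorname{Re}\zeta\ne0$ makes its discriminant positive, and then $-\operatorname{Im}\zeta=\tfrac{\beta}{2}\frac{(q,Rq)}{(q,\alpha q)}$ and $|\zeta|^{2}=\frac{(q,\eta q)}{(q,\alpha q)}$ drop out at once.)

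Statements 4 and 5 are then short deductions. For 4, assume $\det C(\zeta,\beta)=0$ but $\operatorname{Re}\zeta\ne0$; pick $q\in\operatorname{Ker}C(\zeta,\beta)$, $q\ne0$, and apply statement 3: under hypothesis (i) we get $\omega_{\text{max}}\le-\operatorname{Im}\zeta<|\zeta|\le\omega_{\text{max}}$, and under (ii) we get $\omega_{\text{min}}\le|\zeta|<\omega_{\text{min}}$ — both impossible — so $\operatorname{Re}\zeta=0$. For 5, equipartition $\mathcal{T}=\mathcal{V}$ for an eigenmode forces $|\zeta|^{2}=\frac{(q,\eta q)}{(q,\alpha q)}$ exactly as above, and statement 1 then confines $|\zeta|$ to $[0,\omega_{\text{max}}]$, or to $[\omega_{\text{min}},\omega_{\text{max}}]$ when $\eta^{-1}$ exists. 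I do not expect a real obstacle here; the only care needed is the bookkeeping of zero eigenvalues — keeping the hypotheses ``$\eta^{-1}$ exists'' and ``$R$ has full rank'' exactly where $\omega_{\text{min}}$ and $b_{\text{min}}$, which are defined via the nonzero part of the spectrum, must coincide with the honest minima — and, in statement 3, confirming that the cancelled factor $|\zeta|^{2}e^{2\operatorname{Im}\zeta t}$ is nonzero, which $\operatorname{Re}\zeta\ne0$ guarantees.
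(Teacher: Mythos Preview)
Your proposal is correct and follows essentially the same approach as the paper: statements 1 and 2 via the substitution $y=\sqrt{\alpha}\,q$ and Rayleigh-quotient bounds for $\sqrt{\alpha}^{-1}\eta\sqrt{\alpha}^{-1}$ and $\sqrt{\alpha}^{-1}R\sqrt{\alpha}^{-1}$ combined with Proposition~\ref{pspre}; statement 3 via the virial theorem (Corollary~\ref{tvir_eqp}) and the energy balance (\ref{radis4_1}); and statements 4 and 5 as immediate contrapositive/deductive consequences of statement 3 and statement 1. Your parenthetical alternative for statement 3 --- testing $C(\zeta,\beta)q=0$ against $q$ to read off the root formula (\ref{lod6}) directly --- is a clean shortcut the paper alludes to just before the proposition but does not exploit in its own proof, though it leads to the same identities.
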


\begin{proof}
The first two statements are proved in Section \ref{sproofs}. The third
statement will now be proved using the energy conservation law (\ref{radis4})
and the virial theorem \ref{tvir}, or more specifically, Corollary
\ref{tvir_eqp} on the equipartition of energy. Suppose $q\in\operatorname{Ker}%
C\left(  \zeta,\beta\right)  ,$ $q\not =0~$with $\operatorname{Re}\zeta
\not =0$. Then $Q\left(  t\right)  =qe^{-\mathrm{i}\zeta t}$ is an eigenmode
of the Lagrangian system (\ref{radis2a}). It follows from (\ref{radis4}) and
Corollary \ref{tvir_eqp} that for the system energy $\mathcal{H}%
=\mathcal{T}\left(  \dot{Q},Q\right)  +\mathcal{V}\left(  \dot{Q},Q\right)  $
we have%
\begin{align}
2\mathcal{R}\left(  \dot{Q}\right)   &  =-\frac{d}{dt}\mathcal{H}%
=-2\operatorname{Im}\zeta\mathcal{H},\label{lod7_0c}\\
\mathcal{T}\left(  \dot{Q}\right)   &  =\mathcal{V}\left(  Q\right)  ,
\label{lod7_0d}%
\end{align}
where $\mathcal{T}\left(  \dot{Q},Q\right)  =\frac{1}{2}\left(  \dot{Q}%
,\eta\dot{Q}\right)  $ and $\mathcal{V}\left(  \dot{Q},Q\right)  =\frac{1}%
{2}\left(  Q,\eta Q\right)  $ are the kinetic and potential energy,
respectively, of this state $Q$ and $2\mathcal{R}\left(  \dot{Q}\right)
=\left(  \dot{Q},\beta R\dot{Q}\right)  $ is the dissipated power. Notice that
for $\mathcal{E}\in\left\{  \mathcal{H},\mathcal{T},\mathcal{V},\mathcal{R}%
\right\}  $ we have $\mathcal{E}\left(  t\right)  =e^{2\operatorname{Im}\zeta
t}\mathcal{E}\left(  0\right)  $. It now follows from this and (\ref{lod7_0c}%
), (\ref{lod7_0d}) that%
\begin{align}
\frac{\beta}{2}\frac{\left(  q,Rq\right)  }{\left(  q,\alpha q\right)  }  &
=\frac{\mathcal{R}\left(  \dot{Q}\right)  }{2\mathcal{T}\left(  \dot
{Q},Q\right)  }=\frac{-\operatorname{Im}\zeta\mathcal{H}}{\mathcal{H}%
}=-\operatorname{Im}\zeta,\label{lod7_0e}\\
\sqrt{\frac{\left(  q,\eta q\right)  }{\left(  q,\alpha q\right)  }}  &
=\sqrt{\left\vert \zeta\right\vert ^{2}\frac{\mathcal{V}\left(  \dot
{Q},Q\right)  }{\mathcal{T}\left(  \dot{Q},Q\right)  }}=\left\vert
\zeta\right\vert . \label{lod7_0f}%
\end{align}
The proof of the third statement now follows from these two equalities and the
inequality (\ref{lod7_0}). The fourth statement follows immediately from the
inequality (\ref{lod7_0b}) and the inequality (\ref{lod7_00}).

The fifth statement follows from the fact that if $\mathcal{T}\left(  \dot
{Q},Q\right)  =\mathcal{V}\left(  \dot{Q},Q\right)  $ then the equality
(\ref{lod7_0f}) holds, regardless of whether $\operatorname{Re}\zeta\not =0$
or not as long as $\zeta\not =0$, and so the result follows immediately from
the inequalities (\ref{lod7_0}) and (\ref{lod7_00}) if $\zeta\not =0$. And
since the existence of $\eta^{-1}$ prohibits $\zeta=0$, the proof now follows.
This completes the proof.
\end{proof}

From the inequalities (\ref{lod7_0a}), (\ref{lod7_0b}) one can see clearly
that if the matrix $R$ has full rank then all the eigenvalues of $A\left(
\beta\right)  $ are purely imaginary once
\begin{equation}
\beta\geq2\frac{\omega_{\text{max}}}{b_{\text{min}}}. \label{lod7_3}%
\end{equation}
In other words, (\ref{lod7_3}) is a sufficient condition for all of the
eigenmodes of the canonical system (\ref{radis8}) [or, equivalently, the
Lagrangian system (\ref{radis2a})] to be overdamped, provided $R$ has full rank.

But it is also clear that this argument must be refined if $R$ does not have
full rank, i.e., $N_{R}<N$, since then $\det R=0$ and hence $\min\sigma\left(
\alpha^{-1}R\right)  =0$. The refinement we use is perturbation theory. In
particular, we use our results in Sections \ref{sspsy}, \ref{smdhlr},
\ref{svir} on the modal dichotomy and the virial theorem to derive our main
results on overdamping. Our argument is essentially based on determining which
of the eigenmodes cannot maintain the energy equipartition in Corollary
\ref{tvir_eqp}, i.e., the equality between the kinetic and potential energy,
when $\beta$ is sufficiently large by either using the inequalities in
Corollary \ref{cmdic} or using the asymptotic expansions in Section
\ref{smdhlr} for the eigenmodes in (\ref{pod20}). This is where the fifth
statement in Proposition \ref{pfuineq} is relevant.

We derive the following results on overdamping using the inequalities in
Proposition \ref{pfuineq}.

\begin{theorem}
[complete overdamping]\label{tcovd}If $\beta\geq2\frac{\omega_{\text{max}}%
}{b_{\text{min}}}$ and $R $ has full rank, i.e., $N_{R}=N$, then all the
eigenmodes of the canonical system (\ref{radis8}) are overdamped, i.e., for
the system operator $A\left(  \beta\right)  =\Omega-\mathrm{i}\beta B$ we have%
\begin{equation}
\sigma\left(  A\left(  \beta\right)  \right)  =\left\{  \zeta\in\sigma\left(
A\left(  \beta\right)  \right)  :\operatorname{Re}\zeta=0\right\}  .
\label{tcovd1}%
\end{equation}

\end{theorem}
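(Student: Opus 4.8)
The plan is to deduce the statement directly from the fundamental inequalities already established in Proposition \ref{pfuineq}, using that a full-rank $R$ forces the dissipative Rayleigh quotient $(q,Rq)/(q,\alpha q)$ to be bounded below uniformly in $q\neq 0$. First I would argue by contradiction: suppose $\zeta\in\sigma\left(A\left(\beta\right)\right)$ with $\operatorname{Re}\zeta\neq 0$. By the spectral equivalence in Corollary \ref{cpfsp}, $\sigma\left(A\left(\beta\right)\right)=\sigma\left(C\left(\cdot,\beta\right)\right)$, so $\det C\left(\zeta,\beta\right)=0$ and there is a nonzero $q\in\operatorname{Ker}C\left(\zeta,\beta\right)$; equivalently $Q\left(t\right)=qe^{-\mathrm{i}\zeta t}$ is an oscillatory eigenmode of the Lagrangian system (\ref{radis2a}). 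Note $\left(q,\alpha q\right)>0$ since $\alpha>0$ and $q\neq 0$, so all the Rayleigh quotients below are well defined.

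Next I would apply part 3 of Proposition \ref{pfuineq} to this $q$, which (via the energy balance law (\ref{radis4}) and the equipartition Corollary \ref{tvir_eqp}) yields the \emph{strict} bound $\frac{\beta}{2}\frac{\left(q,Rq\right)}{\left(q,\alpha q\right)}=-\operatorname{Im}\zeta<\left\vert\zeta\right\vert=\sqrt{\frac{\left(q,\eta q\right)}{\left(q,\alpha q\right)}}\leq\omega_{\text{max}}$, where strictness comes from $\left\vert\operatorname{Im}\zeta\right\vert<\left\vert\zeta\right\vert$ holding precisely because $\operatorname{Re}\zeta\neq 0$. On the other hand, since $R$ has full rank, part 2 of the same proposition gives $\frac{\beta}{2}\frac{\left(q,Rq\right)}{\left(q,\alpha q\right)}\geq\frac{\beta}{2}b_{\text{min}}$, and the hypothesis $\beta\geq 2\frac{\omega_{\text{max}}}{b_{\text{min}}}$ makes the right-hand side at least $\omega_{\text{max}}$. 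Chaining the two inequalities produces $\omega_{\text{max}}\leq\frac{\beta}{2}\frac{\left(q,Rq\right)}{\left(q,\alpha q\right)}<\omega_{\text{max}}$, a contradiction. Hence no eigenvalue of $A\left(\beta\right)$ can have nonzero real part, which is exactly (\ref{tcovd1}). This is essentially the assertion already flagged in the text right after Proposition \ref{pfuineq} (cf. (\ref{lod7_0a})--(\ref{lod7_3})).

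There is no real obstacle here — the theorem is a clean packaging of inequalities that are already available — but the one point I would be careful about is the degenerate case $\eta=0$, where $\omega_{\text{max}}=0$ and the hypothesis on $\beta$ becomes vacuous. In that case part 3 alone already forces $\left\vert\zeta\right\vert\leq\omega_{\text{max}}=0$, hence $\zeta=0$, contradicting $\operatorname{Re}\zeta\neq 0$, so the conclusion still holds; and I would also confirm that the pairing of a non-strict inequality (from part 2) against a strict one (from part 3) is what lets the argument include the boundary value $\beta=2\frac{\omega_{\text{max}}}{b_{\text{min}}}$ without a gap.
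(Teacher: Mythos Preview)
Your proof is correct and follows essentially the same route as the paper's own argument: assume $\operatorname{Re}\zeta\neq 0$, invoke parts 2 and 3 of Proposition \ref{pfuineq} to obtain $\omega_{\text{max}}\leq\frac{\beta}{2}\frac{(q,Rq)}{(q,\alpha q)}<\omega_{\text{max}}$, and derive a contradiction. Your added remarks on the degenerate case $\eta=0$ and on why the boundary value $\beta=2\omega_{\text{max}}/b_{\text{min}}$ is included are careful and correct, though the paper does not single these out.
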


\begin{example}
[overdamped regime is optimal]\label{eodop}The following example shows for the
class of system operators $A=\Omega-\mathrm{i}\beta B$ under consideration in
this paper, the regime $\beta>2\frac{\omega_{\text{max}}}{b_{\text{min}}}$ is
optimal for guaranteeing overdamping in the sense that we can always find an
example of a system operator satisfying $\beta<2\frac{\omega_{\text{max}}%
}{b_{\text{min}}}$ with $\beta$ as close to $2\frac{\omega_{\text{max}}%
}{b_{\text{min}}}$ as we like such that the set $\left\{  \zeta\in
\sigma\left(  A\left(  \beta\right)  \right)  :\operatorname{Re}%
\zeta=0\right\}  $ is empty. \ In particular, it is proven using the following
family of system operators%
\begin{gather}
\Omega=\left[
\begin{array}
[c]{cc}%
0 & -\mathrm{i}\\
\mathrm{i} & 0
\end{array}
\right]  ,\text{ \ \ }B=\left[
\begin{array}
[c]{cc}%
1 & 0\\
0 & 0
\end{array}
\right]  ,\label{eodop1}\\
A\left(  \beta\right)  =\Omega-\mathrm{i}\beta B,\text{ \ \ }\beta
\geq0,\nonumber
\end{gather}
derived from the system with one degree of freedom having Lagrangian and
Rayleigh dissipation function
\begin{gather}
\mathcal{L}=\mathcal{L}\left(  Q,\dot{Q}\right)  =\frac{1}{2}\left[
\begin{array}
[c]{l}%
\dot{Q}\\
Q
\end{array}
\right]  ^{\mathrm{T}}M_{\mathrm{L}}\left[
\begin{array}
[c]{l}%
\dot{Q}\\
Q
\end{array}
\right]  ,\quad M_{\mathrm{L}}=\left[
\begin{array}
[c]{ll}%
1 & 0\\
0 & -1
\end{array}
\right]  ,\label{eodop2}\\
R=\frac{1}{2}\left(  \dot{Q},\beta R\dot{Q}\right)  ,\text{ \ \ }%
R=1\text{,}\nonumber
\end{gather}
which is just is the Lagrangian for a one degree-of-freedom spring-mass-damper
system with mass $1$, spring constant $1$, and viscous damping coefficient
$\beta$. Then we have%
\begin{align}
b_{\text{min}}  &  =1,\text{ \ \ }\omega_{\text{max}}=1,\label{eodop3}\\
\sigma\left(  A\left(  \beta\right)  \right)   &  =\left\{  \xi_{\pm}\left(
\beta\right)  =-\mathrm{i}\frac{\beta}{2}\pm\sqrt{1-\left(  \frac{\beta}%
{2}\right)  ^{2}}\right\}  ,\nonumber
\end{align}
and therefore%
\begin{align}
\text{if }\beta &  >2\text{ then }\left\{  \zeta\in\sigma\left(  A\left(
\beta\right)  \right)  :\operatorname{Re}\zeta=0\right\}  =\sigma\left(
A\left(  \beta\right)  \right)  ,\label{eodop4}\\
\text{if }\beta &  =2\text{ then }\sigma\left(  A\left(  \beta\right)
\right)  =\left\{  -\mathrm{i}\right\}  ,\nonumber\\
\text{if }\beta &  <2\text{ then }\left\{  \zeta\in\sigma\left(  A\left(
\beta\right)  \right)  :\operatorname{Re}\zeta=0\right\}  =\emptyset.\nonumber
\end{align}
In this case, at the boundary of the overdamping regime, where $\beta
=2\frac{\omega_{\text{max}}}{b_{\text{min}}}$, we have critical damping of the system.
\end{example}

Now we want to answer the question of which modes of the canonical system
(\ref{radis8}) are overdamped in the case when $R$ does not have full rank,
i.e., $N_{R}<N$. \ The next theorem gives us a partial answer to the question.

\begin{theorem}
[partial overdamping]\label{tpovd}If $\beta>2\frac{\omega_{\text{max}}%
}{b_{\text{min}}}$ and $N_{R}<N$ then the modal dichotomy occurs as in Theorem
\ref{tmdic} and Corollary \ref{cmdic}. In addition, any eigenmode of the
canonical system (\ref{radis8}) in the invariant subspace $H_{h\ell}\left(
\beta\right)  $ (the high-loss susceptible subspace) is overdamped, i.e.,
\begin{equation}
\sigma\left(  A\left(  \beta\right)  |_{H_{h\ell}\left(  \beta\right)
}\right)  \subseteq\left\{  \zeta\in\sigma\left(  A\left(  \beta\right)
\right)  :\operatorname{Re}\zeta=0\right\}  . \label{tpovd1}%
\end{equation}

\end{theorem}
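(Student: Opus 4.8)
The plan is to obtain this as a direct consequence of three results already established in the excerpt: the modal dichotomy (Theorem \ref{tmdic}), the dissipative characterization of the high-loss susceptible subspace (Corollary \ref{cmdic}), and the fourth statement of the fundamental inequalities (Proposition \ref{pfuineq}). First, since the hypothesis $\beta>2\frac{\omega_{\text{max}}}{b_{\text{min}}}$ is exactly the hypothesis of Theorem \ref{tmdic}, the modal dichotomy occurs: $\sigma\left(A\left(\beta\right)\right)=\sigma_{0}\left(A\left(\beta\right)\right)\cup\sigma_{1}\left(A\left(\beta\right)\right)$ with the union disjoint, and there is a unique invariant subspace $H_{h\ell}\left(\beta\right)$ of $A\left(\beta\right)$, of dimension $N_{R}$, with $\sigma\left(A\left(\beta\right)|_{H_{h\ell}\left(\beta\right)}\right)=\sigma_{1}\left(A\left(\beta\right)\right)$; since moreover $N_{R}<N$, the complementary low-loss susceptible subspace $H_{\ell\ell}\left(\beta\right)$ is nontrivial and all conclusions of Corollary \ref{cmdic} apply.

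Next I would take an arbitrary eigenmode $v\left(t\right)=we^{-\mathrm{i}\zeta t}$ of the canonical system (\ref{radis8}) with $w\neq0$ and $w\in H_{h\ell}\left(\beta\right)$, so that $A\left(\beta\right)w=\zeta w$. Because $H_{h\ell}\left(\beta\right)$ is invariant, $\zeta$ is an eigenvalue of the restriction $A\left(\beta\right)|_{H_{h\ell}\left(\beta\right)}$, hence $\zeta\in\sigma_{1}\left(A\left(\beta\right)\right)$. By the second equality in (\ref{cmdic1}) of Corollary \ref{cmdic} this forces $-\operatorname{Im}\zeta\geq\beta b_{\text{min}}-\omega_{\text{max}}>\omega_{\text{max}}$, and in particular $-\operatorname{Im}\zeta\geq\omega_{\text{max}}$. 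Thus condition (i) of the fourth statement of Proposition \ref{pfuineq} is met for $\zeta\in\sigma\left(A\left(\beta\right)\right)$, whence $\operatorname{Re}\zeta=0$, i.e., the eigenmode is overdamped. Since $w\in H_{h\ell}\left(\beta\right)$ was arbitrary, this is precisely the inclusion (\ref{tpovd1}).

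The argument itself presents essentially no obstacle — it is bookkeeping once the machinery is assembled — so the real work has been front-loaded into the earlier results: establishing the modal dichotomy (which rests on the eigenvalue disc bounds of Proposition \ref{pevbd} and the perturbation theory for matrices with a large imaginary part in Appendix \ref{apxebd}) and establishing the fourth statement of Proposition \ref{pfuineq} (which rests on the virial theorem, Theorem \ref{tvir}, and the energy equipartition of Corollary \ref{tvir_eqp}). As a consistency check, the last step can alternatively be routed through the fifth statement of Proposition \ref{pfuineq}: here $\left\vert\zeta\right\vert\geq-\operatorname{Im}\zeta>\omega_{\text{max}}$, so that statement forbids the energy equipartition $\mathcal{T}\left(\dot{Q},Q\right)=\mathcal{V}\left(\dot{Q},Q\right)$; by the contrapositive of Corollary \ref{tvir_eqp} (valid since $\theta=0$) this means $\operatorname{Re}\zeta=0$ for the associated Lagrangian eigenmode, and hence for every $\zeta\in\sigma\left(A\left(\beta\right)\right)=\sigma\left(C\left(\cdot,\beta\right)\right)$ arising from $H_{h\ell}\left(\beta\right)$.
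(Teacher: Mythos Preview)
Your proof is correct and follows essentially the same route as the paper: invoke Theorem \ref{tmdic} and Corollary \ref{cmdic} to obtain $-\operatorname{Im}\zeta>\omega_{\text{max}}$ for every $\zeta\in\sigma\left(A\left(\beta\right)|_{H_{h\ell}\left(\beta\right)}\right)$, then apply the fourth statement of Proposition \ref{pfuineq} to conclude $\operatorname{Re}\zeta=0$. The consistency check via the fifth statement and Corollary \ref{tvir_eqp} is a nice addition but not a different argument.
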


Now we ask the important and natural question: Can any of the modes of the
system operator $A\left(  \beta\right)  $ in the low-loss susceptible subspace
$H_{\ell\ell}\left(  \beta\right)  $ be overdamped when $R$ from the Rayleigh
dissipative function $\mathcal{R}$ in (\ref{radis1}) does not have full rank,
i.e., in the case when the Lagrangian system is a model of a two-component
composite with a lossy and a lossless component? This question will be
investigated in the next section.

\subsection{Selective overdamping\label{spad}}

We are interested in solving the problem of describing all the modes of the
canonical system (\ref{radis8}) [or, equivalently, the Lagrangian system
(\ref{radis2a})], with system operator $A=$ $\Omega-\mathrm{i}\beta B$, that
are overdamped when losses are sufficiently large, i.e., $\beta\gg1$. \ In the
previous section we solved the problem for Lagrangian systems with complete
Rayleigh dissipative, i.e., $R$ has full rank, for then Theorem \ref{tcovd}
tells us that all the modes will become overdamped. On the other hand, for
Lagrangian systems with incomplete Rayleigh dissipative, i.e., $R$ does not
have full rank, then Theorem \ref{tpovd} is a partial solution to the problem
since it tells us that, due to sufficiently large losses, modal dichotomy
occurs splitting the modes into the two $A\left(  \beta\right)  $-invariant
subspaces $H_{\ell\ell}\left(  \beta\right)  $ and $H_{h\ell}\left(
\beta\right)  $ (the high-loss and low-loss susceptible subspaces,
respectively) with the modes belonging to $H_{h\ell}\left(  \beta\right)  $
overdamped. Thus to solve the problem completely we need to describe which of
the modes in the low-loss susceptible subspace $H_{\ell\ell}\left(
\beta\right)  $ can be overdamped when losses are sufficiently large. In this
section we solve the problem asymptotically as losses become extremely large,
i.e., as $\beta\rightarrow\infty$, using the virial theorem \ref{tvir} and, in
particular, energy equipartition \ref{tvir_eqp} for the eigenmodes and the
perturbation theory developed in \cite{FigWel1} which we described in Section
\ref{smdhlr}.

Our description of the selective overdamping phenomenon is based on the modal
dichotomy in the high-loss regime $\beta\gg1$ as described in Section
\ref{smdhlr}. In particular, we use Theorem \ref{cpodr} to characterize
overdamping in terms of the high-loss and low-loss eigenmodes $v_{j}\left(
t,\beta\right)  =w_{j}\left(  \beta\right)  e^{-\mathrm{i}\zeta_{j}\left(
\beta\right)  t}$, $1\leq j\leq2N$ of the canonical system (\ref{radis8})
which split into the two distinct classes based on their dissipative
properties (\ref{pod20}).

We say that such an eigenmode $v_{j}\left(  t,\beta\right)  $ is
\emph{overdamped} for $\beta\gg1$ if $\operatorname{Re}\zeta_{j}\left(
\beta\right)  =0$ for all $\beta$ sufficiently large. We say that such an
eigenmode \emph{remains oscillatory} for $\beta\gg1$ if $\operatorname{Re}%
\zeta_{j}\left(  \beta\right)  \not =0$ for all $\beta$ sufficiently large.

\subsubsection{On the overdamping of the high-loss eigenmodes\label{spad1}}

Our first result tells us that all of the high-loss eigenmodes $v_{j}\left(
t,\beta\right)  =w_{j}\left(  \beta\right)  e^{-\mathrm{i}\zeta_{j}\left(
\beta\right)  t}$, $1\leq j\leq N_{R}$ from (\ref{pod20}) are overdamped
regardless of whether $R$ has full rank or not. It is an immediate corollary
of Theorem \ref{tpovd} and Theorem \ref{cpodr}.

\begin{theorem}
[overdamped high-loss modes]\label{thmod}All of the high-loss eigenmodes from
(\ref{pod20}) are overdamped for $\beta\gg1$.
\end{theorem}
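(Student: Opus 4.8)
The plan is to obtain Theorem \ref{thmod} directly from the overdamping results of Section \ref{scpod} together with the high-loss-regime description of the modal dichotomy in Theorem \ref{cpodr}. Recall that an eigenmode $v_j(t,\beta)=w_j(\beta)e^{-\mathrm{i}\zeta_j(\beta)t}$ is overdamped for $\beta\gg1$ precisely when $\operatorname{Re}\zeta_j(\beta)=0$ for all sufficiently large $\beta$. Since there are only finitely many high-loss modes and every result cited below holds for all $\beta$ beyond some threshold, it suffices to produce, for each $j$ with $1\le j\le N_R$, a threshold past which $\operatorname{Re}\zeta_j(\beta)=0$, and then take the maximum of these finitely many thresholds together with the thresholds appearing in Theorems \ref{tcovd}, \ref{tpovd}, \ref{cpodr}.

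First I would split on whether $R$ has full rank. If $N_R=N$, then Theorem \ref{tcovd} applies once $\beta\ge 2\omega_{\text{max}}/b_{\text{min}}$: every $\zeta\in\sigma(A(\beta))$ satisfies $\operatorname{Re}\zeta=0$. In particular the complex frequencies $\zeta_j(\beta)$, $1\le j\le N_R$, of the high-loss eigenmodes in (\ref{pod20}), being eigenvalues of $A(\beta)$, satisfy $\operatorname{Re}\zeta_j(\beta)=0$ for all $\beta\ge 2\omega_{\text{max}}/b_{\text{min}}$, so these modes are overdamped.

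If instead $N_R<N$, I would invoke Theorem \ref{tpovd}: for $\beta>2\omega_{\text{max}}/b_{\text{min}}$ the modal dichotomy of Theorem \ref{tmdic} holds and $\sigma(A(\beta)|_{H_{h\ell}(\beta)})\subseteq\{\zeta\in\sigma(A(\beta)):\operatorname{Re}\zeta=0\}$. It then only remains to recognize the high-loss frequencies $\zeta_j(\beta)$, $1\le j\le N_R$, as elements of $\sigma(A(\beta)|_{H_{h\ell}(\beta)})$; this is exactly the content of Theorem \ref{cpodr}, which gives, for $\beta$ sufficiently large, $\sigma(A(\beta)|_{H_{h\ell}(\beta)})=\{\zeta_j(\beta):1\le j\le N_R\}$ and $H_{h\ell}(\beta)=\operatorname{span}\{w_j(\beta):1\le j\le N_R\}$. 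Combining the inclusion from Theorem \ref{tpovd} with this identification yields $\operatorname{Re}\zeta_j(\beta)=0$ for all $1\le j\le N_R$ once $\beta$ exceeds the larger of the two thresholds, so each high-loss eigenmode $v_j(t,\beta)$ is overdamped for $\beta\gg1$.

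There is essentially no analytic obstacle here; the statement is a bookkeeping corollary. The only point needing care is the case distinction, since Theorem \ref{tpovd} is stated under the hypothesis $N_R<N$ and so the full-rank case must be handled separately through Theorem \ref{tcovd}. A second, minor point to state explicitly is that the eigenvectors $\{w_j(\beta)\}_{j=1}^{N_R}$ appearing in the perturbation expansion (\ref{pod20}) genuinely span the high-loss susceptible subspace $H_{h\ell}(\beta)$ — so that their frequencies really are the eigenvalues of $A(\beta)$ restricted to $H_{h\ell}(\beta)$ — which is guaranteed by Theorem \ref{cpodr} but is the one nontrivial identification being used.
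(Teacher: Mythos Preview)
Your proposal is correct and follows essentially the same approach as the paper: use Theorem \ref{cpodr} to place the high-loss eigenvalues $\zeta_j(\beta)$, $1\le j\le N_R$, in $\sigma(A(\beta)|_{H_{h\ell}(\beta)})$, then invoke the overdamping results of Section \ref{scpod} to force $\operatorname{Re}\zeta_j(\beta)=0$. The paper's proof is terser and cites only Theorem \ref{tpovd} (together with Theorem \ref{cpodr}) without explicitly separating the full-rank case; your added case split on $N_R=N$ versus $N_R<N$, invoking Theorem \ref{tcovd} in the former case, is a bit more careful about matching hypotheses but does not change the substance of the argument.
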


\subsubsection{On the overdamping of the low-loss eigenmodes\label{spad2}}

Now we will discuss and give a solution to the problem of determining which of
the low-loss eigenmodes $v_{j}\left(  t\right)  =w_{j}\left(  \beta\right)
e^{-\mathrm{i}\zeta_{j}\left(  \beta\right)  t}$, $N_{R}+1\leq j\leq2N$ in
(\ref{pod20}) will be overdamped. \ In order to do this we work under the
following nondegeneracy condition:

\begin{condition}
[nondegeneracy]\label{cndnd}For our study of the overdamping of the low-loss
modes in this section, we assume henceforth that
\begin{equation}
\operatorname{Ker}\eta\cap\operatorname{Ker}R=\left\{  0\right\}  .
\label{llod0}%
\end{equation}

\end{condition}

We have two main goals in this section. Our first is to find necessary and
sufficient conditions to have $\operatorname{Re}\zeta_{j}\left(  \beta\right)
=0$ for $\beta\gg1$. Let $\kappa$ be the number of low-loss eigenmodes that
are overdamped for $\beta\gg1$. Our second goal is to calculate this number
$\kappa$.

We begin by recalling some of the important properties of the low-loss
eigenpairs $\zeta_{j}\left(  \beta\right)  $, $w_{j}\left(  \beta\right)  $,
$N_{R}+1\leq j\leq2N$ in (\ref{pod11}) for the system operator $A\left(
\beta\right)  =\Omega-\mathrm{i}\beta B$ that we will need in this section.
\ By (\ref{pod14}) and (\ref{pod16}) the limits%
\begin{equation}
\lim_{\beta\rightarrow\infty}w_{j}\left(  \beta\right)  =\mathring{w}%
_{j},\text{ \ \ }\lim_{\beta\rightarrow\infty}\operatorname{Re}\zeta
_{j}\left(  \beta\right)  =\rho_{j},\text{ \ \ }\lim_{\beta\rightarrow\infty
}\operatorname{Im}\zeta_{j}\left(  \beta\right)  =0 \label{llod1}%
\end{equation}
exist and from (\ref{pod8}), (\ref{pod15}) the limiting vectors $\left\{
\mathring{w}_{j}\right\}  _{j=N_{R}+1}^{2N}$ are an orthonormal basis for
$\operatorname{Ker}B$ which also diagonalize the self-adjoint operator
$\Omega_{1}$ from (\ref{pod7}) satisfying%
\begin{equation}
B\mathring{w}_{j}=0,\text{ \ \ }\Omega_{1}\mathring{w}_{j}=\rho_{j}%
\mathring{w}_{j},\text{ \ \ }N_{R}+1\leq j\leq2N. \label{llod2}%
\end{equation}

\textbf{Overview of main results. }Before we proceed, let us now summarize the
results in this section. First, its obvious from this discussion that a
necessary condition for the low-loss eigenmode $v_{j}\left(  t,\beta\right)  $
to be overdamped for $\beta\gg1$ is $\rho_{j}=0$ (even though it does so with
a slow damping of order $\beta^{-1}$, as seen in (\ref{llod1}) and
(\ref{pod14})), but what is not obvious is that this is in fact a sufficient
condition. Although in the case $\eta^{-1}$ exists (i.e., $\operatorname{Ker}%
\eta=\left\{  0\right\}  $), it just follows from the fourth statement in
Proposition \ref{pfuineq} since if $\rho_{j}=0$ then by (\ref{llod1}) we have
$\zeta_{j}\left(  \beta\right)  \rightarrow0$ which implies $\left\vert
\zeta_{j}\left(  \beta\right)  \right\vert <\omega_{\text{min}}$ for $\beta
\gg1$ and hence $\operatorname{Re}\zeta_{j}\left(  \beta\right)  =0$ for
$\beta\gg1$. But this argument fails when\ $\operatorname{Ker}\eta
\not =\left\{  0\right\}  $. Thus we must find an alternative method to show
that $\rho_{j}=0$ is equivalent to $\operatorname{Re}\zeta_{j}\left(
\beta\right)  =0$ for $\beta\gg1$. The key idea that leads to an alternative
method is to realize that by the virial theorem \ref{tvir} and the fifth
statement in Proposition \ref{pfuineq}, when $\eta^{-1}$ exists the modes with
$\rho_{j}=0$ are exactly the modes that break the energy equipartition.

Hence our general method is to show, even in the case $\operatorname{Ker}%
\eta=\left\{  0\right\}  $, that for a low-loss modes with $\rho_{j}=0$, there
is a breakdown of the virial theorem \ref{tvir} since as losses increase,
i.e., as $\beta\rightarrow\infty$, the equality between the kinetic and
potential energy (see Corollary \ref{tvir_eqp}) of the corresponding eigenmode
of the Lagrangian system (\ref{radis2a}) can no longer be maintained and must
eventually break which forces the modes to be non-oscillatory for $\beta\gg1$,
i.e., overdamped. This proves that $\rho_{j}=0$ is equivalent to
$\operatorname{Re}\zeta_{j}\left(  \beta\right)  =0$ for $\beta\gg1$.

Thus it follows that the low-loss eigenmodes in (\ref{pod20}) of the canonical
system (\ref{radis8}) which are overdamped for $\beta\gg1$, their limiting
vectors (satisfy $\lim_{\beta\rightarrow\infty}v_{j}\left(  t,\beta\right)
=\mathring{w}_{j}\in$ $\operatorname{Ker}\Omega_{1}$ for $t$ fixed) form a
orthonormal basis for $\operatorname{Ker}\Omega_{1}$. This implies of course
that $\kappa=\dim\operatorname{Ker}\Omega_{1}$ is the number of low-loss
eigenmodes that are overdamped for $\beta\gg1$. A major result of this section
is that in fact $\kappa=N_{R}$, where $N_{R}=\operatorname{rank}%
B=\operatorname{rank}R$. \ But $N_{R} $ is also the number of high-loss
eigenpairs in (\ref{pod11}). \ Thus we conclude using these results and
Theorem \ref{thmod} that the number of eigenmodes in (\ref{pod20}) which are
overdamped in the high-loss regime $\beta\gg1$ is exactly $2N_{R}$ with
$N_{R}$ of these being low-loss eigenmodes and the rest being all the
high-loss eigenmodes. Moreover, in the case $N_{R}<N$, the remaining
$2N-2N_{R}>0$ modes are low-loss oscillatory modes with an extremely high
quality factor that actually increases as the losses increase, i.e., as
$\beta\rightarrow\infty$.

\textbf{Main results. }We now give the statements of our main results. \ The
proof of the first theorem we give in this section as it is enlightening. The
proofs of the rest of the statements below are found in Section \ref{sproofs}.

\begin{theorem}
[overdamped low-loss modes]\label{tlmod}A necessary and sufficient condition
for a low-loss eigenmode $v_{j}\left(  t,\beta\right)  =w_{j}\left(
\beta\right)  e^{-\mathrm{i}\zeta_{j}\left(  \beta\right)  t}$ from
(\ref{pod20}) to be overdamped for $\beta\gg1$ is $\rho_{j}=0$ (or
equivalently, $\mathring{w}_{j}\in\operatorname{Ker}\Omega_{1}$).
\end{theorem}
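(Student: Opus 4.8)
The plan is to prove the two implications separately. Necessity is immediate: if $v_j(t,\beta)$ is overdamped for $\beta\gg1$ then $\operatorname{Re}\zeta_j(\beta)=0$ for all large $\beta$, and letting $\beta\to\infty$ in the limit relations (\ref{llod1}) gives $\rho_j=\lim_{\beta\to\infty}\operatorname{Re}\zeta_j(\beta)=0$; the equivalent formulation $\mathring{w}_j\in\operatorname{Ker}\Omega_1$ then follows from the identity $\Omega_1\mathring{w}_j=\rho_j\mathring{w}_j$ in (\ref{llod2}) since $\mathring{w}_j\neq0$. The substance of the theorem is the converse, which I would prove by contradiction, combining the virial theorem and its equipartition corollary with the nondegeneracy Condition \ref{cndnd}.

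So suppose $\rho_j=0$ but, contrary to the claim, $v_j$ is not overdamped for $\beta\gg1$; then $\{\beta:\operatorname{Re}\zeta_j(\beta)\neq0\}$ is unbounded and I may fix a sequence $\beta_n\to\infty$ with $\operatorname{Re}\zeta_j(\beta_n)\neq0$ for every $n$, each $\beta_n$ large enough that the modal dichotomy of Section \ref{smdhlr} is in force. By the spectral equivalence of Corollary \ref{cpfsp} we have $\det C(\zeta_j(\beta_n),\beta_n)=0$, so there is $q_n\neq0$ with $C(\zeta_j(\beta_n),\beta_n)q_n=0$, which I normalize so that $(q_n,\alpha q_n)=1$. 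Since $\theta=0$ (Condition \ref{cndod}), $Q(t)=q_ne^{-\mathrm{i}\zeta_j(\beta_n)t}$ is an eigenmode of the Lagrangian system (\ref{radis2a}), oscillatory because $\operatorname{Re}\zeta_j(\beta_n)\neq0$, so Proposition \ref{pfuineq}(3) — equation (\ref{lod7_0b}), there derived from the virial theorem \ref{tvir} and Corollary \ref{tvir_eqp} — applies and yields $(q_n,\eta q_n)=\left\vert\zeta_j(\beta_n)\right\vert^{2}$ and $\frac{\beta_n}{2}(q_n,Rq_n)=-\operatorname{Im}\zeta_j(\beta_n)$.

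To finish I use $\rho_j=0$: by (\ref{llod1}) both $\operatorname{Re}\zeta_j(\beta_n)\to0$ and $\operatorname{Im}\zeta_j(\beta_n)\to0$, hence $\zeta_j(\beta_n)\to0$, so the first identity gives $(q_n,\eta q_n)\to0$; and since $\operatorname{Im}\zeta_j(\beta_n)\to0$ while $\beta_n\to\infty$, the second gives $(q_n,Rq_n)=-2\beta_n^{-1}\operatorname{Im}\zeta_j(\beta_n)\to0$. From $\alpha>0$ and $(q_n,\alpha q_n)=1$ the sequence $\{q_n\}$ is bounded and bounded away from $0$ in $\mathbb{C}^{N}$, so along a subsequence $q_n\to q_\ast$ with $(q_\ast,\alpha q_\ast)=1$, in particular $q_\ast\neq0$; by continuity $(q_\ast,\eta q_\ast)=(q_\ast,Rq_\ast)=0$, and since $\eta,R\geq0$ this forces $\eta q_\ast=0=Rq_\ast$, i.e. $q_\ast\in\operatorname{Ker}\eta\cap\operatorname{Ker}R=\{0\}$ by Condition \ref{cndnd} — a contradiction. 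Hence $\operatorname{Re}\zeta_j(\beta)=0$ for all large $\beta$, which is the asserted overdamping; the equivalence with $\mathring{w}_j\in\operatorname{Ker}\Omega_1$ is again (\ref{llod2}).

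I expect the crux to be the joint use of the two identities in the second paragraph: equipartition pins $(q_n,\eta q_n)$ to $\left\vert\zeta_j(\beta_n)\right\vert^{2}\to0$, while the damping relation forces $(q_n,Rq_n)\to0$ as well, and it is only because both Rayleigh quotients vanish simultaneously that the nondegeneracy Condition \ref{cndnd} can be brought to bear on the normalized limit $q_\ast$. This is precisely what makes the argument survive the case $\operatorname{Ker}\eta\neq\{0\}$, where $\omega_{\min}$ is undefined and the quick route through Proposition \ref{pfuineq}(4)(ii) is unavailable. Beyond this, the only care needed is in extracting the contradiction sequence correctly and in fixing the $\alpha$-normalization before passing to the limit; no deeper difficulty arises.
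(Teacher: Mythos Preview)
Your proof is correct and takes a genuinely different route from the paper's. Both arguments hinge on the virial/equipartition identity (Proposition~\ref{pfuineq}(3)) and on the nondegeneracy Condition~\ref{cndnd}, but they organize the contradiction differently.

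The paper works in the canonical picture: it uses the block representation (\ref{llod3}) of $w_j(\beta)$ together with the structural description (\ref{plmodp6}) of $\operatorname{Ker}\Omega_1$ (established in the proof of Proposition~\ref{plmod}, where Condition~\ref{cndnd} enters) to show that the limit $\mathring{w}_j$ has vanishing ``momentum'' block. From this it reads off $\mathcal{T}\to 0$ while $\mathcal{V}\to\tfrac12$, so equipartition fails and $\operatorname{Re}\zeta_j(\beta)$ must vanish for large $\beta$; analyticity of $\zeta_j$ at $\beta=\infty$ is invoked to pass from ``not overdamped'' to ``$\operatorname{Re}\zeta_j(\beta)\neq0$ for all large $\beta$''. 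You instead stay at the pencil level: you normalize an arbitrary $q_n\in\operatorname{Ker}C(\zeta_j(\beta_n),\beta_n)$ in the $\alpha$-inner product, use (\ref{lod7_0b}) to force both $(q_n,\eta q_n)$ and $(q_n,Rq_n)$ to $0$, and then invoke Condition~\ref{cndnd} directly on a subsequential limit $q_\ast$. Your approach is more self-contained---it needs neither the formula (\ref{plmodp6}) for $\operatorname{Ker}\Omega_1$ nor the analyticity of $\zeta_j$---while the paper's version makes the ``breaking of equipartition'' picture more visually explicit (kinetic energy $\to0$, potential energy $\to\tfrac12$) and feeds naturally into the later computation of $\kappa=\dim\operatorname{Ker}\Omega_1$.
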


\begin{proof}
Let $v_{j}\left(  t,\beta\right)  =w_{j}\left(  \beta\right)  e^{-\mathrm{i}%
\zeta_{j}\left(  \beta\right)  t}$ be a low-loss eigenmode from (\ref{pod20}).
If it is overdamped for $\beta\gg1$ then $\operatorname{Re}\zeta_{j}\left(
\beta\right)  =0$ for $\beta\gg1$ (i.e., for all $\beta$ sufficiently large)
and so by (\ref{llod1}), (\ref{llod2}) we have $\rho_{j}=0$ and $\mathring
{w}_{j}\in\operatorname{Ker}\Omega_{1}$ (the nullspace of $\Omega_{1}$).

Lets now prove the converse. \ Suppose $\rho_{j}=0$. Then we must show that
$\operatorname{Re}\zeta_{j}\left(  \beta\right)  =0$ for $\beta\gg1$. Suppose
that this was not true. Then since $\zeta_{j}\left(  \beta\right)  $ was
analytic at $\beta=\infty$ we must have $\operatorname{Re}\zeta_{j}\left(
\beta\right)  \not =0$ for $\beta\gg1$. Thus by Corollary \ref{cpfsp} there is
a corresponding eigenmode $Q_{j}\left(  t\,,\beta\right)  =q_{j}\left(
\beta\right)  e^{-\mathrm{i}\zeta_{j}\left(  \beta\right)  t}$ of the
Lagrangian system (\ref{radis2a}) such that with respect to the block
representation (\ref{pfsp8b}) the vector $w_{j}\left(  \beta\right)  $ is
\begin{equation}
w_{j}\left(  \beta\right)  =\left[
\begin{array}
[c]{c}%
-\mathrm{i}\zeta_{j}\left(  \beta\right)  \sqrt{\alpha}q_{j}\left(
\beta\right) \\
\sqrt{\eta}q_{j}\left(  \beta\right)
\end{array}
\right]  . \label{llod3}%
\end{equation}
In particular, $v_{j}$ and $Q_{j}$ are related by (\ref{radis7}), namely,%
\begin{equation}
v_{j}=Ku_{j},\text{ where\ }u_{j}=%
\begin{bmatrix}
P_{j}\\
Q_{j}%
\end{bmatrix}
,\text{ }P_{j}=\alpha\dot{Q}_{j}. \label{llod4}%
\end{equation}
By Section \ref{seneq} on the energetic equivalence between the canonical
system (\ref{radis8}) and the Lagrangian system (\ref{radis2a}) it follows
that%
\begin{equation}
U\left[  v_{j}\right]  =\mathcal{T}\left(  \dot{Q},Q\right)  +\mathcal{V}%
\left(  \dot{Q},Q\right)  . \label{llod5}%
\end{equation}
The term on the left is the system energy of the eigenmode $v_{j}$ of the
canonical system defined in (\ref{radis11a}) whereas the sum on the right is
the system energy of the eigenmode $Q_{j}$ of the Lagrangian system
(\ref{radis2a}), i.e., the sum of its kinetic and potential energy. \ As
proved in Section \ref{pspad}, for any fixed time $t$ we have
\begin{gather}
\frac{1}{2}=\lim_{\beta\rightarrow\infty}U\left[  v_{j}\left(  t,\beta\right)
\right]  =\lim_{\beta\rightarrow\infty}\mathcal{V}\left(  \dot{Q}_{j}\left(
t\,,\beta\right)  ,Q_{j}\left(  t\,,\beta\right)  \right)  ,\label{llod6}\\
\lim_{\beta\rightarrow\infty}\mathcal{T}\left(  \dot{Q}_{j}\left(
t\,,\beta\right)  ,\dot{Q}_{j}\left(  t\,,\beta\right)  \right)  =0.\nonumber
\end{gather}
This proves that
\begin{equation}
\mathcal{T}\left(  \dot{Q},Q\right)  \not =\mathcal{V}\left(  \dot
{Q},Q\right)  \text{ for }\beta\gg1 \label{llod7}%
\end{equation}
and, consequently, by the energy equipartition (see Corollary \ref{tvir_eqp})
that%
\begin{equation}
\operatorname{Re}\zeta_{j}\left(  \beta\right)  =0\text{ for }\beta
\gg1\text{.} \label{llod8}%
\end{equation}
This completes the proof.
\end{proof}

\begin{corollary}
\label{clmod}The collection of limiting values as $\beta\rightarrow\infty$
(with $t$ fixed) of the low-loss eigenmodes in (\ref{pod20}) of the canonical
system (\ref{radis8}) which are overdamped for $\beta\gg1$ is $\left\{
\mathring{w}_{j}\right\}  _{j=N_{R}+1}^{2N}\cap$ $\operatorname{Ker}\Omega
_{1}$ and is an orthonormal basis for $\operatorname{Ker}\Omega_{1}$.
\end{corollary}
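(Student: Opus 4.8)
The plan is to deduce the corollary directly from Theorem \ref{tlmod} together with the asymptotic expansions (\ref{llod1})--(\ref{llod2}). First I would recall that Theorem \ref{tlmod} characterizes the low-loss eigenmodes $v_{j}\left(  t,\beta\right)  =w_{j}\left(  \beta\right)  e^{-\mathrm{i}\zeta_{j}\left(  \beta\right)  t}$, $N_{R}+1\leq j\leq2N$, that are overdamped for $\beta\gg1$ as exactly those with $\rho_{j}=0$, equivalently $\mathring{w}_{j}\in\operatorname{Ker}\Omega_{1}$. For any such index $j$, holding $t$ fixed and using (\ref{llod1}) (in particular $\operatorname{Re}\zeta_{j}\left(  \beta\right)  \to\rho_{j}$ and $\operatorname{Im}\zeta_{j}\left(  \beta\right)  \to0$), the eigenmode converges to $\lim_{\beta\to\infty}w_{j}\left(  \beta\right)  e^{-\mathrm{i}\zeta_{j}\left(  \beta\right)  t}=\mathring{w}_{j}e^{-\mathrm{i}\rho_{j}t}=\mathring{w}_{j}$, since $\rho_{j}=0$. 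Hence the collection of limiting values of the overdamped low-loss eigenmodes is precisely $\{\mathring{w}_{j}:N_{R}+1\leq j\leq2N,\ \rho_{j}=0\}$, which by the equivalence above equals $\{\mathring{w}_{j}\}_{j=N_{R}+1}^{2N}\cap\operatorname{Ker}\Omega_{1}$; this is the first assertion.

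Next I would check that this set is an orthonormal basis of $\operatorname{Ker}\Omega_{1}$. By the construction recalled in (\ref{pod8}), (\ref{pod15}), (\ref{llod2}), the family $\{\mathring{w}_{j}\}_{j=N_{R}+1}^{2N}$ is an orthonormal basis of $H_{B}^{\bot}=\operatorname{Ker}B$ which diagonalizes the self-adjoint operator $\Omega_{1}:H_{B}^{\bot}\to H_{B}^{\bot}$ with $\Omega_{1}\mathring{w}_{j}=\rho_{j}\mathring{w}_{j}$. Therefore the subfamily indexed by $\{j:\rho_{j}=0\}$ is an orthonormal subset of $\operatorname{Ker}\Omega_{1}$, and by the spectral decomposition of $\Omega_{1}$ every vector of $\operatorname{Ker}\Omega_{1}$ is a linear combination of precisely those $\mathring{w}_{j}$ with $\rho_{j}=0$; hence this subfamily spans, and is an orthonormal basis of, $\operatorname{Ker}\Omega_{1}$. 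In particular its cardinality equals $\dim\operatorname{Ker}\Omega_{1}$, which identifies the number $\kappa$ of overdamped low-loss modes as $\dim\operatorname{Ker}\Omega_{1}$.

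Given how tightly this follows from Theorem \ref{tlmod}, I do not expect a genuine obstacle. The only points needing a little care are (i) taking the limit with $t$ held fixed so that the exponential factor $e^{-\mathrm{i}\zeta_{j}\left(  \beta\right)  t}$ tends to $e^{-\mathrm{i}\rho_{j}t}$ via (\ref{llod1}), and (ii) invoking the elementary fact that the zero-eigenvalue members of an orthonormal eigenbasis of a self-adjoint operator form an orthonormal basis of its kernel; both are routine.
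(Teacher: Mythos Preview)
Your proposal is correct and follows essentially the same approach as the paper's own proof, which simply says the corollary ``follows immediately from Theorem \ref{tlmod} and the fact that the $\{\mathring{w}_j\}_{j=N_R+1}^{2N}$ are an orthonormal basis for $\operatorname{Ker}B$ which diagonalize the self-adjoint operator $\Omega_1$ such that (\ref{llod2}) is satisfied.'' You have spelled out the details the paper leaves implicit, including the explicit computation of the limit of $v_j(t,\beta)$ via (\ref{llod1}) and the standard fact that the zero-eigenvalue eigenvectors of a self-adjoint operator in an orthonormal eigenbasis form an orthonormal basis of its kernel.
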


\begin{proposition}
\label{plmod}Let $\kappa$ denote the number low-loss eigenmodes in
(\ref{pod20}) of the canonical system (\ref{radis8}) which are overdamped for
$\beta\gg1$. Then
\begin{equation}
\kappa=\dim\operatorname{Ker}\Omega_{1}=N_{R}, \label{plmod1}%
\end{equation}
where $N_{R}=\operatorname{rank}R=\operatorname{rank}B$.
\end{proposition}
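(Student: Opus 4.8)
The plan is to prove the two equalities in (\ref{plmod1}) separately. The first, $\kappa=\dim\operatorname{Ker}\Omega_{1}$, is essentially a restatement of Corollary \ref{clmod}; the second, $\dim\operatorname{Ker}\Omega_{1}=N_{R}$, is a linear-algebra computation resting on the block structure of $\Omega$ and $B$ available under Condition \ref{cndod} (i.e.\ $\theta=0$) together with the nondegeneracy Condition \ref{cndnd}.

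For the first equality: by Theorem \ref{tlmod} a low-loss eigenmode $v_{j}\left(  t,\beta\right)  $, $N_{R}+1\leq j\leq2N$, is overdamped for $\beta\gg1$ precisely when $\rho_{j}=0$. By (\ref{llod2}) the vectors $\left\{  \mathring{w}_{j}\right\}  _{j=N_{R}+1}^{2N}$ form an orthonormal basis of $\operatorname{Ker}B=H_{B}^{\bot}$ diagonalizing the self-adjoint operator $\Omega_{1}$ with $\Omega_{1}\mathring{w}_{j}=\rho_{j}\mathring{w}_{j}$. Since the members of an orthonormal family are pairwise distinct, the number of indices $j$ with $\rho_{j}=0$ equals the multiplicity of $0$ as an eigenvalue of $\Omega_{1}$, i.e.\ $\dim\operatorname{Ker}\Omega_{1}$; this is precisely the content of Corollary \ref{clmod}, and it yields $\kappa=\dim\operatorname{Ker}\Omega_{1}$.

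For the second equality I would compute $\operatorname{Ker}\Omega_{1}$ explicitly. Since $\theta=0$, the block $\Omega_{\mathrm{p}}=-\mathrm{i}2K_{\mathrm{p}}\theta K_{\mathrm{p}}^{\mathrm{T}}$ in (\ref{radis8b}) vanishes, so by (\ref{radis8a}), (\ref{radis8b}), (\ref{radis6a1}) the $2N\times2N$ matrices take the form $\Omega=\left[\begin{smallmatrix}0 & -\mathrm{i}\Phi^{\mathrm{T}}\\ \mathrm{i}\Phi & 0\end{smallmatrix}\right]$ with $\Phi=\sqrt{\eta}\sqrt{\alpha}^{-1}$, and $B=\left[\begin{smallmatrix}\mathsf{\tilde{R}} & 0\\ 0 & 0\end{smallmatrix}\right]$ with $\mathsf{\tilde{R}}=\sqrt{\alpha}^{-1}R\sqrt{\alpha}^{-1}$, so that $H_{B}^{\bot}=\operatorname{Ker}B=\operatorname{Ker}\mathsf{\tilde{R}}\oplus H_{\mathrm{q}}$ with $\operatorname{Ker}\mathsf{\tilde{R}}\subseteq H_{\mathrm{p}}$. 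Writing $w=\left[\begin{smallmatrix}\varphi\\ \psi\end{smallmatrix}\right]$ with $\varphi\in\operatorname{Ker}\mathsf{\tilde{R}}$, $\psi\in H_{\mathrm{q}}$, the condition $\Omega_{1}w=0$ (equivalently $P_{B}^{\bot}\Omega w=0$, as $w\in H_{B}^{\bot}$) unwinds from the block form of $\Omega$ to the two requirements $\Phi\varphi=0$ and $\Phi^{\mathrm{T}}\psi\in(\operatorname{Ker}\mathsf{\tilde{R}})^{\bot}=\operatorname{Ran}\mathsf{\tilde{R}}$ (using self-adjointness of $\mathsf{\tilde{R}}$). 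Because $\sqrt{\alpha}$ is invertible, $\operatorname{Ker}\Phi=\sqrt{\alpha}\operatorname{Ker}\eta$, $\operatorname{Ker}\mathsf{\tilde{R}}=\sqrt{\alpha}\operatorname{Ker}R$, and $\operatorname{Ran}\mathsf{\tilde{R}}=\sqrt{\alpha}^{-1}\operatorname{Ran}R$; the first two force $\varphi\in\sqrt{\alpha}\left(\operatorname{Ker}\eta\cap\operatorname{Ker}R\right)=\left\{0\right\}$ by Condition \ref{cndnd}, and the third turns $\Phi^{\mathrm{T}}\psi=\sqrt{\alpha}^{-1}\sqrt{\eta}\psi\in\operatorname{Ran}\mathsf{\tilde{R}}$ into $\sqrt{\eta}\psi\in\operatorname{Ran}R$. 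Hence $\operatorname{Ker}\Omega_{1}=\left\{0\right\}\oplus W$ with $W=\left\{\psi\in H_{\mathrm{q}}:\sqrt{\eta}\psi\in\operatorname{Ran}R\right\}$, so $\dim\operatorname{Ker}\Omega_{1}=\dim W$.

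It remains to count $W$. Using $\operatorname{Ran}R=(\operatorname{Ker}R)^{\bot}$ and self-adjointness of $\sqrt{\eta}$, the relation $\sqrt{\eta}\psi\in\operatorname{Ran}R$ is equivalent to $\psi\perp\sqrt{\eta}\left(\operatorname{Ker}R\right)$, so $W=\left(\sqrt{\eta}\left(\operatorname{Ker}R\right)\right)^{\bot}$ and $\dim W=N-\dim\sqrt{\eta}\left(\operatorname{Ker}R\right)$. Since $\operatorname{Ker}\sqrt{\eta}=\operatorname{Ker}\eta$ and $\operatorname{Ker}\eta\cap\operatorname{Ker}R=\left\{0\right\}$ by Condition \ref{cndnd}, the map $\sqrt{\eta}$ is injective on $\operatorname{Ker}R$, whence $\dim\sqrt{\eta}\left(\operatorname{Ker}R\right)=\dim\operatorname{Ker}R=N-N_{R}$ and therefore $\dim W=N_{R}$. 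Combining the two parts gives $\kappa=\dim\operatorname{Ker}\Omega_{1}=N_{R}$. I expect the main difficulty to be organizational rather than conceptual: keeping consistent the identifications between $H_{B}^{\bot}$, $\operatorname{Ker}\mathsf{\tilde{R}}\oplus H_{\mathrm{q}}$, and the several $\sqrt{\alpha}$-twisted kernels and ranges, and invoking the nondegeneracy condition in exactly the two spots where it is genuinely used — once to annihilate the $\varphi$-component of a kernel vector, and once to make $\sqrt{\eta}$ injective on $\operatorname{Ker}R$; the rest is routine dimension bookkeeping.
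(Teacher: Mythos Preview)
Your proof is correct and follows the paper's overall strategy: first identify $\kappa=\dim\operatorname{Ker}\Omega_{1}$ via Corollary \ref{clmod}, then compute $\operatorname{Ker}\Omega_{1}$ explicitly from the block structure of $\Omega$ and $B$ (showing the $\varphi$-component vanishes by Condition \ref{cndnd} and that the $\psi$-component is characterized by $\sqrt{\eta}\psi\in\operatorname{Ran}R$), and finally count dimensions.

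The only genuine difference is in the final dimension count of $W=\{\psi:\sqrt{\eta}\psi\in\operatorname{Ran}R\}$. The paper decomposes $W$ along $\mathbb{C}^{N}=\operatorname{Ker}\eta\oplus\operatorname{Ran}\eta$ and then computes $\dim(\operatorname{Ran}\eta\cap\operatorname{Ran}R)=N_{R}-\dim\operatorname{Ker}\eta$ via the identity $(\operatorname{Ran}\eta\cap\operatorname{Ran}R)^{\perp}=\operatorname{Ker}\eta\oplus\operatorname{Ker}R$, summing the two pieces to get $N_{R}$. Your route --- rewriting $W=(\sqrt{\eta}(\operatorname{Ker}R))^{\perp}$ and using injectivity of $\sqrt{\eta}$ on $\operatorname{Ker}R$ --- is a bit more direct and avoids the intermediate decomposition, though both arguments invoke Condition \ref{cndnd} in the same essential way and are of comparable length.
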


\begin{theorem}
[number of overdamped modes]\label{tnlm}The total number of eigenmodes of the
canonical system (\ref{radis8}) from (\ref{pod20}) which are overdamped for
$\beta\gg1$ is $2N_{R}$, where $N_{R}=\operatorname{rank}R=\operatorname{rank}%
B$. Furthermore, $N_{R}$ of these are low-loss eigenmodes and the rest are all
of the high-loss eigenmodes. Moreover, if $N_{R}<N$ then the $2N-2N_{R}$ modes
which are not overdamped as $\beta\gg1$ are low-loss eigenmodes which remain
oscillatory for $\beta\gg1$.
\end{theorem}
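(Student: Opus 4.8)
The plan is to assemble Theorem \ref{tnlm} from three already-established facts, treating it as a bookkeeping step. First, Theorem \ref{thmod} tells us that every high-loss eigenmode $v_j(t,\beta)$, $1\le j\le N_R$, from (\ref{pod20}) is overdamped for $\beta\gg1$; this accounts for $N_R$ overdamped modes. Second, Theorem \ref{tlmod} characterizes the overdamped low-loss eigenmodes: $v_j(t,\beta)$, $N_R+1\le j\le 2N$, is overdamped for $\beta\gg1$ precisely when $\rho_j=0$ (equivalently $\mathring{w}_j\in\operatorname{Ker}\Omega_1$). Third, Proposition \ref{plmod} (or, equivalently, Corollary \ref{clmod} together with $\dim\operatorname{Ker}\Omega_1=N_R$) gives that the number $\kappa$ of low-loss eigenmodes overdamped for $\beta\gg1$ equals $N_R$. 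Since by the dichotomy (\ref{pod11}) and Theorem \ref{cpodr} the $2N$ eigenmodes in (\ref{pod20}) split into exactly $N_R$ high-loss and $2N-N_R$ low-loss eigenmodes — and these index the basis $\{w_j(\beta)\}_{j=1}^{2N}$ of $H$, so the two classes partition the full spectral data with no overlap — adding the $N_R$ overdamped high-loss modes to the $\kappa=N_R$ overdamped low-loss modes yields a total of $2N_R$ overdamped eigenmodes, of which exactly $N_R$ are low-loss and the remaining $N_R$ are all of the high-loss modes. This proves the first two assertions.

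For the last assertion, note that if $N_R<N$ then $2N-2N_R>0$, so there genuinely are leftover low-loss eigenmodes, namely those $v_j(t,\beta)$ with $N_R+1\le j\le 2N$ and $\mathring{w}_j\notin\operatorname{Ker}\Omega_1$, i.e., $\rho_j\ne0$. By Theorem \ref{tlmod} each such mode fails to be overdamped for $\beta\gg1$. To conclude that it \emph{remains oscillatory} for $\beta\gg1$ in the sense of Section \ref{spad} — that is, $\operatorname{Re}\zeta_j(\beta)\ne0$ for all $\beta$ sufficiently large, not merely that it fails to vanish identically — I would invoke the analyticity of $\zeta_j(\beta)$ at $\beta=\infty$ recorded in (\ref{pod14}): then $\operatorname{Re}\zeta_j(\beta)$ is a real-analytic function of $\beta^{-1}$ in a punctured neighbourhood of $0$, hence either identically zero there or nonzero for all large $\beta$; since $\lim_{\beta\to\infty}\operatorname{Re}\zeta_j(\beta)=\rho_j\ne0$, the latter holds. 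Thus these $2N-2N_R$ modes remain oscillatory for $\beta\gg1$, completing the proof.

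The only substantive inputs are Theorem \ref{tlmod} (proved above via the breakdown of the energy equipartition in the virial theorem) and Proposition \ref{plmod} (proved in Section \ref{sproofs}); given these, no new difficulty arises here. The one point needing care is confirming that the high-loss/low-loss classification is exhaustive and multiplicity-consistent, so that the counts $N_R$ and $2N-N_R$ genuinely add to $2N$ and no eigenmode is double-counted — this is immediate from the fact that $\{w_j(\beta)\}_{j=1}^{2N}$ is a basis of $H$ for $\beta\gg1$. I do not anticipate any real obstacle beyond correctly invoking the cited results.
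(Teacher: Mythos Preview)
Your proposal is correct and follows essentially the same route as the paper: assemble the count from Theorem \ref{thmod} (all $N_R$ high-loss modes overdamped), Theorem \ref{tlmod} (a low-loss mode is overdamped iff $\rho_j=0$), and Proposition \ref{plmod}/Corollary \ref{clmod} ($\kappa=\dim\operatorname{Ker}\Omega_1=N_R$). The only minor difference is in how you justify that the remaining $2N-2N_R$ low-loss modes \emph{remain oscillatory}: the paper cites Proposition \ref{pqaollm}, whereas you argue directly from $\lim_{\beta\to\infty}\operatorname{Re}\zeta_j(\beta)=\rho_j\ne0$ --- your argument is fine (indeed continuity alone suffices, without invoking analyticity), and is arguably cleaner than routing through the quality-factor statement.
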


\begin{theorem}
[selective overdamping]\label{tseod}If $N_{R}<N$ then there is exactly
$2N-2N_{R}$ low-loss eigenmodes of the canonical system (\ref{radis8}) from
(\ref{pod20}) which remain oscillatory for $\beta\gg1$ (i.e.,
$\operatorname{Re}\zeta_{j}\left(  \beta\right)  \not =0$ for $\beta\gg1$) and
their quality factors as given by (\ref{radis12}), (\ref{radis12_1}) have the
following property
\begin{equation}
\text{low-loss oscill. modes: }Q\left[  w_{j}\left(  \beta\right)  \right]
=+\infty\text{ for }\beta\gg1\text{ or }Q\left[  w_{j}\left(  \beta\right)
\right]  \nearrow+\infty\text{ as }\beta\rightarrow\infty. \label{tseod1}%
\end{equation}

\end{theorem}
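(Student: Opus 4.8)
The plan is to obtain this theorem as a bookkeeping consequence of the results already in place, principally Theorem \ref{tnlm}, Theorem \ref{tlmod}, and Proposition \ref{pqaollm}, together with the analyticity of the low-loss eigenvalues at $\beta=\infty$ recorded in Section \ref{smdhlr}.

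First I would settle the count. By the modal dichotomy in the high-loss regime there are exactly $2N$ eigenmodes $v_j(t,\beta)=w_j(\beta)e^{-\mathrm{i}\zeta_j(\beta)t}$ for $\beta\gg1$, with the $N_R$ high-loss ones indexed by $1\le j\le N_R$ and the $2N-N_R$ low-loss ones by $N_R+1\le j\le 2N$. Theorem \ref{tnlm} states that exactly $2N_R$ of these $2N$ eigenmodes are overdamped for $\beta\gg1$, that $N_R$ of the overdamped ones are low-loss, and that when $N_R<N$ the remaining $2N-2N_R$ modes are low-loss modes which remain oscillatory for $\beta\gg1$. This is already the asserted count of $2N-2N_R$ low-loss oscillatory modes, so nothing further is needed for the first half of the statement.

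Next I would pin down which low-loss modes these are. Because each $\zeta_j(\beta)$ is analytic at $\beta=\infty$ (cf. (\ref{pod14})), the real part $\operatorname{Re}\zeta_j(\beta)$ is either identically zero for all sufficiently large $\beta$ or nonzero for all sufficiently large $\beta$; thus every low-loss eigenmode is either overdamped for $\beta\gg1$ or remains oscillatory for $\beta\gg1$, and the two possibilities are mutually exclusive and exhaustive. By Theorem \ref{tlmod}, a low-loss eigenmode is overdamped for $\beta\gg1$ precisely when $\rho_j=0$; hence the $2N-2N_R$ low-loss modes that remain oscillatory for $\beta\gg1$ are exactly those with $\rho_j\neq0$ (consistently, Proposition \ref{plmod} gives exactly $N_R$ indices $j\in\{N_R+1,\dots,2N\}$ with $\rho_j=0$, leaving $2N-2N_R$ with $\rho_j\neq0$).

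Finally, the claim on the quality factor is immediate from Proposition \ref{pqaollm}: any low-loss eigenmode with $\rho_j\neq0$ has $Q[w_j(\beta)]=+\infty$ for all $\beta\gg1$ or $Q[w_j(\beta)]\nearrow+\infty$ as $\beta\rightarrow\infty$. Applying this to each of the $2N-2N_R$ oscillatory low-loss modes identified above yields (\ref{tseod1}), completing the proof. There is essentially no remaining obstacle at this stage; the substantive content was already carried by Proposition \ref{plmod} (the identity $\dim\operatorname{Ker}\Omega_1=N_R$, which dictates how the low-loss modes split between overdamped and oscillatory) and by the quality-factor asymptotics behind Proposition \ref{pqaollm}, and the only point one must be careful about is the no-third-case dichotomy above, where analyticity of $\zeta_j$ at $\beta=\infty$ is used.
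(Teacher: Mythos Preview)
Your proposal is correct and follows essentially the same approach as the paper, which simply records that the theorem follows immediately from Theorems \ref{tnlm}, \ref{tlmod} and Proposition \ref{pqaollm}. Your version spells out the bookkeeping more explicitly (including the analyticity argument ruling out a third case and the identification via $\rho_j\neq 0$), but the logical skeleton is identical.
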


\section{Proof of results\label{sproofs}}

This section contains the proofs of the results from Sections \ref{svir},
\ref{sspan} and \ref{sodan}. Proofs are organized in the order in which
statements appeared and grouped according to the sections and subsections in
which they are located in the paper. \ All assumptions, notation, and
conventions used here will adhere to that previously introduced in this paper.

\subsection{The virial theorem for dissipative systems and equipartition of
energy}

The proofs of statements in Section \ref{svir} are contained here.

\begin{proof}
[Proof of Theorem \ref{tvir}]For $Q\left(  t\right)  =qe^{-\mathrm{i}\zeta t}$
an eigenmode of the Lagrangian system (\ref{radis2a}), we consider the virial%
\begin{equation}
\mathrm{G}=\left(  \alpha\dot{Q},Q\right)  . \label{vir1}%
\end{equation}
In our proof we will use the real quantities $\mathcal{T}_{0}\left(  \dot
{Q}\right)  =\frac{1}{2}\left(  \dot{Q},\alpha\dot{Q}\right)  $ and
$\mathcal{V}_{0}\left(  Q\right)  =\frac{1}{2}\left(  Q,\eta Q\right)  $. If
$\zeta=0$ we are done. Thus, suppose that $\zeta\not =0$. The virial
(\ref{vir1}) satisfies%
\begin{equation}
\mathrm{G}=\frac{2}{-\mathrm{i}\zeta}\mathcal{T}_{0}\left(  \dot{Q}\right)
;\text{ \ \ }\mathrm{\dot{G}}\left(  t\right)  =e^{2\operatorname{Im}\zeta
t}\mathrm{G}(0). \label{vir9}%
\end{equation}
Taking the time-derivative of (\ref{vir9}) and using (\ref{radis2a}) yields
the identity%
\begin{equation}
4\frac{\operatorname{Im}\zeta}{-\mathrm{i}\zeta}\mathcal{T}_{0}\left(  \dot
{Q}\right)  =\mathrm{\dot{G}}=2\mathcal{T}_{0}\left(  \dot{Q}\right)
-2\mathcal{V}_{0}\left(  Q\right)  -\frac{2}{-\mathrm{i}\zeta}\mathcal{R}%
\left(  \dot{Q}\right)  -(2\theta\dot{Q},Q), \label{vir10}%
\end{equation}
where $\mathcal{R}\left(  \dot{Q}\right)  =\frac{1}{2}\left(  \dot{Q},\beta
R\dot{Q}\right)  $. The energy balance equation (\ref{radis4}) implies%
\begin{equation}
-2\mathcal{R}\left(  \dot{Q}\right)  =\partial_{t}\left(  \mathcal{T}%
_{0}\left(  \dot{Q}\right)  +\mathcal{V}_{0}\left(  Q\right)  \right)
=2\operatorname{Im}\zeta\left(  \mathcal{T}_{0}\left(  \dot{Q}\right)
+\mathcal{V}_{0}\left(  Q\right)  \right)  . \label{vir11}%
\end{equation}
Combining the identities (\ref{vir10}) and (\ref{vir11}) yields immediately
the identity%
\begin{equation}
\frac{\operatorname{Re}\zeta}{\zeta}\mathcal{T}_{0}\left(  \dot{Q}\right)
=\frac{\operatorname{Re}\zeta}{\zeta}\mathcal{V}_{0}\left(  Q\right)
+(\theta\dot{Q},Q). \label{vir12}%
\end{equation}
From (\ref{vir12}) and since $\theta^{\mathrm{T}}=-\theta$ we derive the
identity%
\begin{equation}
\mathcal{T}_{0}\left(  \dot{Q}\right)  =\mathcal{V}_{0}\left(  Q\right)
-\frac{\zeta}{\operatorname{Re}\zeta}(\dot{Q},\theta Q) \label{vir14}%
\end{equation}
if $\operatorname{Re}\zeta\not =0$ or $(\dot{Q},\theta Q)=0$ if
$\operatorname{Re}\zeta=0$. Suppose now that $\operatorname{Re}\zeta\not =0$.
Then, by taking the imaginary and real parts of the identity (\ref{vir14}), we
find that%
\begin{equation}
\mathcal{T}_{0}\left(  \dot{Q}\right)  =\mathcal{V}_{0}\left(  Q\right)
-\operatorname{Re}\left(  \dot{Q},\theta Q\right)  -\left(  \frac
{\operatorname{Im}\zeta}{\operatorname{Re}\zeta}\right)  ^{2}\operatorname{Re}%
\left(  \dot{Q},\theta Q\right)  . \label{vir16}%
\end{equation}
The rest of the proof of Theorem \ref{tvir} now immediately follows from
(\ref{vir16}) and (\ref{radis4_0}).
\end{proof}

\begin{proof}
[Proof of Corollary \ref{tvir_eqp}]Corollary \ref{tvir_eqp} follows
immediately from the virial theorem \ref{tvir}.
\end{proof}

\subsection{Spectral analysis of the system eigenmodes}

The proofs of statements in Section \ref{sspan} are contained here.

\subsubsection{Standard versus pencil formulations of the spectral problems}

\begin{proof}
[Proof of Proposition \ref{ppfsp}]The factorization (\ref{pfsp8}) follows from
formula (\ref{mab2}) in Appendix \ref{apxsc} and (\ref{radis6a1}),
(\ref{radis6b}), (\ref{radis8})--(\ref{radis8b}), (\ref{pfsp2})\ since%
\begin{gather*}
\zeta I-A\left(  \beta\right)  =\left[
\begin{array}
[c]{cc}%
\zeta\mathbf{1}-\Omega_{\mathrm{p}}+\mathrm{i}\mathsf{\tilde{R}} &
\mathrm{i}\Phi^{\mathrm{T}}\\
-\mathrm{i}\Phi & \zeta\mathbf{1}%
\end{array}
\right] \\
=\left[
\begin{array}
[c]{cc}%
\mathbf{1} & \zeta^{-1}\mathrm{i}\Phi^{\mathrm{T}}\\
0 & \mathbf{1}%
\end{array}
\right]  \left[
\begin{array}
[c]{cc}%
\zeta\mathbf{1}-\Omega_{\mathrm{p}}+\mathrm{i}\mathsf{\tilde{R}}-\zeta
^{-1}\mathrm{i}\Phi^{\mathrm{T}}\left(  -\mathrm{i}\Phi\right)  & 0\\
0 & \zeta\mathbf{1}%
\end{array}
\right]  \left[
\begin{array}
[c]{cc}%
\mathbf{1} & 0\\
-\zeta^{-1}\mathrm{i}\Phi & \mathbf{1}%
\end{array}
\right] \\
=\left[
\begin{array}
[c]{cc}%
K_{\mathrm{p}} & \zeta^{-1}\mathrm{i}\Phi^{\mathrm{T}}\\
0 & \mathbf{1}%
\end{array}
\right]  \left[
\begin{array}
[c]{cc}%
\zeta^{-1}\mathbf{1} & 0\\
0 & \zeta\mathbf{1}%
\end{array}
\right]  \left[
\begin{array}
[c]{cc}%
C(\zeta,\beta) & 0\\
0 & \mathbf{1}%
\end{array}
\right]  \left[
\begin{array}
[c]{cc}%
K_{\mathrm{p}}^{\mathrm{T}} & 0\\
-\zeta^{-1}\mathrm{i}\Phi & \mathbf{1}%
\end{array}
\right]  .
\end{gather*}
This completes the proof.
\end{proof}

\begin{proof}
[Proof of Corollary \ref{cpfsp}]The proof of this corollary follows
immediately from the factorization (\ref{pfsp8}) using the facts
$K_{\mathrm{p}}^{\mathrm{T}}=K_{\mathrm{p}}=\sqrt{\alpha}^{-1}$, $\Phi
\sqrt{\alpha}=\sqrt{\eta}$ and%
\[
\left[
\begin{array}
[c]{cc}%
K_{\mathrm{p}}^{\mathrm{T}} & 0\\
-\zeta^{-1}\mathrm{i}\Phi & \mathbf{1}%
\end{array}
\right]  ^{-1}=\left[
\begin{array}
[c]{cc}%
\sqrt{\alpha} & 0\\
\zeta^{-1}\mathrm{i}\Phi\sqrt{\alpha} & \mathbf{1}%
\end{array}
\right]  .
\]

\end{proof}

\subsubsection{On the spectrum of the system operator}

\begin{proof}
[Proof of Proposition \ref{pssym}]The proof of this proposition follows
immediately from elementary properties of the determinant and using the
fundamental property of the system operator $A\left(  \beta\right)
=\Omega-\mathrm{i}\beta B$ in (\ref{sasp1}), namely, $A\left(  \beta\right)
^{\ast}=-A\left(  \beta\right)  ^{\mathrm{T}}$ which implies $-\overline
{A\left(  \beta\right)  }=A\left(  \beta\right)  $.
\end{proof}

\subsubsection{Eigenvalue bounds and modal dichotomy}

\begin{proof}
[Proof of Proposition \ref{pevbd}]The second statement follows immediately
from (\ref{sasp1}) since if $A\left(  \beta\right)  w=\zeta w$ and $w\not =0$
then%
\[
\operatorname{Re}\zeta=\frac{\operatorname{Re}\left(  w,A\left(  \beta\right)
w\right)  }{\left(  w,w\right)  }=\frac{\left(  w,\Omega w\right)  }{\left(
w,w\right)  },\text{ }-\operatorname{Im}\zeta=-\frac{\operatorname{Im}\left(
w,A\left(  \beta\right)  w\right)  }{\left(  w,w\right)  }=\frac{\left(
w,\beta Bw\right)  }{\left(  w,w\right)  }\geq0.
\]
From this and using (\ref{sasp1})--(\ref{sasp3}), the first and third
statements follow immediately from Proposition \ref{apxpebd} in Appendix
\ref{apxebd}. This completes the proof.
\end{proof}

\begin{proof}
[Proof of Corollary \ref{cssym}]The proof of this corollary follows
immediately from Propositions \ref{pssym} and \ref{pevbd}.
\end{proof}

\begin{proof}
[Proof of Theorem \ref{tmdic}]Suppose that $\beta>2\frac{\omega_{\text{max}}%
}{b_{\text{min}}}$. It follows from Theorem \ref{apxtmd} in Appendix
\ref{apxebd} that the $2N\times2N$ matrix $A\left(  \beta\right)  $ satisfies
the hypotheses of the theorem and together with (\ref{sasp1}), (\ref{ebmd1})
implies $\operatorname{rank}\left(  \operatorname{Im}A\left(  \beta\right)
\right)  =\operatorname{rank}\left(  B\right)  =N_{R}$ and the sets
$\sigma_{0}\left(  A\left(  \beta\right)  \right)  $, $\sigma_{1}\left(
A\left(  \beta\right)  \right)  $ in (\ref{mdss}) satisfy%
\[
\sigma\left(  A\left(  \beta\right)  \right)  =\sigma_{0}\left(  A\left(
\beta\right)  \right)  \cup\sigma_{1}\left(  A\left(  \beta\right)  \right)
,\text{ \ \ }\sigma_{0}\left(  A\left(  \beta\right)  \right)  \cap\sigma
_{1}\left(  A\left(  \beta\right)  \right)  =\emptyset.
\]
Moreover, Theorem \ref{apxtmd} gives the existence of unique projection
matrices $P_{0}\left(  \beta\right)  $, $P_{1}\left(  \beta\right)  $ such
that in the space $H=%
\mathbb{C}
^{2N}$ the subspaces $H_{h\ell}\left(  \beta\right)  =\operatorname{Ran}%
P_{1}\left(  \beta\right)  $ and $H_{\ell\ell}\left(  \beta\right)
=\operatorname{Ran}P_{0}\left(  \beta\right)  $ have exactly the desired
properties in Theorem \ref{tmdic} with their dimensions satisfying%
\[
\dim H_{h\ell}\left(  \beta\right)  =\operatorname{rank}P_{1}\left(
\beta\right)  =N_{R},\text{ }\dim H_{\ell\ell}\left(  \beta\right)
=\operatorname{rank}P_{0}\left(  \beta\right)  =2N-N_{R}.
\]
This completes the proof.
\end{proof}

\begin{proof}
[Proof of Corollary \ref{cmdic}]First, if $w$ is an eigenvector of $A\left(
\beta\right)  $ with corresponding eigenvalue $\zeta$ then Proposition
\ref{pevbd} and (\ref{sasp3}) imply%
\begin{equation}
-\operatorname{Im}\zeta\geq0,\text{ \ \ }\left\vert \operatorname{Re}%
\xi\right\vert =\left\vert \frac{\left(  w,\Omega w\right)  }{\left(
w,w\right)  }\right\vert \leq\left\Vert \Omega\right\Vert =\omega_{\text{max}%
}. \label{pcmdic1}%
\end{equation}
Assume now that $\beta>2\frac{\omega_{\text{max}}}{b_{\text{min}}}$. Then by
(\ref{sasp1})--(\ref{sasp3}), (\ref{pcmdic1}), and Theorem \ref{tmdic} the
proof of the first part of Corollary \ref{cmdic} follows immediately from
Corollary \ref{apxcmd} in Appendix \ref{apxebd}. The rest of the proof now
follows immediately from (\ref{pcmdic1}), formula (\ref{radis12_1}), and the
first part of Corollary \ref{cmdic}. This completes the proof.
\end{proof}

\begin{proof}
[Proof of Theorem \ref{cpodr}]Consider the high-loss and low-loss eigenpairs
$\zeta_{j}\left(  \beta\right)  $, $w_{j}\left(  \beta\right)  $, $1\leq
j\leq2N$ for $\beta\gg1$ introduced in Section \ref{smdhlr}. The have the
properties that $\left\{  w_{j}\left(  \beta\right)  \right\}  _{j=1}^{2N}$
are a basis of eigenvectors for the system operator $A\left(  \beta\right)  $
satisfying
\begin{gather*}
A\left(  \beta\right)  w_{j}\left(  \beta\right)  =\zeta_{j}\left(
\beta\right)  w_{j}\left(  \beta\right)  ,\ \ 1\leq j\leq2N;\\
\lim_{\beta\rightarrow\infty}-\operatorname{Im}\zeta_{j}\left(  \beta\right)
=\infty,\text{ \ \ }1\leq j\leq N_{R};\text{ }\lim_{\beta\rightarrow\infty
}-\operatorname{Im}\zeta_{j}\left(  \beta\right)  =0,\text{ \ \ }N_{R}+1\leq
j\leq2N.
\end{gather*}
The proof of Theorem \ref{cpodr} follows from these facts, Theorem
\ref{tmdic}, and Corollary \ref{cmdic}.
\end{proof}

\subsection{Overdamping analysis}

The proofs of statements in Section \ref{sodan} are contained here. Recall,
that we are assuming from now on that Condition \ref{cndod} is true i.e.,
$\theta=0.$

\begin{proof}
[Proof of Proposition \ref{pspre}]Using the block representation of the
matrices $\Omega$, $B$ in (\ref{radis8a}) and the result (\ref{mab5}) in
Appendix \ref{apxsc} on the determinant of block matrices with commuting block
entries we find that for any $\zeta\in%
\mathbb{C}
$ we have%
\begin{align*}
\det\left(  \zeta\mathbf{1}-\Omega\right)   &  =\det\left(  \left[
\begin{array}
[c]{cc}%
\zeta\mathbf{1} & \mathrm{i}\Phi^{\mathrm{T}}\\
-\mathrm{i}\Phi & \zeta\mathbf{1}%
\end{array}
\right]  \right)  =\det\left(  \zeta^{2}\mathbf{1-}\Phi^{\mathrm{T}}%
\Phi\right)  =\det\left(  \zeta^{2}\mathbf{1}-\alpha^{-1}\eta\right)  ,\\
\det\left(  \zeta\mathbf{1}-B\right)   &  =\det\left(  \left[
\begin{array}
[c]{cc}%
\zeta\mathbf{1-}\mathsf{\tilde{R}} & 0\\
0 & \zeta\mathbf{1}%
\end{array}
\right]  \right)  =\zeta^{N}\det\left(  \zeta\mathbf{1-}\mathsf{\tilde{R}%
}\right)  =\zeta^{N}\det\left(  \zeta\mathbf{1-}\alpha^{-1}R_{\mathrm{H}%
}\right)  ,
\end{align*}
since by (\ref{radis6a1}), (\ref{radis6b}), (\ref{radis8b}) we have
$\Phi^{\mathrm{T}}\Phi=\sqrt{\alpha}^{-1}\eta\sqrt{\alpha}^{-1}$ and
$\mathsf{\tilde{R}}=\sqrt{\alpha}^{-1}R_{\mathrm{H}}\sqrt{\alpha}^{-1}$. The
proof now follows from these facts.
\end{proof}

\subsubsection{Complete and partial overdamping}

\begin{proof}
[Proof of Proposition \ref{pfuineq}]The last three statements have already
been proved. The first two statements are proved using Proposition \ref{pspre}
and the following series of inequalities which just use elementary properties
of the operator norm $\left\Vert \cdot\right\Vert $ for Hermitian matrices
\begin{gather}
0\leq\frac{\left(  q,\eta q\right)  }{\left(  q,\alpha q\right)  }%
=\frac{\left(  \sqrt{\alpha}q,\sqrt{\alpha}^{-1}\eta\sqrt{\alpha}^{-1}%
\sqrt{\alpha}q\right)  }{\left(  \sqrt{\alpha}q,\sqrt{\alpha}q\right)  }%
\leq\left\Vert \sqrt{\alpha}^{-1}\eta\sqrt{\alpha}^{-1}\right\Vert =\max
\sigma\left(  \alpha^{-1}\eta\right)  ,\label{lod7_1}\\
\frac{\left(  q,Mq\right)  }{\left(  q,\alpha q\right)  }=\frac{\left(
\sqrt{\alpha}q,\sqrt{\alpha}^{-1}M\sqrt{\alpha}^{-1}\sqrt{\alpha}q\right)
}{\left(  \sqrt{\alpha}q,\sqrt{\alpha}q\right)  }\geq\left\Vert \left(
\sqrt{\alpha}^{-1}M\sqrt{\alpha}^{-1}\right)  ^{-1}\right\Vert ^{-1}%
=\min\sigma\left(  \alpha^{-1}M\right)  , \label{lod7_1a}%
\end{gather}
where $0\leq M=\eta$ or $R$ and the inequality (\ref{lod7_1a})\ requires the
hypothesis that $\eta$ or $R$, respectively, be invertible so that $M>0$. The
completes the proof.
\end{proof}

\begin{proof}
[Proof of Theorem \ref{tcovd}]Suppose $\beta\geq2\frac{\omega_{\text{max}}%
}{b_{\text{min}}}$and $N_{R}=N $, where $N_{R}=\operatorname{rank}R$. If
$\zeta$ was an eigenvalue of $A\left(  \beta\right)  $ with $\operatorname{Re}%
\zeta\not =0$ then by Proposition \ref{pfuineq} we have
\[
-\operatorname{Im}\zeta=\frac{\beta}{2}\frac{\left(  q,Rq\right)  }{\left(
q,\alpha q\right)  }\geq\frac{\beta}{2}b_{\text{min}}\geq\omega_{\text{max}%
}\geq\left\vert \zeta\right\vert >-\operatorname{Im}\zeta,
\]
a contradiction. Therefore, all the eigenvalues of the system operator
$A\left(  \beta\right)  $ are real. \ This completes the proof.
\end{proof}

\begin{proof}
[Proof of Theorem \ref{tpovd}]Suppose $\beta\geq2\frac{\omega_{\text{max}}%
}{b_{\text{min}}}$and $N_{R}<N $. Then by Theorem \ref{tmdic} and Corollary
\ref{cmdic}, if $\zeta\in\sigma\left(  A\left(  \beta\right)  |_{H_{h\ell
}\left(  \beta\right)  }\right)  $ then $-\operatorname{Im}\zeta
>\omega_{\text{max}}$ and so by Proposition \ref{pfuineq} we must have
$\operatorname{Re}\zeta=0$. Therefore, all the eigenvalues of $A\left(
\beta\right)  $ in $\sigma\left(  A\left(  \beta\right)  |_{H_{h\ell}\left(
\beta\right)  }\right)  $ are real.
\end{proof}

\subsubsection{Selective overdamping\label{pspad}}

\begin{proof}
[Proof of Theorem \ref{thmod}]If $v_{j}\left(  t,\beta\right)  =w_{j}\left(
\beta\right)  e^{-\mathrm{i}\zeta_{j}\left(  \beta\right)  t}$ is a high-loss
eigenmode from (\ref{pod20}) then by the modal dichotomy (\ref{cpodr}) we must
have $\zeta_{j}\left(  \beta\right)  \in\sigma\left(  A\left(  \beta\right)
|_{H_{h\ell}\left(  \beta\right)  }\right)  $ for $\beta\gg1$ (i.e., for
$\beta$ sufficiently large). By Theorem \ref{tpovd} it follows that
$\operatorname{Re}\zeta_{j}\left(  \beta\right)  =0$ for $\beta\gg1$.
Therefore, $v_{j}\left(  t,\beta\right)  $ is overdamped for $\beta\gg1$. This
completes the proof.
\end{proof}

We will now assume in the rest of the proofs that Condition \ref{cndnd} is
true i.e., $\operatorname{Ker}\eta\cap\operatorname{Ker}R=\left\{  0\right\}
$.

\begin{proof}
[Proof of Theorem \ref{tlmod}]The proof of this theorem was started in Section
\ref{spad2}. It remains to show that (\ref{llod6}) holds. First, we have that
$\lim_{\beta\rightarrow\infty}w_{j}\left(  \beta\right)  =\mathring{w}_{j}$
with $\Omega_{1}\mathring{w}_{j}=0$ and $\left\Vert \mathring{w}%
_{j}\right\Vert =1$ by the properties of the low-loss eigenmodes in
(\ref{pod20}) and since (\ref{llod1}), (\ref{llod2}) is true for these modes
together with the hypothesis $\rho_{j}=0$.

Now we use the representation (\ref{llod3}) and (\ref{plmodp6}) for
$w_{j}\left(  \beta\right)  $ and $\operatorname{Ker}\Omega_{1}$,
respectively, to imply that as $\beta\rightarrow\infty$
\begin{equation}
w_{j}\left(  \beta\right)  =\left[
\begin{array}
[c]{c}%
-\mathrm{i}\zeta_{j}\left(  \beta\right)  \sqrt{\alpha}q_{j}\left(
\beta\right) \\
\sqrt{\eta}q_{j}\left(  \beta\right)
\end{array}
\right]  \rightarrow\mathring{w}_{j}=%
\begin{bmatrix}
0\\
\mathring{\psi}%
\end{bmatrix}
,\text{ }-\mathrm{i}\zeta_{j}\left(  \beta\right)  \sqrt{\alpha}q_{j}\left(
\beta\right)  \rightarrow0,\text{ }\sqrt{\eta}q_{j}\left(  \beta\right)
\rightarrow\mathring{\psi} \label{tlmod1}%
\end{equation}
for some $\mathring{\psi}\in%
\mathbb{C}
^{N}$. It follows from (\ref{tlmod1}) and the fact $\lim_{\beta\rightarrow
\infty}\operatorname{Im}\zeta_{j}\left(  \beta\right)  =0$ that for any fixed
time $t$ we have as $\beta\rightarrow\infty$%
\begin{align*}
\mathcal{T}\left(  \dot{Q}_{j}\left(  t\,,\beta\right)  ,Q_{j}\left(
t\,,\beta\right)  \right)   &  =\frac{1}{2}\left(  \dot{Q}_{j}\left(
t\,,\beta\right)  ,\alpha\dot{Q}_{j}\left(  t\,,\beta\right)  \right) \\
&  =\frac{1}{2}\left(  -\mathrm{i}\zeta_{j}\left(  \beta\right)  \sqrt{\alpha
}q_{j}\left(  \beta\right)  ,-\mathrm{i}\zeta_{j}\left(  \beta\right)
\sqrt{\alpha}q_{j}\left(  \beta\right)  \right)  e^{2\operatorname{Im}%
\zeta_{j}\left(  \beta\right)  t}\rightarrow0,\\
\mathcal{V}\left(  \dot{Q}_{j}\left(  t\,,\beta\right)  ,Q_{j}\left(
t\,,\beta\right)  \right)   &  =\frac{1}{2}\left(  Q_{j}\left(  t\,,\beta
\right)  ,\eta Q_{j}\left(  t\,,\beta\right)  \right) \\
&  =\frac{1}{2}\left(  \sqrt{\eta}q_{j}\left(  \beta\right)  ,\sqrt{\eta}%
q_{j}\left(  \beta\right)  \right)  e^{2\operatorname{Im}\zeta_{j}\left(
\beta\right)  t}\rightarrow\frac{1}{2}\left(  \mathring{\psi},\mathring{\psi
}\right)  ,\\
U\left[  v_{j}\left(  t,\beta\right)  \right]   &  =\frac{1}{2}\left(
w_{j}\left(  \beta\right)  ,w_{j}\left(  \beta\right)  \right)
e^{2\operatorname{Im}\zeta_{j}\left(  \beta\right)  t}\rightarrow\frac{1}%
{2}\left(  \mathring{w}_{j},\mathring{w}_{j}\right)  =\frac{1}{2}.
\end{align*}
This proves (\ref{llod6}) since by (\ref{tlmod1}) we have $\left(
\mathring{w}_{j},\mathring{w}_{j}\right)  =\left(  \mathring{\psi}%
,\mathring{\psi}\right)  $. This completes the proof.
\end{proof}

\begin{proof}
[Proof of Corollary \ref{clmod}]This corollary follows immediately from
Theorem \ref{tlmod} and the fact that the $\left\{  \mathring{w}_{j}\right\}
_{j=N_{R}+1}^{2N}$ are an orthonormal basis for $\operatorname{Ker}B$ which
diagonalize the self-adjoint operator $\Omega_{1}$ such that (\ref{llod2}) is satisfied.
\end{proof}

\begin{proof}
[Proof of Theorem \ref{tnlm}]This theorem follows immediately from Theorems
\ref{thmod} and \ref{tlmod}, Corollary \ref{clmod}, and Proposition
\ref{pqaollm} (which is proved below).
\end{proof}

\begin{proof}
[Proof of Theorem \ref{tseod}]This theorem follows immediately from Theorems
\ref{tnlm}, \ref{tlmod} and Proposition \ref{pqaollm}.
\end{proof}

\begin{proof}
[Proof of Proposition \ref{plmod}]Let $\kappa$ denote the number of low-loss
eigenmodes of the canonical system (\ref{radis8}) in (\ref{pod20}) which are
overdamped for $\beta\gg1$. \ Then by Corollary \ref{clmod} we know that%
\[
\kappa=\dim\operatorname{Ker}\Omega_{1}\text{.}%
\]
We will now prove that
\begin{equation}
\kappa=N_{R} \label{plmodp0}%
\end{equation}
where by (\ref{radis10a}), $N_{R}=\operatorname{rank}R=\operatorname{rank}B$.
We will use the elementary facts from linear algebra that since $\eta$,
$R\geq0$ are $N\times N$ matrices which are positive semidefinite then
\begin{gather}%
\mathbb{C}
^{N}=\operatorname{Ker}\eta\oplus\operatorname{Ran}\eta,\text{ \ \ }%
\operatorname{Ker}\eta=\operatorname{Ker}\sqrt{\eta},\text{ \ \ }%
\operatorname{Ran}\eta=\operatorname{Ran}\sqrt{\eta},\label{plmodp1}\\
\left(  \operatorname{Ran}R\right)  ^{\bot}=\operatorname{Ker}R\text{ and
}\left(  \operatorname{Ran}\eta\right)  ^{\bot}=\operatorname{Ker}%
\eta\label{plmodp3}%
\end{gather}
Another elementary fact from linear algebra which we need is that if $S_{1}$
and $S_{2}$ are subspaces in the same Hilbert space then%
\[
\left(  S_{1}\cap S_{2}\right)  ^{\bot}=S_{1}^{\bot}+S_{2}^{\bot},
\]
where $S_{1}^{\bot}+S_{2}^{\bot}=\left\{  u+v:u\in S_{1}^{\bot}\text{ and
}v\in S_{2}^{\bot}\right\}  $. Thus with $S_{1}=\operatorname{Ran}\eta$,
$S_{2}=\operatorname{Ran}R$ we have%
\begin{equation}
(\operatorname{Ran}\eta\cap\operatorname{Ran}R)^{\bot}=\operatorname{Ker}%
\eta\oplus\operatorname{Ker}R, \label{plmodp4}%
\end{equation}
where the sum is direct by Condition \ref{cndnd}, i.e., the hypothesis
$\operatorname{Ker}\eta\cap\operatorname{Ker}R=\left\{  0\right\}  $. In
particular, this implies
\begin{equation}
\dim\left(  \operatorname{Ran}\eta\cap\operatorname{Ran}R\right)
=N-\dim\operatorname{Ker}\eta-\dim\operatorname{Ker}R=N_{R}-\dim
\operatorname{Ker}\eta. \label{plmodp5}%
\end{equation}

The proof of (\ref{plmodp0}) will be carried out in a series of steps which we
outline now before we begin. First, we will prove that
\begin{equation}
\operatorname{Ker}\Omega_{1}=\left\{  \left[
\begin{array}
[c]{c}%
0\\
\psi
\end{array}
\right]  :\sqrt{\eta}\psi\in\operatorname{Ran}\eta\cap\operatorname{Ran}%
R\right\}  , \label{plmodp6}%
\end{equation}
where the block form is with respect to the decomposition $H=H_{\mathrm{p}%
}\oplus H_{\mathrm{q}}$, $H=%
\mathbb{C}
^{2N}$, $H_{\mathrm{p}}=H_{\mathrm{q}}=%
\mathbb{C}
^{N}$ described in Section \ref{stvspen}, i.e., the form (\ref{pfsp8b}). Next,
it follows immediately from (\ref{plmodp1}) that
\begin{equation}
\left\{  \psi\in%
\mathbb{C}
^{N}:\sqrt{\eta}\psi\in\operatorname{Ran}\eta\cap\operatorname{Ran}R\right\}
=\operatorname{Ker}\eta\oplus\left\{  \phi\in\operatorname{Ran}\eta:\sqrt
{\eta}\phi\in\operatorname{Ran}\eta\cap\operatorname{Ran}R\right\}  .
\label{plmodp7}%
\end{equation}
Finally, we prove that
\begin{equation}
\dim\left\{  \phi\in\operatorname{Ran}\eta:\sqrt{\eta}\phi\in
\operatorname{Ran}\eta\cap\operatorname{Ran}R\right\}  =N_{R}-\dim
\operatorname{Ker}\eta. \label{plmodp8}%
\end{equation}
The proof of (\ref{plmodp0}) follows immediately from (\ref{plmodp6}%
)--(\ref{plmodp8}).

We begin by computing the representation (\ref{plmodp6}) for
$\operatorname{Ker}\Omega_{1}$. \ First, by its definition from (\ref{pod7}),
$\Omega_{1}$ is the restriction of the operator $\Omega$ to the no-loss
subspace $H_{B}^{\bot}=\operatorname{Ker}B$, that is,%
\begin{equation}
\Omega_{1}=P_{B}^{\bot}\Omega P_{B}^{\bot}|_{H_{B}^{\bot}}:H_{B}^{\bot
}\rightarrow H_{B}^{\bot}\text{,} \label{plmodp9}%
\end{equation}
where $P_{B}^{\bot}$ is the orthogonal projection onto $H_{B}^{\bot}$ in the
Hilbert space $H=%
\mathbb{C}
^{2N}$ with standard inner product $\left(  \cdot,\cdot\right)  $. \ Denote by
$P_{\mathsf{\tilde{R}}}^{\bot}$ the orthogonal projection onto
$\operatorname{Ker}\mathsf{\tilde{R}}$ in the Hilbert space $H_{\mathrm{p}}=%
\mathbb{C}
^{N}$ with the standard inner product. \ Then it follows from the block
representation of $\Omega$ and $B$ in (\ref{pod7}) with respect to the
decomposition $H=H_{\mathrm{p}}\oplus H_{\mathrm{q}}$ that $P_{B}^{\bot}$ and
$P_{B}^{\bot}\Omega P_{B}^{\bot}$ with respect to this decomposition are the
block operators
\begin{equation}
P_{B}^{\bot}=%
\begin{bmatrix}
P_{\mathsf{\tilde{R}}}^{\bot} & 0\\
0 & \mathbf{1}%
\end{bmatrix}
,\text{ }P_{B}^{\bot}\Omega P_{B}^{\bot}=%
\begin{bmatrix}
P_{\mathsf{\tilde{R}}}^{\bot} & 0\\
0 & \mathbf{1}%
\end{bmatrix}%
\begin{bmatrix}
0 & -\mathrm{i}\Phi^{\mathrm{T}}\\
\mathrm{i}\Phi & 0
\end{bmatrix}%
\begin{bmatrix}
P_{\mathsf{\tilde{R}}}^{\bot} & 0\\
0 & \mathbf{1}%
\end{bmatrix}
=%
\begin{bmatrix}
0 & -\mathrm{i}P_{\mathsf{\tilde{R}}}^{\bot}\Phi^{\mathrm{T}}\\
\mathrm{i}\Phi P_{\mathsf{\tilde{R}}}^{\bot} & 0
\end{bmatrix}
. \label{plmodp10}%
\end{equation}
From (\ref{plmodp9}), (\ref{plmodp10}) it follows that%
\[
\operatorname{Ker}\Omega_{1}=\left\{  \left[
\begin{array}
[c]{c}%
\varphi\\
\psi
\end{array}
\right]  \in\operatorname{Ker}\mathsf{\tilde{R}}\oplus H_{\mathrm{q}%
}:-\mathrm{i}P_{\mathsf{\tilde{R}}}^{\bot}\Phi^{\mathrm{T}}\psi=0\text{ and
}\mathrm{i}\Phi P_{\mathsf{\tilde{R}}}^{\bot}\varphi=0\right\}  .
\]
But if $\mathrm{i}\Phi P_{\mathsf{\tilde{R}}}^{\bot}\varphi=0$ with
$\varphi\in\operatorname{Ker}\mathsf{\tilde{R}}$ then $P_{\mathsf{\tilde{R}}%
}^{\bot}\varphi=\varphi$ by definition of $P_{\mathsf{\tilde{R}}}^{\bot}$ and
hence $\mathsf{\tilde{R}}\varphi=0$ and $\Phi\varphi=0$ which implies $\eta
K_{\mathrm{p}}^{\mathrm{T}}\varphi=0$ and\ $RK_{\mathrm{p}}^{\mathrm{T}%
}\varphi=0$ since by (\ref{radis8b}) and (\ref{radis6a1}) we have
$\Phi=K_{\mathrm{q}}K_{\mathrm{p}}^{\mathrm{T}}$, $\mathsf{\tilde{R}%
}=K_{\mathrm{p}}RK_{\mathrm{p}}^{\mathrm{T}}$, $K_{\mathrm{p}}=\sqrt{\alpha
}^{-1}$, and $K_{\mathrm{q}}=\sqrt{\eta}$. By the hypothesis of Condition
\ref{cndnd} this implies that $\varphi=0$. Also, since $\mathsf{\tilde{R}}$ is
a Hermitian matrix it follows that $I-P_{\mathsf{\tilde{R}}}^{\bot}$ is the
orthogonal projection onto the $\operatorname{Ran}\mathsf{\tilde{R}}$ so that
$P_{\mathsf{\tilde{R}}}^{\bot}\Phi^{\mathrm{T}}\psi=0$ is equivalent to
$\Phi^{\mathrm{T}}\psi\in\operatorname{Ran}\mathsf{\tilde{R}}$ and this is
equivalent to $\sqrt{\eta}\psi\in$ $\operatorname{Ran}R$ by properties
(\ref{radis6b}). But $\sqrt{\eta}\psi\in$ $\operatorname{Ran}R$ is equivalent
to $\sqrt{\eta}\psi\in$ $\operatorname{Ran}R\cap\operatorname{Ran}\sqrt{\eta}$
which by (\ref{plmodp1}) is equivalent to $\sqrt{\eta}\psi\in$
$\operatorname{Ran}R\cap\operatorname{Ran}\eta$. This proves (\ref{plmodp6}).

Finally, we will now prove (\ref{plmodp8}). It follows from the fact that
$\eta$ is a Hermitian matrix and (\ref{plmodp1}) that the restriction of the
operator$\sqrt{\eta}$ to $\operatorname{Ran}\eta$, i.e., $\sqrt{\eta
}|_{\operatorname{Ran}\eta}:\operatorname{Ran}\eta\rightarrow
\operatorname{Ran}\eta$, is an invertible operator and that
\begin{equation}
\left\{  \phi\in\operatorname{Ran}\eta:\sqrt{\eta}\phi\in\operatorname{Ran}%
\eta\cap\operatorname{Ran}R\right\}  =\sqrt{\eta}|_{\operatorname{Ran}\eta
}^{-1}\left(  \operatorname{Ran}\eta\cap\operatorname{Ran}R\right)  .
\label{plmodp11}%
\end{equation}
From this it follows that%
\[
\dim\left\{  \phi\in\operatorname{Ran}\eta:\sqrt{\eta}\phi\in
\operatorname{Ran}\eta\cap\operatorname{Ran}R\right\}  =\dim\left(
\operatorname{Ran}\eta\cap\operatorname{Ran}R\right)  .
\]
The proof of (\ref{plmodp8}) immediately follows from this fact and
(\ref{plmodp5}). This completes the proof of the theorem.
\end{proof}

\section{Appendix: Schur complement and the Aitken formula\label{apxsc}}

Let
\begin{equation}
M=\left[
\begin{array}
[c]{cc}%
P & Q\\
R & S
\end{array}
\right]  \label{mab1}%
\end{equation}
be a square matrix represented in block form where $P$ and $S$ are square
matrices with the latter invertible, that is, $\left\Vert S^{-1}\right\Vert
<\infty$. Then the following \emph{Aitken block-diagonalization formula} holds
\cite[\S 0.1]{Zhang}, \cite[\S 29]{Aitken},%
\begin{equation}
M=\left[
\begin{array}
[c]{cc}%
P & Q\\
R & S
\end{array}
\right]  =\left[
\begin{array}
[c]{cc}%
\mathbf{1} & QS^{-1}\\
0 & \mathbf{1}%
\end{array}
\right]  \left[
\begin{array}
[c]{cc}%
P-QS^{-1}R & 0\\
0 & S
\end{array}
\right]  \left[
\begin{array}
[c]{cc}%
\mathbf{1} & 0\\
S^{-1}R & \mathbf{1}%
\end{array}
\right]  , \label{mab2}%
\end{equation}
where the matrix%
\begin{equation}
M/S=P-QS^{-1}R \label{mab3}%
\end{equation}
is known as the \emph{Schur complement} of $S$ in $M$. The Aitken formula
(\ref{mab2}) readily implies%
\begin{equation}
\det M=\det S\det\left(  P-QS^{-1}R\right)  . \label{mab4}%
\end{equation}
From this we can conclude that%
\begin{align}
\text{If }QS  &  =SQ\text{ then }\det M=\det\left(  SP-QR\right)
;\label{mab5}\\
\text{If }RS  &  =SR\text{ then }\det M=\det\left(  PS-QR\right)  .\nonumber
\end{align}
But the latter statement is true even if $\det S=0$ which is proved using a
limiting argument. \ Indeed, by substituting for $S$ in (\ref{mab1}) the small
perturbation $S_{\delta}=S+\delta I$ which is invertible for $0<\left\vert
\delta\right\vert \ll1$ and commutes with any matrix that $S$ commutes with
then statement \ref{mab5} is true for $S_{\delta}$ and the limit as
$\delta\rightarrow0$ yields the desired result.\qquad

\section{Appendix: \ Eigenvalue bounds and dichotomy for non-Hermitian
matrices\label{apxebd}}

In this appendix we discuss results pertaining to bounds on the eigenvalues of
non-Hermitian matrices in terms of their real and imaginary parts. We also
discuss a phenomenon, which we call \emph{modal dichotomy}, that occurs for
any non-Hermitian matrix with a non-invertible imaginary part which is large
in comparison to the real part of the matrix. \ The bounds and dichotomy
described here are extremely useful tools in studying the spectral properties
of dissipative systems especially for composite systems with high-loss and
lossless components.

Recall, any square matrix $M$ can be written as the sum of a Hermitian matrix
and a skew-Hermitian matrix,%
\begin{equation}
M=\operatorname{Re}M+\mathrm{i}\operatorname{Im}M,\text{ }\operatorname{Re}%
M=\frac{M+M^{\ast}}{2},\text{ }\operatorname{Im}M=\frac{M-M^{\ast}%
}{2\mathrm{i}}, \label{apxebd1}%
\end{equation}
the real and imaginary parts of $M$, respectively, and $M^{\ast}$ is the
conjugate transpose of the matrix $M$.

For the set of $n\times n$ matrices, denote by $\left\Vert \cdot\right\Vert $
the operator norm%
\begin{equation}
\left\Vert M\right\Vert =\sup_{x\not =0}\frac{\left\Vert Mx\right\Vert _{2}%
}{\left\Vert x\right\Vert _{2}}, \label{apxebd2}%
\end{equation}
where $\left\Vert \cdot\right\Vert _{2}=\sqrt{\left(  \cdot,\cdot\right)  }$
and $\left(  \cdot,\cdot\right)  $ denotes the standard Euclidean inner
product on $n\times1$ column vectors with entries in $%
\mathbb{C}
$. \ Denote the spectrum of a square matrix $M$ by $\sigma\left(  M\right)  $.
Recall in the case $M$ is a Hermitian matrix, i.e., $M^{\ast}=M$, or a normal
matrix, i.e., $M^{\ast}M=MM^{\ast}$, that
\begin{equation}
\left\Vert M\right\Vert =\max_{\zeta\in\sigma\left(  M\right)  }\left\vert
\zeta\right\vert ,\text{ \ \ }\left\Vert \left(  \zeta I-M\right)
^{-1}\right\Vert ^{-1}=\operatorname{dist}\left(  \zeta,\sigma\left(
M\right)  \right)  , \label{apxebd3}%
\end{equation}
where $\operatorname{dist}\left(  \zeta,\sigma\left(  M\right)  \right)
=\inf\limits_{\lambda\in\sigma\left(  M\right)  }\left\vert \zeta
-\lambda\right\vert $ and by convention $\left\Vert \left(  \zeta I-M\right)
^{-1}\right\Vert ^{-1}=0$ if $\zeta\in$ $\sigma\left(  M\right)  $.

The first result which we will prove here is the following proposition which
gives bounds on the eigenvalues of a matrix $M$ in terms of $\operatorname{Re}%
M$ and $\operatorname{Im}M$.

\begin{proposition}
[eigenvalue bounds]\label{apxpebd}Let $M$ be a square matrix. \ Then%
\begin{equation}
\sigma\left(  M\right)  \subseteq\left\{  \zeta\in%
\mathbb{C}
:\operatorname{dist}\left(  \zeta,\sigma\left(  \mathrm{i}\operatorname{Im}%
M\right)  \right)  \leq\left\Vert \operatorname{Re}M\right\Vert \right\}
\label{apxebd4}%
\end{equation}
In other words, the eigenvalues of $M$ lie in the union of the closed discs
whose centers are the eigenvalues of $\mathrm{i}\operatorname{Im}M$ with
radius $\left\Vert \operatorname{Re}M\right\Vert $. \ Moreover, the imaginary
part of any eigenvalue $\zeta$ of $M$ satisfies the inequality%
\begin{equation}
\left\Vert \operatorname{Im}M\right\Vert \geq\left\vert \operatorname{Im}%
\zeta\right\vert \geq\left\vert \gamma\right\vert -\left\Vert
\operatorname{Re}M\right\Vert \label{apxebd5}%
\end{equation}
for some $\gamma\in\sigma\left(  \operatorname{Im}M\right)  $ depending on
$\zeta$.
\end{proposition}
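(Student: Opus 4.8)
The plan is to establish the disc containment (\ref{apxebd4}) first, by a resolvent/Neumann‑series argument, and then to read off both inequalities in (\ref{apxebd5}) almost for free: the upper bound from a numerical‑range estimate and the lower bound directly from (\ref{apxebd4}).

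For (\ref{apxebd4}) I would write $M=\operatorname{Re}M+\mathrm{i}\operatorname{Im}M$, so that $\zeta I-M=(\zeta I-\mathrm{i}\operatorname{Im}M)-\operatorname{Re}M$. The crucial point is that $\mathrm{i}\operatorname{Im}M$ is skew‑Hermitian, hence normal, so the second identity in (\ref{apxebd3}) applies to it and gives $\|(\zeta I-\mathrm{i}\operatorname{Im}M)^{-1}\|^{-1}=\operatorname{dist}(\zeta,\sigma(\mathrm{i}\operatorname{Im}M))$ for $\zeta\notin\sigma(\mathrm{i}\operatorname{Im}M)$. I would then argue by contraposition: if $\operatorname{dist}(\zeta,\sigma(\mathrm{i}\operatorname{Im}M))>\|\operatorname{Re}M\|$, then $\zeta I-\mathrm{i}\operatorname{Im}M$ is invertible and $\|(\zeta I-\mathrm{i}\operatorname{Im}M)^{-1}\operatorname{Re}M\|<1$, so factoring $\zeta I-M=(\zeta I-\mathrm{i}\operatorname{Im}M)\bigl(I-(\zeta I-\mathrm{i}\operatorname{Im}M)^{-1}\operatorname{Re}M\bigr)$ exhibits $\zeta I-M$ as a product of invertible matrices (the second factor by the Neumann series), whence $\zeta\notin\sigma(M)$. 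This is exactly the assertion that every $\zeta\in\sigma(M)$ lies within distance $\|\operatorname{Re}M\|$ of $\sigma(\mathrm{i}\operatorname{Im}M)$, i.e. in the stated union of closed discs.

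For the upper bound in (\ref{apxebd5}) I would take a unit eigenvector $w$ with $Mw=\zeta w$, so $\zeta=(w,Mw)$; since $\operatorname{Re}M$ and $\operatorname{Im}M$ are Hermitian the scalars $(w,\operatorname{Re}(M)w)$ and $(w,\operatorname{Im}(M)w)$ are real, hence $\operatorname{Im}\zeta=(w,\operatorname{Im}(M)w)$ and $|\operatorname{Im}\zeta|\le\|\operatorname{Im}M\|$. For the lower bound, observe that $\sigma(\mathrm{i}\operatorname{Im}M)=\mathrm{i}\,\sigma(\operatorname{Im}M)$ with $\sigma(\operatorname{Im}M)\subset\mathbb{R}$, so (\ref{apxebd4}) furnishes a $\gamma\in\sigma(\operatorname{Im}M)$ with $|\zeta-\mathrm{i}\gamma|\le\|\operatorname{Re}M\|$; then $|\gamma|-|\operatorname{Im}\zeta|\le|\gamma-\operatorname{Im}\zeta|=|\operatorname{Im}(\mathrm{i}\gamma-\zeta)|\le|\mathrm{i}\gamma-\zeta|\le\|\operatorname{Re}M\|$, which rearranges to $|\operatorname{Im}\zeta|\ge|\gamma|-\|\operatorname{Re}M\|$.

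I do not anticipate a genuine obstacle here: the whole argument is a standard combination of the Neumann series, the triangle inequality, and the spectral theorem for normal matrices. The only step deserving a line of care is the invocation of the resolvent‑norm identity for $\mathrm{i}\operatorname{Im}M$, which is legitimate precisely because that matrix is normal, as recorded in (\ref{apxebd3}).
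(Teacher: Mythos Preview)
Your proposal is correct and essentially the same as the paper's proof: the paper first records the Neumann-series/resolvent argument as a separate lemma (for a general normal $M_{0}$ with perturbation $E$) and then specializes to $M_{0}=\mathrm{i}\operatorname{Im}M$, $E=\operatorname{Re}M$, while you do this specialization directly; the only cosmetic difference is that the paper factors $\zeta I-M=\bigl(I-E(\zeta I-M_{0})^{-1}\bigr)(\zeta I-M_{0})$ on the left rather than the right. The bounds in (\ref{apxebd5}) are derived identically in both, via the eigenvector inner product for the upper bound and the triangle inequality applied to the disc containment for the lower bound.
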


Before we proceed with the proof we will need the following perturbation
result from \cite[Sec. V.4, p. 291, Prob. 4.8]{Kato} for non-Hermitian
matrices which we prove here for completeness.

\begin{lemma}
If $M=M_{0}+E$ is a square matrix with $M_{0}$ normal then%
\begin{equation}
\sigma\left(  M\right)  \subseteq\left\{  \zeta\in%
\mathbb{C}
:\operatorname{dist}\left(  \zeta,\sigma\left(  M_{0}\right)  \right)
\leq\left\Vert E\right\Vert \right\}  \label{apxebd6}%
\end{equation}

\end{lemma}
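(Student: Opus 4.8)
The plan is the standard resolvent-estimate argument for perturbations of normal matrices. Fix an eigenvalue $\zeta\in\sigma\left(  M\right)  $. If $\zeta\in\sigma\left(  M_{0}\right)  $ then $\operatorname{dist}\left(  \zeta,\sigma\left(  M_{0}\right)  \right)  =0\leq\left\Vert E\right\Vert $ and there is nothing to prove, so I would assume $\zeta\notin\sigma\left(  M_{0}\right)  $, which makes $\zeta I-M_{0}$ invertible. The first step is to exploit that $\zeta$ is an eigenvalue of $M=M_{0}+E$: pick a nonzero vector $x$ with $\left(  \zeta I-M\right)  x=0$, rewrite this as $\left(  \zeta I-M_{0}\right)  x=Ex$, and apply $\left(  \zeta I-M_{0}\right)  ^{-1}$ to get $x=\left(  \zeta I-M_{0}\right)  ^{-1}Ex$.

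Next, taking norms and using submultiplicativity of the operator norm (\ref{apxebd2}) gives $\left\Vert x\right\Vert _{2}\leq\left\Vert \left(  \zeta I-M_{0}\right)  ^{-1}\right\Vert \left\Vert E\right\Vert \left\Vert x\right\Vert _{2}$, and since $x\not =0$ we may divide through to obtain $1\leq\left\Vert \left(  \zeta I-M_{0}\right)  ^{-1}\right\Vert \left\Vert E\right\Vert $, i.e. $\left\Vert \left(  \zeta I-M_{0}\right)  ^{-1}\right\Vert ^{-1}\leq\left\Vert E\right\Vert $. Finally I would invoke the hypothesis that $M_{0}$ is normal: by the second identity in (\ref{apxebd3}), $\left\Vert \left(  \zeta I-M_{0}\right)  ^{-1}\right\Vert ^{-1}=\operatorname{dist}\left(  \zeta,\sigma\left(  M_{0}\right)  \right)  $. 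Combining the last two displays yields $\operatorname{dist}\left(  \zeta,\sigma\left(  M_{0}\right)  \right)  \leq\left\Vert E\right\Vert $, which is exactly the containment (\ref{apxebd6}) since $\zeta\in\sigma\left(  M\right)  $ was arbitrary.

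There is no real obstacle here; the only point that genuinely uses a nontrivial fact is the identification of $\left\Vert \left(  \zeta I-M_{0}\right)  ^{-1}\right\Vert ^{-1}$ with the distance to the spectrum, which holds precisely because normal matrices are unitarily diagonalizable (so their resolvent norm is the reciprocal of the minimal distance from $\zeta$ to an eigenvalue); this is recorded in (\ref{apxebd3}) and may be cited directly. For a general (non-normal) $M_{0}$ the argument would only give the weaker bound in terms of the resolvent norm rather than the distance to $\sigma\left(  M_{0}\right)  $, so the normality hypothesis is exactly what is needed. One should also note the boundary convention already fixed in the text, namely $\left\Vert \left(  \zeta I-M_{0}\right)  ^{-1}\right\Vert ^{-1}=0$ when $\zeta\in\sigma\left(  M_{0}\right)  $, so the two cases can in fact be merged into one line if desired.
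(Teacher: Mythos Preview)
Your proof is correct and follows essentially the same resolvent-estimate approach as the paper: the paper factors $\zeta I-M=(I-E(\zeta I-M_{0})^{-1})(\zeta I-M_{0})$ and uses a Neumann-series invertibility criterion to arrive at the contrapositive inequality $\left\Vert (\zeta I-M_{0})^{-1}\right\Vert ^{-1}\leq\left\Vert E\right\Vert $, whereas you obtain the same inequality directly from an eigenvector, and both arguments then invoke (\ref{apxebd3}) for normal $M_{0}$.
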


\begin{proof}
Let $M=M_{0}+E$ be a square matrix with $M_{0}$ normal. \ Then%
\[
\sigma\left(  M\right)  \cap\sigma\left(  M_{0}\right)  \subseteq\left\{
\zeta\in%
\mathbb{C}
:\operatorname{dist}\left(  \zeta,\sigma\left(  M_{0}\right)  \right)
\leq\left\Vert E\right\Vert \right\}  .
\]
Let $\rho\left(  M_{0}\right)  =\left\{  \zeta\in%
\mathbb{C}
:\zeta\not \in \sigma\left(  M_{0}\right)  \right\}  $, i.e., the resolvent
set of $M_{0}$. \ Then for any $\zeta\in\rho\left(  M_{0}\right)  $ we have%
\[
\zeta I-M=\zeta I-M_{0}-\left(  M-M_{0}\right)  =\left(  I-E\left(  \zeta
I-M_{0}\right)  ^{-1}\right)  \left(  \zeta I-M_{0}\right)  .
\]
Thus if $\left\Vert E\left(  \zeta I-M_{0}\right)  ^{-1}\right\Vert <1$ then
the matrix $\left(  I-E\left(  \zeta I-M_{0}\right)  ^{-1}\right)  $ is
invertible since the Neumann series $\sum_{j=0}^{\infty}T^{n}$ converges to
$\left(  I-T\right)  ^{-1}$ for any square matrix $T$ satisfying $\left\Vert
T\right\Vert <1$. \ This implies%
\[
\sigma\left(  M\right)  \cap\rho\left(  M_{0}\right)  \subseteq\left\{
\zeta\in\rho\left(  M_{0}\right)  :\left\Vert E\left(  \zeta I-M_{0}\right)
^{-1}\right\Vert \geq1\right\}  .
\]
But the inequality$\left\Vert E\left(  \zeta I-M_{0}\right)  ^{-1}\right\Vert
\leq\left\Vert E\right\Vert \left\Vert \left(  \zeta I-M_{0}\right)
^{-1}\right\Vert $ and the fact $M_{0}$ is a normal matrix implies%
\begin{align*}
\sigma\left(  M\right)  \cap\rho\left(  M_{0}\right)   &  \subseteq\left\{
\zeta\in\rho\left(  M_{0}\right)  :\left\Vert \left(  \zeta I-M_{0}\right)
^{-1}\right\Vert ^{-1}\leq\left\Vert E\right\Vert \right\} \\
&  \subseteq\left\{  \zeta\in%
\mathbb{C}
:\operatorname{dist}\left(  \zeta,\sigma\left(  M_{0}\right)  \right)
\leq\left\Vert E\right\Vert \right\}  .
\end{align*}
Therefore,%
\[
\sigma\left(  M\right)  =\sigma\left(  M\right)  \cap\sigma\left(
M_{0}\right)  \cup\sigma\left(  M\right)  \cap\rho\left(  M_{0}\right)
\subseteq\left\{  \zeta\in%
\mathbb{C}
:\operatorname{dist}\left(  \zeta,\sigma\left(  M_{0}\right)  \right)
\leq\left\Vert E\right\Vert \right\}  .
\]
This completes the proof.
\end{proof}

\begin{proof}
[Proof of Proposition \ref{apxpebd}]Let $M$ be any square matrix. \ Then
$M=-\mathrm{i}\operatorname{Im}M+\operatorname{Re}M$ and $-\mathrm{i}%
\operatorname{Im}M$ is a normal matrix. By the previous lemma this implies%
\[
\sigma\left(  M\right)  \subseteq\left\{  \zeta\in%
\mathbb{C}
:\operatorname{dist}\left(  \zeta,\sigma\left(  \mathrm{i}\operatorname{Im}%
M\right)  \right)  \leq\left\Vert \operatorname{Re}M\right\Vert \right\}  .
\]
Now let $\zeta$ be any eigenvalue of $M$. \ Then for any corresponding
eigenvector $x$ of unit norm%
\begin{equation}
\left\vert \operatorname{Im}\zeta\right\vert =\left\vert \operatorname{Im}%
\left(  x,Mx\right)  \right\vert =\left\vert \left(  x,\operatorname{Im}%
Mx\right)  \right\vert \leq\left\Vert \operatorname{Im}M\right\Vert .
\label{papxpebd1}%
\end{equation}
Moreover, we know there exists an eigenvalue $\gamma$ of $\operatorname{Im}M$
(depending on\ $\zeta$) such that
\[
\left\vert \zeta-\mathrm{i}\gamma\right\vert =\operatorname{dist}\left(
\zeta,\sigma\left(  \mathrm{i}\operatorname{Im}M\right)  \right)
\leq\left\Vert \operatorname{Re}M\right\Vert .
\]
Hence
\begin{equation}
\left\Vert \operatorname{Re}M\right\Vert \geq\left\vert \zeta-\mathrm{i}%
\gamma\right\vert \geq\left\vert \operatorname{Im}\left(  \zeta-\mathrm{i}%
\gamma\right)  \right\vert =\left\vert \operatorname{Im}\zeta-\gamma
\right\vert \geq\left\vert \gamma\right\vert -\left\vert \operatorname{Im}%
\zeta\right\vert . \label{papxpebd2}%
\end{equation}
The proof now follows from (\ref{papxpebd1}) and (\ref{papxpebd2}).
\end{proof}

The next result we prove is a proposition on dichotomy of the spectrum of a
non-Hermitian matrix $M$ which we mentioned in the introduction of this
appendix. \ In particular, the following proposition tells us that the
spectrum of a matrix $M$ will split into two disjoint parts when the imaginary
part $\operatorname{Im}M$ is non-invertible and \textquotedblleft large" in
comparison to the real part $\operatorname{Re}M$. \ The term \textquotedblleft
large" means that the bottom of the nonzero spectrum of $\left\vert
\operatorname{Im}M\right\vert $ must be greater than twice the top of the
spectrum of $\left\vert \operatorname{Re}M\right\vert $ (where for a square
matrix $A$, $\left\vert A\right\vert =\sqrt{A^{\ast}A}$ as defined in
\cite[\S V1.4, p. 196]{ReSi1}), that is,
\begin{equation}
\min\limits_{\gamma\in\sigma\left(  \operatorname{Im}M\right)  ,\text{ }%
\gamma\not =0}\left\vert \gamma\right\vert >2\max\limits_{\lambda\in
\sigma\left(  \operatorname{Re}M\right)  }\left\vert \lambda\right\vert .
\label{apxebd7}%
\end{equation}
\qquad This of course implies $\left\Vert \operatorname{Im}M\right\Vert $ is
\textquotedblleft large" in comparison to $\left\Vert \operatorname{Re}%
M\right\Vert $ since it follows from this inequality that%
\begin{equation}
\left\Vert \operatorname{Im}M\right\Vert >2\left\Vert \operatorname{Re}%
M\right\Vert . \label{apxebd8}%
\end{equation}

\begin{theorem}
[modal dichotomy]\label{apxtmd}Let $M$ be any $n\times n$ matrix which is
non-Hermitian such that its imaginary part $\operatorname{Im}M$ is
non-invertible with rank $m=\operatorname{rank}\left(  \operatorname{Im}%
M\right)  $. \ Denote by $\gamma_{j}$, $j=1,$\ldots, $m $ the nonzero
eigenvalues of $\operatorname{Im}M$. \ If $\min_{1\leq j\leq m}\left\vert
\gamma_{j}\right\vert >2\left\Vert \operatorname{Re}M\right\Vert $ then
\begin{equation}
\sigma\left(  M\right)  =\sigma_{0}\left(  M\right)  \cup\sigma_{1}\left(
M\right)  ,\text{ \ \ }\sigma_{0}\left(  M\right)  \cap\sigma_{1}\left(
M\right)  =\emptyset, \label{apxebd9}%
\end{equation}
where%
\begin{align}
\sigma_{0}\left(  M\right)   &  =\left\{  \zeta\in\sigma\left(  M\right)
:\left\vert \zeta\right\vert \leq\left\Vert \operatorname{Re}M\right\Vert
\right\}  ,\label{apxebd10}\\
\sigma_{1}\left(  M\right)   &  =\left\{  \zeta\in\sigma\left(  M\right)
:\left\vert \zeta-\mathrm{i}\gamma_{j}\right\vert \leq\left\Vert
\operatorname{Re}M\right\Vert \text{ for some }j\right\}  .\nonumber
\end{align}
Furthermore, there exists unique matrices $P_{0}$, $P_{1}$ with the properties%
\begin{align}
&  (i)\text{ \ \ }\mathbf{1=}\text{ }P_{0}+P_{1},\text{ \ \ }P_{i}P_{j}%
=\delta_{ij}P_{j};\label{apxebd11}\\
&  (ii)\text{ \ \ }MP_{0}=P_{0}MP_{0},\text{ \ \ }MP_{1}=P_{1}MP_{1}%
;\nonumber\\
&  (iii)\text{ \ \ }\sigma\left(  MP_{0}|_{\operatorname{Ran}P_{0}}\right)
=\sigma_{0}\left(  M\right)  ,\text{ \ \ }\sigma\left(  MP_{1}%
|_{\operatorname{Ran}P_{1}}\right)  =\sigma_{1}\left(  M\right)  ,\nonumber
\end{align}
where $\delta_{ij}=0$ if $i\not =j$ and $\delta_{ij}=1$ if $i=j$ ($i,j=1,2$).
Moreover, these projection matrices have rank satisfying%
\begin{equation}
\operatorname{rank}P_{1}=m,\ \ \operatorname{rank}P_{0}=n-m. \label{apxebd12}%
\end{equation}

\end{theorem}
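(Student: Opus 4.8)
The plan is to combine the eigenvalue localization of Proposition \ref{apxpebd} with the Riesz spectral projection and a homotopy argument for the rank count. I would not try to track generalized eigenspaces by hand; the functional calculus handles all the bookkeeping.

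\textbf{Step 1 (the spectral splitting).} First I would deduce (\ref{apxebd9})--(\ref{apxebd10}) directly from Proposition \ref{apxpebd}. Since $\operatorname{Im}M$ is non-invertible, $0\in\sigma(\operatorname{Im}M)$, and by hypothesis its nonzero eigenvalues are exactly $\gamma_{1},\dots,\gamma_{m}$, so $\sigma(\mathrm{i}\operatorname{Im}M)\subseteq\{0\}\cup\{\mathrm{i}\gamma_{j}:1\le j\le m\}$. Proposition \ref{apxpebd} then confines $\sigma(M)$ to the union of the closed disc of radius $\|\operatorname{Re}M\|$ about $0$ and the closed discs of radius $\|\operatorname{Re}M\|$ about the points $\mathrm{i}\gamma_{j}$. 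The hypothesis $\min_{j}|\gamma_{j}|>2\|\operatorname{Re}M\|$ gives $|\mathrm{i}\gamma_{j}-0|=|\gamma_{j}|>2\|\operatorname{Re}M\|$, so the disc about $0$ is disjoint from every disc about $\mathrm{i}\gamma_{j}$. Intersecting with $\sigma(M)$ and collecting the pieces yields precisely the disjoint sets $\sigma_{0}(M)$, $\sigma_{1}(M)$ of (\ref{apxebd10}) with $\sigma(M)=\sigma_{0}(M)\cup\sigma_{1}(M)$.

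\textbf{Step 2 (the projections and their uniqueness).} Since $\sigma_{0}(M)$ and $\sigma_{1}(M)$ sit in disjoint compact sets, I would choose a cycle $\Gamma$ lying in the resolvent set of $M$ that winds once around the cluster $\bigcup_{j}\{\zeta:|\zeta-\mathrm{i}\gamma_{j}|\le\|\operatorname{Re}M\|\}$ and around no point of the disc about $0$, and set
\[
P_{1}=\frac{1}{2\pi\mathrm{i}}\oint_{\Gamma}(\zeta\mathbf{1}-M)^{-1}\,d\zeta,\qquad P_{0}=\mathbf{1}-P_{1}.
\]
Standard properties of the holomorphic functional calculus give exactly (i)--(iii): $P_{0},P_{1}$ are complementary idempotents commuting with $M$ (hence with $M$-invariant ranges), and $\sigma(M|_{\operatorname{Ran}P_{1}})=\sigma_{1}(M)$, $\sigma(M|_{\operatorname{Ran}P_{0}})=\sigma_{0}(M)$. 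For uniqueness, if $P_{0}',P_{1}'$ is another such pair, then $\operatorname{Ran}P_{j}'$ is an $M$-invariant subspace on which $M$ has spectrum $\sigma_{j}(M)$; since $\sigma_{0}(M)\cap\sigma_{1}(M)=\emptyset$, each $\operatorname{Ran}P_{j}'$ must equal the sum of the generalized eigenspaces of $M$ attached to the eigenvalues in $\sigma_{j}(M)$, i.e. $\operatorname{Ran}P_{j}$; and a projection is determined by its range together with its complementary kernel, so $P_{j}'=P_{j}$.

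\textbf{Step 3 (the rank count) and the main obstacle.} It remains to prove $\operatorname{rank}P_{1}=m$, and this is the only real subtlety. I would deform $M$ to the decoupled model $M_{t}=\mathrm{i}\operatorname{Im}M+t\operatorname{Re}M$, $t\in[0,1]$, so $M_{1}=M$ and $M_{0}=\mathrm{i}\operatorname{Im}M$. For every $t$ one has $\operatorname{Im}M_{t}=\operatorname{Im}M$ and $\|\operatorname{Re}M_{t}\|=t\|\operatorname{Re}M\|\le\|\operatorname{Re}M\|<\tfrac12\min_{j}|\gamma_{j}|$, so Step 1 applies to every $M_{t}$ with the \emph{same} discs, and hence the cycle $\Gamma$ stays in the resolvent set of every $M_{t}$; this uniform separation is exactly the delicate point. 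Consequently $P_{1}(t)=(2\pi\mathrm{i})^{-1}\oint_{\Gamma}(\zeta\mathbf{1}-M_{t})^{-1}\,d\zeta$ is an analytic family of projections on $[0,1]$, so $\operatorname{rank}P_{1}(t)=\operatorname{tr}P_{1}(t)$ is a continuous integer-valued function, hence constant. At $t=0$ the matrix $\mathrm{i}\operatorname{Im}M$ is normal, its eigenvalues inside $\Gamma$ are the $\mathrm{i}\gamma_{j}$, and since $\operatorname{Im}M$ is Hermitian the total algebraic multiplicity of its nonzero eigenvalues equals $\operatorname{rank}(\operatorname{Im}M)=m$; thus $\operatorname{rank}P_{1}(0)=m$, whence $\operatorname{rank}P_{1}=m$ and $\operatorname{rank}P_{0}=n-m$, completing the proof. (A homotopy-free alternative would be to diagonalize $\operatorname{Im}M$ and apply an Aitken/Schur-complement block factorization to $\zeta\mathbf{1}-M$ as in Appendix \ref{apxsc}, but the deformation argument is cleaner.)
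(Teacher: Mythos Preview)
Your proposal is correct and follows essentially the same route as the paper: Proposition \ref{apxpebd} for the disc localization, Riesz projections for (i)--(iii), and the linear homotopy $M_{t}=\mathrm{i}\operatorname{Im}M+t\operatorname{Re}M$ together with continuity of $\operatorname{tr}P(t)$ for the rank count. The only cosmetic difference is that the paper encircles the single disc about the origin (defining $P_{0}$ first via a circle of radius $r_{0}=\tfrac12\min_{j}|\gamma_{j}|$) rather than the possibly disconnected cluster $\bigcup_{j}\{|\zeta-\mathrm{i}\gamma_{j}|\le\|\operatorname{Re}M\|\}$, which spares one the need to speak of cycles winding around a non-connected set; otherwise the arguments coincide.
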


\begin{proof}
All the eigenvalues of $\operatorname{Im}M$ by hypothesis are $\gamma_{j}$,
$j=1,$\ldots, $m$ and $\gamma_{0}=0$. It follows from Proposition
\ref{apxpebd} that%
\[
\sigma\left(  M\right)  \subseteq\bigcup_{j=0}^{n}\left\{  \zeta\in%
\mathbb{C}
:\left\vert \zeta-\mathrm{i}\gamma_{j}\right\vert \leq\left\Vert
\operatorname{Re}M\right\Vert \right\}
\]
and the imaginary part of any eigenvalue $\zeta$ of $M$ satisfies the
inequality%
\[
\left\Vert \operatorname{Im}M\right\Vert \geq\left\vert \operatorname{Im}%
\zeta\right\vert \geq\left\vert \gamma_{j^{\prime}}\right\vert -\left\Vert
\operatorname{Re}M\right\Vert
\]
for some $j^{\prime}\in\left\{  1,\ldots,n\right\}  $ depending on $\zeta$.
\ In particular, it follows that%
\[
\sigma\left(  M\right)  =\sigma_{0}\left(  M\right)  \cup\sigma_{1}\left(
M\right)
\]
where
\begin{align*}
\sigma_{0}\left(  M\right)   &  =\left\{  \zeta\in\sigma\left(  M\right)
:\left\vert \zeta\right\vert \leq\left\Vert \operatorname{Re}M\right\Vert
\right\}  ,\\
\sigma_{1}\left(  M\right)   &  =\left\{  \zeta\in\sigma\left(  M\right)
:\left\vert \zeta-\mathrm{i}\gamma_{j}\right\vert \leq\left\Vert
\operatorname{Re}M\right\Vert \text{ for some }j\not =0\right\}  .
\end{align*}
Now suppose that $\min_{1\leq j\leq m}\left\vert \gamma_{j}\right\vert
>2\left\Vert \operatorname{Re}M\right\Vert $. \ We will now show that
$\sigma_{0}\left(  M\right)  \cap\sigma_{1}\left(  M\right)  =\emptyset$.
\ Suppose this were not true then we could find $\zeta\in\sigma_{0}\left(
M\right)  \cap\sigma_{1}\left(  M\right)  $ which would imply for some
$\gamma_{j^{\prime}}$ with $j^{\prime}\not =0$ we have
\[
\left\Vert \operatorname{Re}M\right\Vert \geq\left\vert \zeta-\mathrm{i}%
\gamma_{j^{\prime}}\right\vert \geq\left\vert \gamma_{j^{\prime}}\right\vert
-\left\Vert \operatorname{Re}M\right\Vert \geq\min_{1\leq j\leq m}\left\vert
\gamma_{j}\right\vert -\left\Vert \operatorname{Re}M\right\Vert >\left\Vert
\operatorname{Re}M\right\Vert ,
\]
a contradiction. \ Thus $\sigma_{0}\left(  M\right)  \cap\sigma_{1}\left(
M\right)  =\emptyset$.

Now it follows from the spectral theory of matrices that the
finite-dimensional vector space $%
\mathbb{C}
^{n}$ of $n\times1$ column vectors with entries in $%
\mathbb{C}
$ can be written as the direct sum of two invariant subspaces for the matrix
$M$,%
\[%
\mathbb{C}
^{n}=H_{0}\oplus H_{1},\text{ }H_{0}=\cup_{\zeta\in\sigma_{0}\left(  M\right)
}\operatorname{Ker}\left(  M-\zeta\mathbf{1}\right)  ^{n},\text{ }H_{1}%
=\cup_{\zeta\in\sigma_{1}\left(  M\right)  }\operatorname{Ker}\left(
M-\zeta\mathbf{1}\right)  ^{n}.
\]
That is, $H_{0}$ is the union of the generalized eigenspaces of $M$
corresponding to eigenvalues in $\sigma_{0}\left(  M\right)  $ and $H_{1}$ is
the union of generalized eigenspaces of $M$ corresponding to the eigenvalues
in $\sigma_{1}\left(  M\right)  =\sigma\left(  M\right)  \backslash\sigma
_{0}\left(  M\right)  $. \ Denote the projection matrix onto $H_{0}$ along
$H_{1}$ by $P_{0}$, that is, the matrix satisfying $P_{0}^{2}=P_{0}$,
$\operatorname{Ran}P_{0}=H_{0}$, and $\operatorname{Ran}\left(  \mathbf{1-}%
P_{0}\right)  =H_{1}$. \ Then it follows from the spectral theory of matrices
that $P_{0}$ and $P_{1}=\mathbf{1}-P_{0}$ are the unique matrices with the
properties%
\begin{align*}
&  (i)\text{ \ \ }\mathbf{1=}\text{ }P_{0}+P_{1},\text{ \ \ }P_{i}P_{j}%
=\delta_{ij}P_{j};\\
&  (ii)\text{ \ \ }MP_{0}=P_{0}MP_{0},\text{ \ \ }MP_{1}=P_{1}MP_{1};\\
&  (iii)\text{ \ \ }\sigma\left(  MP_{0}|_{\operatorname{Ran}P_{0}}\right)
=\sigma_{0}\left(  M\right)  ,\text{ \ \ }\sigma\left(  MP_{1}%
|_{\operatorname{Ran}P_{1}}\right)  =\sigma_{1}\left(  M\right)  .
\end{align*}
Now it follows that
\[
n=\dim H_{1}+\dim H_{0}=\dim\operatorname{Ran}P_{1}+\dim\operatorname{Ran}%
P_{0}=\operatorname{rank}P_{1}+\operatorname{rank}P_{0}%
\]
so that to complete the proof of this proposition we need only show that%
\[
\operatorname{rank}P_{0}=n-m\text{.}%
\]

In order to prove this we begin by giving an explicit representation of the
matrix $P_{0}$ in terms of the resolvent of $M$ which is defined as
\[
R\left(  \zeta\right)  =\left(  \zeta\mathbf{1-}M\right)  ^{-1},\text{
\ \ }\zeta\not \in \sigma\left(  M\right)  .
\]
To do this we define
\[
r_{0}=\frac{1}{2}\min_{1\leq j\leq m}\left\vert \gamma_{j}\right\vert ,\text{
\ \ }D\left(  0,r_{0}\right)  =\left\{  \zeta\in%
\mathbb{C}
:\left\vert \zeta\right\vert \leq r_{0}\right\}  .
\]
We will now prove that%
\[
\sigma_{0}\left(  M\right)  \subseteq D\left(  0,r_{0}\right)  ,\text{
\ \ }\sigma_{1}\left(  M\right)  \cap D\left(  0,r_{0}\right)  =\emptyset.
\]
First, we notice that
\[
r_{0}=\left\Vert \operatorname{Re}M\right\Vert +\frac{1}{2}\left(  \min_{1\leq
j\leq m}\left\vert \gamma_{j}\right\vert -2\left\Vert \operatorname{Re}%
M\right\Vert \right)  >\left\Vert \operatorname{Re}M\right\Vert
\]
which proves $\sigma_{0}\left(  M\right)  \subseteq D\left(  0,r_{0}\right)
$. \ Next, we have that if $\zeta\in D\left(  0,r_{0}\right)  $ then for any
$\gamma_{j}$ with $j\not =0$,
\[
\left\vert \zeta-\mathrm{i}\gamma_{j}\right\vert \geq\left\vert \gamma
_{j}\right\vert -\left\vert \zeta\right\vert \geq\min_{1\leq j\leq
m}\left\vert \gamma_{j}\right\vert -\left\vert \zeta\right\vert \geq
\min_{1\leq j\leq m}\left\vert \gamma_{j}\right\vert -r_{0}=\frac{1}{2}%
\min_{1\leq j\leq m}\left\vert \gamma_{j}\right\vert >\left\Vert
\operatorname{Re}M\right\Vert
\]
implying that $\zeta\not \in $\ $\sigma_{1}\left(  M\right)  $. \ This proves
$\sigma_{1}\left(  M\right)  \cap D\left(  0,r_{0}\right)  =\emptyset$. \ It
now follows from the Cauchy-Riesz functional calculus \cite{Bau85} that%
\[
P_{0}=\frac{1}{2\pi\mathrm{i}}\int\limits_{\left\vert \zeta\right\vert =r_{0}%
}R\left(  \zeta\right)  d\zeta
\]
where the contour integral is over the simply closed positively oriented path
$\zeta\left(  \theta\right)  =r_{0}e^{\mathrm{i}\theta}$, $0\leq\theta\leq1$.

Now we consider the family of matrices%
\[
M\left(  t\right)  =(1-t)\operatorname{Re}M+\mathrm{i}\operatorname{Im}%
M,\text{ \ \ }0\leq t\leq1.
\]
Notice that for $0\leq t\leq1$ we have%
\begin{align*}
M\left(  0\right)   &  =M,\text{ \ \ }M\left(  1\right)  =\mathrm{i}%
\operatorname{Im}M,\\
\operatorname{Im}M\left(  t\right)   &  =\operatorname{Im}M,\text{
\ \ }\operatorname{Re}M\left(  t\right)  =(1-t)\operatorname{Re}M,\\
\min_{1\leq j\leq m}\left\vert \gamma_{j}\right\vert  &  >2\left\Vert
\operatorname{Re}M\right\Vert \geq2\left\Vert \operatorname{Re}M\left(
t\right)  \right\Vert .
\end{align*}
It follows from this that the results proven so far in this theorem for the
matrix $M$ apply to the matrix $M\left(  t\right)  $ for each $t\in\left[
0,1\right]  $. \ In particular, for each $t\in\left[  0,1\right]  $ we have
\[
\sigma\left(  M\left(  t\right)  \right)  =\sigma_{0}\left(  M\left(
t\right)  \right)  \cup\sigma_{1}\left(  M\left(  t\right)  \right)  ,\text{
\ \ }\sigma_{0}\left(  M\left(  t\right)  \right)  \cap\sigma_{1}\left(
M\left(  t\right)  \right)  =\emptyset,
\]
where%
\begin{align*}
\sigma_{0}\left(  M\left(  t\right)  \right)   &  =\left\{  \zeta\in
\sigma\left(  M\left(  t\right)  \right)  :\left\vert \zeta\right\vert
\leq\left\Vert \operatorname{Re}M\left(  t\right)  \right\Vert \right\}  ,\\
\sigma_{1}\left(  M\left(  t\right)  \right)   &  =\left\{  \zeta\in
\sigma\left(  M\left(  t\right)  \right)  :\left\vert \zeta-\mathrm{i}%
\gamma_{j}\right\vert \leq\left\Vert \operatorname{Re}M\left(  t\right)
\right\Vert \text{ for some }j\right\}  .
\end{align*}
Furthermore, there exists unique matrices $P_{0}\left(  t\right)  $,
$P_{1}\left(  t\right)  $ with the properties%
\begin{align*}
&  (i)\text{ \ \ }\mathbf{1=}\text{ }P_{0}\left(  t\right)  +P_{1}\left(
t\right)  ,\text{ \ \ }P_{i}\left(  t\right)  P_{j}\left(  t\right)
=\delta_{ij}P_{j}\left(  t\right)  ;\\
&  (ii)\text{ \ \ }M\left(  t\right)  P_{0}\left(  t\right)  =P_{0}\left(
t\right)  M\left(  t\right)  P_{0}\left(  t\right)  ,\text{ \ \ }M\left(
t\right)  P_{1}\left(  t\right)  =P_{1}\left(  t\right)  MP_{1}\left(
t\right)  ;\\
&  (iii)\text{ \ \ }\sigma\left(  M\left(  t\right)  P_{0}\left(  t\right)
|_{\operatorname{Ran}P_{0}\left(  t\right)  }\right)  =\sigma_{0}\left(
M\left(  t\right)  \right)  ,\text{ \ \ }\sigma\left(  M\left(  t\right)
P_{1}\left(  t\right)  |_{\operatorname{Ran}P_{1}\left(  t\right)  }\right)
=\sigma_{1}\left(  M\left(  t\right)  \right)  .
\end{align*}
Moreover,
\begin{gather*}
r_{0}=\frac{1}{2}\min_{1\leq j\leq m}\left\vert \gamma_{j}\right\vert ,\text{
\ \ }D\left(  0,r_{0}\right)  =\left\{  \zeta\in%
\mathbb{C}
:\left\vert \zeta\right\vert \leq r_{0}\right\}  ,\\
\sigma_{0}\left(  M\left(  t\right)  \right)  \subseteq D\left(
0,r_{0}\right)  ,\text{ \ \ }\sigma_{1}\left(  M\left(  t\right)  \right)
\cap D\left(  0,r_{0}\right)  =\emptyset,\\
P_{0}\left(  t\right)  =\frac{1}{2\pi\mathrm{i}}\int\limits_{\left\vert
\zeta\right\vert =r_{0}}R\left(  \zeta,t\right)  d\zeta,\text{ \ \ }R\left(
\zeta,t\right)  =\left(  \zeta\mathbf{1-}M\left(  t\right)  \right)  ^{-1}.
\end{gather*}
It follows from these facts that $P_{0}\left(  t\right)  $ is a continuous
matrix projection-valued function of $t$ in $\left[  0,1\right]  $ and%
\[
\operatorname{tr}P_{0}\left(  t\right)  =\operatorname{rank}P_{0}\left(
t\right)  ,\text{ \ \ }0\leq t\leq1,
\]
where $\operatorname{tr}\left(  \cdot\right)  $ denotes the trace of a square
matrix. \ But $\operatorname{tr}P_{0}\left(  t\right)  $ is a continuous
function of $t$ since $P_{0}\left(  t\right)  $ is a continuous matrix-valued
function implying $\operatorname{rank}P_{0}\left(  t\right)  $ is a continuous
function of $t$ taking values in the nonnegative integers only. \ From this we
conclude $\operatorname{rank}P_{0}\left(  t\right)  $ is constant for $t\in$
$\left[  0,1\right]  $. \ In particular,
\[
\operatorname{rank}P_{0}=\operatorname{rank}P_{0}\left(  0\right)
=\operatorname{rank}P_{0}\left(  1\right)  .
\]
Moreover,%
\begin{align*}
P_{0}\left(  1\right)   &  =\frac{1}{2\pi\mathrm{i}}\int\limits_{\left\vert
\zeta\right\vert =r_{0}}R\left(  \zeta,1\right)  d\zeta,\text{ \ \ }R\left(
\zeta,1\right)  =\left(  \zeta\mathbf{1-}\mathrm{i}\operatorname{Im}M\right)
^{-1},\\
\sigma\left(  \mathrm{i}\operatorname{Im}M\right)   &  =\sigma_{0}\left(
M\left(  1\right)  \right)  =\left\{  \zeta\in\sigma\left(  M\left(  1\right)
\right)  :\left\vert \zeta\right\vert \leq\left\Vert \operatorname{Re}M\left(
1\right)  \right\Vert \right\}  =\left\{  0\right\} \\
\operatorname{Ran}P_{0}\left(  1\right)   &  =\cup_{\zeta\in\sigma_{0}\left(
M\left(  1\right)  \right)  }\operatorname{Ker}\left(  M\left(  1\right)
-\zeta\mathbf{1}\right)  ^{n}=\operatorname{Ker}\left(  \mathrm{i}%
\operatorname{Im}M\right)  ^{n}=\operatorname{Ker}\left(  \operatorname{Im}%
M\right)  .
\end{align*}
But this implies that%
\[
\operatorname{rank}P_{0}=\operatorname{rank}P_{0}\left(  1\right)
=\dim\operatorname{Ran}P_{0}\left(  1\right)  =\dim\operatorname{Ker}\left(
\operatorname{Im}M\right)  =n-\operatorname{rank}\left(  \operatorname{Im}%
M\right)  =n-m.
\]
This completes the proof.
\end{proof}

We conclude this appendix by giving an alternative characterization of the
subsets $\sigma_{0}\left(  M\right)  $ and $\sigma_{1}\left(  M\right)  $ from
Theorem \ref{apxtmd} in terms of the magnitude of the imaginary parts of the
eigenvalues of $M$. \ In particular, the next corollary tells us that if the
$\operatorname{Im}M$ is associated with losses and if the imaginary part of
the eigenvalues of $M$ are associated with damping then only the eigenmodes
with eigenvalues in $\sigma_{1}\left(  M\right)  $ are susceptible to large
damping, relative to the norm of $\operatorname{Re}M$, when losses are large.

\begin{corollary}
\label{apxcmd}If $\min_{1\leq j\leq m}\left\vert \gamma_{j}\right\vert
>2\left\Vert \operatorname{Re}M\right\Vert $ then%
\begin{align}
\sigma_{0}\left(  M\right)   &  =\left\{  \zeta\in\sigma\left(  M\right)
:\left\vert \operatorname{Im}\zeta\right\vert \leq\left\Vert \operatorname{Re}%
M\right\Vert \right\}  ,\text{ }\label{apxebd13}\\
\sigma_{1}\left(  M\right)   &  =\left\{  \zeta\in\sigma\left(  M\right)
:\left\vert \operatorname{Im}\zeta\right\vert \geq\min_{1\leq j\leq
m}\left\vert \gamma_{j}\right\vert -\left\Vert \operatorname{Re}M\right\Vert
\right\}  .\nonumber
\end{align}

\end{corollary}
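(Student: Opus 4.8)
The plan is to extract both identities directly from the disjoint decomposition $\sigma\left(M\right)=\sigma_{0}\left(M\right)\cup\sigma_{1}\left(M\right)$, $\sigma_{0}\left(M\right)\cap\sigma_{1}\left(M\right)=\emptyset$ already established in Theorem \ref{apxtmd}, combined with nothing more than the elementary inequality $\left\vert\operatorname{Im}\zeta\right\vert\leq\left\vert\zeta\right\vert$, a triangle inequality, and the standing hypothesis. Throughout I would abbreviate $c=\left\Vert\operatorname{Re}M\right\Vert$ and $g=\min_{1\leq j\leq m}\left\vert\gamma_{j}\right\vert$, so that the hypothesis reads $g>2c$, and I would recall at the outset that the $\gamma_{j}$ are real since $\operatorname{Im}M$ is Hermitian, whence $\operatorname{Im}\left(\mathrm{i}\gamma_{j}\right)=\gamma_{j}$.

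First I would prove the characterization of $\sigma_{0}\left(M\right)$. The inclusion into $\left\{\zeta\in\sigma\left(M\right):\left\vert\operatorname{Im}\zeta\right\vert\leq c\right\}$ is immediate from $\left\vert\operatorname{Im}\zeta\right\vert\leq\left\vert\zeta\right\vert\leq c$ on $\sigma_{0}\left(M\right)$. For the reverse inclusion, I would argue by contradiction: if $\zeta\in\sigma\left(M\right)$ with $\left\vert\operatorname{Im}\zeta\right\vert\leq c$ but $\zeta\notin\sigma_{0}\left(M\right)$, then by the decomposition $\zeta\in\sigma_{1}\left(M\right)$, so $\left\vert\zeta-\mathrm{i}\gamma_{j}\right\vert\leq c$ for some $j$; taking imaginary parts gives $\left\vert\operatorname{Im}\zeta-\gamma_{j}\right\vert\leq c$, and then $\left\vert\gamma_{j}\right\vert\leq\left\vert\operatorname{Im}\zeta\right\vert+\left\vert\operatorname{Im}\zeta-\gamma_{j}\right\vert\leq 2c<g$, contradicting the definition of $g$. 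Hence $\zeta\in\sigma_{0}\left(M\right)$.

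Then I would treat $\sigma_{1}\left(M\right)$ symmetrically. If $\zeta\in\sigma_{1}\left(M\right)$, then $\left\vert\zeta-\mathrm{i}\gamma_{j}\right\vert\leq c$ for some $j$, so $\left\vert\operatorname{Im}\zeta-\gamma_{j}\right\vert\leq c$ and thus $\left\vert\operatorname{Im}\zeta\right\vert\geq\left\vert\gamma_{j}\right\vert-c\geq g-c$. Conversely, if $\zeta\in\sigma\left(M\right)$ with $\left\vert\operatorname{Im}\zeta\right\vert\geq g-c$ but $\zeta\notin\sigma_{1}\left(M\right)$, then $\zeta\in\sigma_{0}\left(M\right)$, hence $\left\vert\operatorname{Im}\zeta\right\vert\leq\left\vert\zeta\right\vert\leq c$; together with $\left\vert\operatorname{Im}\zeta\right\vert\geq g-c$ this forces $g\leq 2c$, again a contradiction. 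This completes the corollary.

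I do not expect a genuine obstacle here; the only point worth stressing is that both arguments hinge on the strict gap $g>2c$ (equivalently $g-c>c$), the same ``largeness'' hypothesis that made $\sigma_{0}\left(M\right)$ and $\sigma_{1}\left(M\right)$ disjoint in Theorem \ref{apxtmd} — it is precisely this gap that converts the two families of discs into a clean threshold on $\left\vert\operatorname{Im}\zeta\right\vert$. In particular, no use of the spectral projections $P_{0},P_{1}$ or of the resolvent/homotopy argument of the theorem is needed, only the set-theoretic splitting and triangle inequalities.
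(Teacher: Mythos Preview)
Your proposal is correct and follows essentially the same route as the paper's proof: both use the disjoint decomposition $\sigma(M)=\sigma_{0}(M)\cup\sigma_{1}(M)$ from Theorem \ref{apxtmd}, the trivial bound $|\operatorname{Im}\zeta|\leq|\zeta|$, the triangle inequality on imaginary parts applied to $|\zeta-\mathrm{i}\gamma_{j}|\leq c$, and the strict gap $g>2c$ to separate the two ranges of $|\operatorname{Im}\zeta|$. Your presentation is slightly more explicit in spelling out each of the four inclusions via contradiction, while the paper bundles the reverse inclusions into a single ``follows immediately'' after establishing $\sigma_{1}(M)\subseteq\{|\operatorname{Im}\zeta|\geq g-c\}\subseteq\{|\zeta|>c\}$, but the underlying argument is identical.
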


\begin{proof}
By Theorem \ref{apxtmd} we know that%
\[
\sigma\left(  M\right)  =\sigma_{0}\left(  M\right)  \cup\sigma_{1}\left(
M\right)  ,\text{ \ \ }\sigma_{0}\left(  M\right)  \cap\sigma_{1}\left(
M\right)  =\emptyset\text{,}%
\]
where%
\begin{align*}
\sigma_{0}\left(  M\right)   &  =\left\{  \zeta\in\sigma\left(  M\right)
:\left\vert \zeta\right\vert \leq\left\Vert \operatorname{Re}M\right\Vert
\right\}  ,\text{ }\\
\sigma_{1}\left(  M\right)   &  =\left\{  \zeta\in\sigma\left(  M\right)
:\left\vert \zeta-\mathrm{i}\gamma_{j}\right\vert \leq\left\Vert
\operatorname{Re}M\right\Vert \text{ for some }j\right\}  ,
\end{align*}
and $\gamma_{j}$, $j=1,$\ldots, $m$ are the nonzero eigenvalues of
$\operatorname{Im}M$. \ In particular, these facts imply immediately that%
\[
\sigma_{1}\left(  M\right)  \supseteq\left\{  \zeta\in\sigma\left(  M\right)
:\left\vert \operatorname{Im}\zeta\right\vert >\left\Vert \operatorname{Re}%
M\right\Vert \right\}  ,\text{ }\sigma_{0}\left(  M\right)  \subseteq\left\{
\zeta\in\sigma\left(  M\right)  :\left\vert \operatorname{Im}\zeta\right\vert
\leq\left\Vert \operatorname{Re}M\right\Vert \right\}  .
\]
Now if $\zeta\in\sigma_{1}\left(  M\right)  $ then there exists a nonzero
eigenvalue $\gamma$ of $\operatorname{Im}M$ such that%
\[
\left\Vert \operatorname{Re}M\right\Vert \geq\left\vert \zeta-\mathrm{i}%
\gamma\right\vert =\left\vert \operatorname{Re}\zeta+\mathrm{i}\left(
\operatorname{Im}\zeta-\gamma\right)  \right\vert \geq\left\vert
\operatorname{Im}\zeta-\gamma\right\vert \geq\left\vert \gamma\right\vert
-\left\vert \operatorname{Im}\zeta\right\vert \geq\min_{1\leq j\leq
m}\left\vert \gamma_{j}\right\vert -\left\vert \operatorname{Im}%
\zeta\right\vert
\]
which by the hypothesis $\min_{1\leq j\leq m}\left\vert \gamma_{j}\right\vert
>2\left\Vert \operatorname{Re}M\right\Vert $ implies%
\[
\sigma_{1}\left(  M\right)  \subseteq\left\{  \zeta\in\sigma\left(  M\right)
:\left\vert \operatorname{Im}\zeta\right\vert \geq\min_{1\leq j\leq
m}\left\vert \gamma_{j}\right\vert -\left\Vert \operatorname{Re}M\right\Vert
\right\}  \subseteq\left\{  \zeta\in\sigma\left(  M\right)  :\left\vert
\zeta\right\vert >\left\Vert \operatorname{Re}M\right\Vert \right\}  \text{.}%
\]
The corollary follows immediately now from these facts.
\end{proof}

\section{Appendix: Energetics\label{apenerg}}

The term of \emph{energetics} refers to a fundamental set up for a system
evolution when two energies are defined for any system configuration, namely
the \emph{kinetic energy} and the \emph{potential energy}. Here is a concise
description of the energetics given by H. Poincar\'{e}, \cite[p.
115-116]{Poincare FS}:

\begin{quote}
\textquotedblleft The difficulties inherent in the classic mechanics have led
certain minds to prefer a new system they call \emph{energetics}.

Energetics took its rise as an outcome of the discovery of the principle of
the conservation of energy. Helmholtz gave it its final form.

It begins by defining two quantities which play the fundamental role in this
theory. They are \emph{kinetic energy, or vis viva}, and \emph{potential
energy}.

All the changes which bodies in nature can undergo are regulated by two
experimental laws:

1$^{\circ}$. The sum of kinetic energy and potential energy is constant. This
is the principle of the conservation of energy.

2$^{\circ}$. If a system of bodies is at $A$ at the time $t_{0}$ and at $B$ at
the time $t_{1}$, it always goes from the first situation to the second in
such a way that the mean value of the difference between the two sorts of
energy, in the interval of time which separates the two epochs $t_{0}$ and
$t_{1}$, may be as small as possible.

This is Hamilton's principle, which is one of the forms of the principle of
least action."
\end{quote}

In other words according to energetics the kinetic and potential energies
$\mathcal{T}$ and $\mathcal{V}$ that are related to the Lagrangian
$\mathcal{L}$ and the total energy $\mathcal{H}$ by the following relations%
\begin{equation}
\mathcal{L}=\mathcal{T}-\mathcal{V},\qquad\mathcal{H}=\mathcal{T}+\mathcal{V},
\label{lagham1}%
\end{equation}
implying%
\begin{equation}
\mathcal{T}=\frac{1}{2}\left(  \mathcal{L}+\mathcal{H}\right)  ,\qquad
\mathcal{V}=\frac{1}{2}\left(  \mathcal{H-L}\right)  . \label{lagham2}%
\end{equation}
We assume the total energy to be equal to the Hamiltonian, that is to be
defined by the Lagrangian through the Legendre transformation%
\begin{equation}
\mathcal{H}=\frac{\partial\mathcal{L}}{\partial\dot{Q}}\dot{Q}-\mathcal{L}.
\label{lagham3}%
\end{equation}
Then we arrive at the following uniquely defined expressions for the kinetic
and potential energies%
\begin{equation}
\mathcal{T}=\frac{1}{2}\frac{\partial\mathcal{L}}{\partial\dot{Q}}\dot
{Q},\qquad\mathcal{V}=\frac{1}{2}\frac{\partial\mathcal{L}}{\partial\dot{Q}%
}\dot{Q}-\mathcal{L}. \label{lagham4}%
\end{equation}
We are particularly interested in the case of the quadratic Lagrangian%
\begin{equation}
\mathcal{L}=\mathcal{L}\left(  Q,\dot{Q}\right)  =\frac{1}{2}\dot
{Q}^{\mathrm{T}}\alpha\dot{Q}+\dot{Q}^{\mathrm{T}}\theta Q-\frac{1}%
{2}Q^{\mathrm{T}}\eta Q, \label{lagham5}%
\end{equation}
where $\alpha$ and $\eta$ are real symmetric matrices which are positive
definite and positive semidefinite, respectively, and $\theta$ is a real
skew-symmetric matrix, that is%
\begin{equation}
\alpha^{\mathrm{T}}=\alpha>0,\qquad\eta^{\mathrm{T}}=\eta\geq0,\qquad
\theta^{\mathrm{T}}=-\theta, \label{lagham6}%
\end{equation}
Then according to (\ref{lagham3}) the total energy $\mathcal{H}$ the takes the
form%
\begin{equation}
\mathcal{H}=\frac{1}{2}\dot{Q}^{\mathrm{T}}\alpha\dot{Q}+\frac{1}%
{2}Q^{\mathrm{T}}\eta Q\geq0. \label{lagham7}%
\end{equation}
Consequently, as it follows from (\ref{lagham2}) the kinetic and the potential
energies are defined by%
\begin{equation}
\mathcal{T}=\frac{1}{2}\dot{Q}^{\mathrm{T}}\alpha\dot{Q}+\frac{1}{2}\dot
{Q}^{\mathrm{T}}\theta Q,\quad\mathcal{V}=\frac{1}{2}Q^{\mathrm{T}}\eta
Q-\frac{1}{2}\dot{Q}^{\mathrm{T}}\theta Q. \label{lagham8}%
\end{equation}
The Euler-Lagrange equations corresponding to the quadratic Lagrangian
(\ref{lagham5}) are
\begin{equation}
\alpha\ddot{Q}+2\theta\dot{Q}+\eta Q=0. \label{lagham9}%
\end{equation}
Multiplying the above equation by $Q^{\mathrm{T}}$ from the left yields%
\begin{equation}
Q^{\mathrm{T}}\alpha\ddot{Q}+2Q^{\mathrm{T}}\theta\dot{Q}+Q^{\mathrm{T}}\eta
Q=0, \label{lagham10}%
\end{equation}
and this equation can be recast recast as%
\begin{equation}
\partial_{t}\left(  Q^{\mathrm{T}}\alpha\dot{Q}\right)  -\dot{Q}^{\mathrm{T}%
}\alpha\dot{Q}-2\dot{Q}^{\mathrm{T}}\theta Q+Q^{\mathrm{T}}\eta Q=0.
\label{lagham11}%
\end{equation}
Combining the above equation with (\ref{lagham2}) and (\ref{lagham5}) we
arrive at the following important identity that holds for any solution $Q$ to
the Euler-Lagrange equation (\ref{lagham9})%
\begin{equation}
\mathcal{L}=\mathcal{T}-\mathcal{V}=\frac{1}{2}\partial_{t}\mathrm{G},\text{
\ \ }\mathrm{G}=Q^{\mathrm{T}}\alpha\dot{Q}, \label{lagham12}%
\end{equation}
where the term $\mathrm{G}$ is often referred to as the \emph{virial}.

\section{Appendix: Virial theorem\label{ApdxVirThm}}

Let us introduce now the time average $\left\langle f\right\rangle $ for a
time dependent quantity $f=f\left(  t\right)  $ defined by%
\begin{equation}
\left\langle f\right\rangle =\lim_{T\rightarrow\infty}\frac{1}{T}%
{\displaystyle\int_{0}^{T}}
f\left(  t\right)  \,\mathrm{d}t, \label{ftaver1}%
\end{equation}
and observe that if $f\left(  t\right)  $ is bounded for $-\infty<t<\infty$
then
\begin{equation}
\left\langle \partial_{t}f\right\rangle =\lim_{T\rightarrow\infty}\frac{1}%
{T}\left(  f\left(  T\right)  -f\left(  0\right)  \right)  =0. \label{ftaver2}%
\end{equation}
Notice now that in the case of interest when $\alpha>0,\eta\geq0$ and
consequently $\mathcal{H}\geq0$, all eigenfrequencies of the system are real
and hence all solutions to the Euler-Lagrange equation (\ref{lagham10}) are
bounded. Consequently, applying the relation (\ref{ftaver2}), using the
identity (\ref{lagham12}) we obtain the virial theorem:%
\begin{equation}
\left\langle \mathcal{L}\right\rangle =\left\langle \mathcal{T}\right\rangle
-\left\langle \mathcal{V}\right\rangle =\frac{1}{2}\left\langle \partial
_{t}\mathrm{G}\right\rangle =0, \label{ftaver3}%
\end{equation}

\textbf{Acknowledgment and Disclaimer:} The research of A. Figotin was
supported through Dr. A. Nachman of the U.S. Air Force Office of Scientific
Research (AFOSR), under grant number FA9550-11-1-0163. Both authors are
indebted to the referee for his valuable comments on our original paper.


\begin{thebibliography}{99999999999}                                                                                      %


\bibitem[Aitken]{Aitken}A. Aitken, \textsl{Determinants and Matrices}, 3rd
Ed., Oliver and Boyd, 1944.

\bibitem[Arnold]{Arnold}V. I. Arnold, \textsl{Mathematical methods of
classical mechanics}, Translated from the Russian by K. Vogtmann and A.
Weinstein, Second edition, Graduate Texts in Mathematics, 60, Springer-Verlag,
New York, 1989.

\bibitem[BarLan92]{BarLan92}L. Barkwell and P. Lancaster, \textsl{Overdamped
and gyroscopic vibrating systems}, Trans. ASME J. Appl. Mech. 59, no. 1,
176--181, (1992).

\bibitem[Bau85]{Bau85}H. Baumgartel, \textsl{Analytic Perturbation Theory for
Matrices and Operators}, Birkh\"{a}user Verlag, Basel 1985.

\bibitem[Duff55]{Duff55}R. J. Duffin, \textsl{A minimax theory for overdamped
networks}, J. Rational Mech. Anal. \textbf{4}, 221--233, (1955).

\bibitem[FigSch2]{FigSch2}A. Figotin and J. H. Schenker, \textsl{Hamiltonian
structure for dispersive and dissipative dynamical systems}, J. Stat. Phys.,
\textbf{128}, 969-1056, (2007).

\bibitem[FigVit1]{FigVit1}A. Figotin and I. Vitebskiy, \textsl{Spectra of
Periodic Nonreciprocal Electric Circuits}, SIAM, \textbf{61}, 6, 2008--2035, (2001).

\bibitem[FigVit8]{FigVit8}A. Figotin and I. Vitebskiy, \textsl{Absorption
suppression in photonic crystals}, Phys. Rev. B, \textbf{77}, 104421 (2008).

\bibitem[Gant]{Gantmacher}F. Gantmacher, \textsl{Lectures in Analytical
Mechanics}, Mir, 1975.

\bibitem[Gold]{Goldstein}H. Goldstein, C. Poole, and J. Safko,
\textsl{Classical Mechanics}, 3rd Ed., Addison-Wesley, 2001.

\bibitem[FigWel1]{FigWel1}A. Figotin and A. Welters, \textsl{Dissipative
properties of systems composed of high-loss and lossless components}, J. Math.
Phys. \textbf{53}, 123508 (2012).

\bibitem[Kato]{Kato}T. Kato, \textsl{Perturbation theory for linear
operators}, reprint of the 1980 edition, classics in mathematics,
Springer-Verlag, Berlin, 1995.

\bibitem[Lan66]{Lan66}P. Lancaster, \textsl{Lambda-matrices and vibrating
systems}, Dover Publications, Inc., Mineola, NY, 2002.

\bibitem[Mar88]{Mar88}A. S. Markus, \textsl{Introduction to the spectral
theory of polynomial operator pencils}, Translations of Mathematical
Monographs, 71, American Mathematical Society, Providence, RI, 1988.

\bibitem[Pain]{Pain}J. Pain, \textsl{The Physics of Vibrations and Waves}, 6th
Ed., Wiley, 2004.

\bibitem[Pars]{Pars}L. Pars, \textsl{Treatise on Analytical Mechanics},
Heinemann, 1965.

\bibitem[Poincare FS]{Poincare FS}H. Poincar\'{e}, \textsl{The Foundations of
Science}, authorized translation to English by B. Halsted, The Science Press, 1913.

\bibitem[ReSi1]{ReSi1}M. Reed and B. Simon, \textsl{Methods of modern
mathematical physics, Vol. 1: Functional analysis}, Second edition, Academic
Press, Inc., New York, 1980.

\bibitem[Yak]{Yak}V. A. Yakubovich, \textsl{The work of M. G. Kre\u{\i}n in
the theory of linear periodic Hamiltonian systems}, Ukrainian Math. J. 46, no.
1-2, 133--148 (1994).

\bibitem[YakSta]{YakSta}V. A. Yakubovich and V. M. Starzhinskii,
\textsl{Linear differential equations with periodic coefficients}, Vol. 1,
John Wiley \& Sons, 1975.

\bibitem[Zhang]{Zhang}F. Zhang, \textsl{Schur Complement and Its
Applications}, Springer, 2005.
\end{thebibliography}
\end{document}